\newtheorem{theorem}{Theorem}[section]
\newtheorem{corollary}[theorem]{Corollary}
\newtheorem{prop}[theorem]{Proposition}
\numberwithin{equation}{section}
\newcommand{\E}{{\mathbb E}}
\newcommand{\KK}{{\mathcal K}}
\newcommand{\cL}{\mathcal L}
\newcommand{\M}{{\mathcal M}}
\newcommand{\R}{{\mathbb R}}
\newcommand{\Y}{\mathcal{Y}}
\newcommand{\Z}{{\mathcal Z}}
\newcommand{\bx}{{\bf x}}
\newcommand{\by}{{\bf y}}
\newcommand{\eps}{\epsilon}
\begin{document}
\setlength{\baselineskip}{10pt}
\title{Mean Field Limits for Interacting Diffusions in a Two-Scale Potential}
\author{S.~ N. Gomes and G.~A. Pavliotis \\   Department of Mathematics\\
    Imperial College London \\
        London SW7 2AZ, UK}
\maketitle

\begin{abstract}
In this paper we study the combined mean field and homogenization limits for a system of weakly interacting diffusions moving in a two-scale, locally periodic  confining potential, of the form considered in~\cite{DuncanPavliotis2016}. We show that, although the mean field and homogenization limits commute for finite times, they do not, in general, commute in the long time limit. In particular, the bifurcation diagrams for the stationary states can be different depending on the order with which we take the two limits. Furthermore, we construct the bifurcation diagram for the stationary McKean-Vlasov equation in a two-scale potential, before passing to the homogenization limit, and we analyze the effect of the multiple local minima in the confining potential on the number and the stability of stationary solutions. 
\end{abstract}

\section{Introduction}
\label{sec:intro}

Systems of interacting particles, possibly subject to thermal noise, arise in several applications, ranging from standard ones such as plasma physics and galactic 
dynamics~\cite{BinneyTremaine2008} to dynamical density functional theory~\cite{GodPavlal-2012, GodPavlKall11}, mathematical biology~\cite{Farkhooi2017,Lucon2016} and 
even in mathematical models in the social sciences~\cite{GPY2017,Motsch2014}.  As examples of models of interacting ``agents" in a noisy environment that appear in the social 
sciences -- which has been the main motivation for this work -- we mention the modeling of cooperative behavior~\cite{Dawson1983}, risk management~\cite{GPY2012} and 
opinion formation~\cite{GPY2017}. Another recent application that has motivated this work is that of global optimization~\cite{Pinnau_al2017}. 

In this work we will consider a system of interacting particles in one dimension, moving in a confining potential, that interact through their mean, i.e. a Curie-Weiss type interaction~\cite{Dawson1983}:
\begin{equation}\label{e:inter_diff}
d X^i_t = \left(-V'(X_t^i) - \theta \left( X_t^i  - \frac{1}{N} \sum_{j=1}^N X_t^j \right) \right) \, dt + \sqrt{2 \beta^{-1}} \, dB_t^i.
\end{equation}
Here ${\bf X}_t:= \{  X_t^i \}_{i=1}^N$ denotes the position of the interacting agents, $V(\cdot)$ a confining potential, $\theta$ the strength of the interaction between 
the agents, $\{  B_t^i \}_{i=1}^N$ standard independent one-dimensional Brownian motions and $\beta$ denotes the inverse temperature. The total energy (Hamiltonian) 
of the system of interacting diffusions~\eqref{e:inter_diff} is
\begin{equation}\label{e:energy}
W_N({\bf X}) = \sum_{\ell =1}^N V(X^{\ell}) + \frac{\theta}{4 N} \sum_{n=1}^N \sum_{\ell =1}^N (X^n - X^{\ell})^2.
\end{equation}
Passing rigorously to the mean field limit as $N\rightarrow\infty$ using, for example, martingale techniques~\cite{Dawson1983,Gartner1988,Oelschlager1984}, 
and under appropriate assumptions on the confining potential and on the initial conditions (propagation of chaos), is a well-studied problem. 
Formally, using the law of large numbers we deduce that 
$$
\lim_{N \rightarrow +\infty} \frac{1}{N} \sum_{j=1}^N X^j_t = \E X_t,
$$ 
where the expectation is taken with respect to the ``1-particle" distribution function $p(x,t)$.\footnote{This corresponds to the mean field ansatz for the $N-$particle 
distribution function, $p_N(x_1, \dots x_N,t ) = \prod_{n=1}^N p(x_n,t)$ and passing to the limit as $N\rightarrow \infty$. See~\cite{MartzelAslangul2001,balescu97}.} 
Passing, formally, to the limit as $N\rightarrow \infty$ in the stochastic differential equation~\eqref{e:inter_diff} we obtain the McKean SDE
\begin{equation}\label{e:mckean-sde}
d X_t = -V'(X_t) \, dt - \theta (X_t - \E X_t) \, dt + \sqrt{2 \beta^{-1}} \, dB_t.
\end{equation}
The Fokker-Planck equation corresponding to this SDE is the McKean-Vlasov equation~\cite{frank04,McKean1966, McKean1967}
\begin{equation}\label{e:mckean-vlasov}
\frac{\partial p}{\partial t} = \frac{\partial}{\partial x} \left(V'(x) p + \theta \left(x - \int_{\R} x p(x,t) \, dx \right) p + \beta^{-1} \frac{\partial p}{\partial x} \right).
\end{equation} 
The McKean-Vlasov equation is a nonlinear, nonlocal Fokker-Planck type equation that we will sometimes refer to as the McKean-Vlasov-Fokker-Planck equation. It is a gradient flow, with respect to the Wasserstein metric, for the free energy functional
\begin{equation}\label{e:free-energy}
\mathcal{F}[\rho] = \beta^{-1} \int \rho \ln \rho \, dx + \int V \rho \, dx + \frac{\theta}{2} \int \int F(x-y) \rho(x) \rho(y) \, dx dy,
\end{equation}
with $F(x) = \frac{1}{2} x^2$. Background material on the McKean-Vlasov equation can be found in, e.g.~\cite{frank04, CMV2006,Villani2003}.

The finite dimensional dynamics~\eqref{e:inter_diff} has a unique invariant measure. 
Indeed, the process ${\bf X}_t$ defined in~\eqref{e:inter_diff} with $V$ being a confining potential is always 
ergodic, and in fact reversible, with respect to the Gibbs measure~\cite[Ch. 4]{Pavl2014}, 
\begin{equation}\label{e:gibbs-N}
\mu_N (dx) = \frac{1}{Z_N} e^{-\beta W_N(x^1, \dots x^N)} \, dx^1 \dots dx^N, \quad Z_N = \int_{\R^N} e^{-\beta W_N(x^1, \dots x^N)} \, dx^1 \dots dx^N
\end{equation}
where $W_N(\cdot)$ is given by~\eqref{e:energy}.

On the other hand, the McKean dynamics~\eqref{e:mckean-sde} and the corresponding McKean-Vlasov-Fokker-Planck equation~\eqref{e:mckean-vlasov} can have more than one invariant measures, for nonconvex confining potentials and at sufficiently low temperatures~\cite{Dawson1983,Tamura1984}. This is not surprising, since the McKean-Vlasov equation is a nonlinear, nonlocal PDE and the standard uniqueness of solutions for the linear (stationary) Fokker-Planck equation does not apply~\cite{BKRS2015}. 

The density of the invariant measure(s) for the McKean dynamics~\eqref{e:mckean-sde} satisfies the stationary nonlinear Fokker-Planck equation
\begin{equation}\label{e:mckean-station}
\frac{\partial}{\partial x} \left(V'(x) p_{\infty} + \theta \left(x - \int_{\R} x p_{\infty}(x) \, dx \right) p_{\infty} + \beta^{-1} \frac{\partial p_{\infty}}{\partial x} \right) =0.
\end{equation}
Based on earlier work~\cite{Dawson1983,Tamura1984}, it is by now well understood that the number of invariant measures, i.e. the number of solutions 
to~\eqref{e:mckean-station}, is related to the number of metastable states (local minima) of  the confining potential -- see~\cite{Tugaut2014} and the references therein.  

For the Curie-Weiss (i.e. quadratic) interaction potential a one-parameter family of solutions to the stationary McKean-Vlasov equation~\eqref{e:mckean-station} can be obtained:
\begin{subequations}\label{e:inv-meas-mckean}
\begin{eqnarray}
p_{\infty}(x ; \theta, \beta, m) &=& \frac{1}{Z(\theta, \beta ; m )} e^{- \beta \left( V(x) + \theta \left(\frac{1}{2}x^2 - x m \right) \right)}, 
\\  Z(\theta, \beta ; m ) &=& \int_{\R} e^{- \beta \left( V(x) + \theta \left(\frac{1}{2}x^2 - x m \right) \right)} \, dx.
\end{eqnarray}
\end{subequations}
This one-parameter family of probability densities is subject, of course, to the constraint that it provides us with the correct formula for the first moment:
\begin{equation}\label{e:self-consist}
m = \int_{\R} x p_{\infty}(x ; \theta, \beta, m) \, dx =: R(m; \theta, \beta).
\end{equation}
We will refer to this as the {\bf selfconsistency} equation and it will be the main object of study of this paper. Once a solution to~\eqref{e:self-consist} has been obtained, 
substitution back into~\eqref{e:inv-meas-mckean} yields a formula for the invariant density $p_{\infty}(x ; \theta, \beta, m)$. 

Clearly, the number of invariant measures of the McKean-Vlasov dynamics is determined by the number of solutions to the selfconsistency equation~\eqref{e:self-consist}. It is well known and not difficult to prove that for symmetric nonconvex confining potentials a unique invariant measure exists at sufficiently high temperatures, whereas more than one invariant measures exist below a critical temperature $\beta^{-1}_c$~\cite[Thm. 3.3.2]{Dawson1983},~\cite[Thm. 4.1, Thm. 4.2]{Tamura1984}, see also~\cite{shiino1987}. In particular, for symmetric potentials, $m = 0$ is always a solution to the selfconsistency equation~\eqref{e:self-consist}. Above $\beta_c$, i.e. at sufficiently low temperatures, the zero solution loses stability and a new branch bifurcates from the $m = 0$ solution~\cite{shiino1987}. This second order phase transition is similar to the one familiar from the theory of magnetization and the study of the Ising model. In Figure~\ref{fig:Shiino_intro} we present the solution to the selfconsistency equation and the bifurcation diagram for stationary solutions of the McKean-Vlasov equation for the standard bistable -- Landau -- potential  $V(x) = \frac{x^4}{4}-\frac{x^2}{2}$.
\begin{figure}[h!]
\begin{subfigure}{0.5\textwidth}
\includegraphics[width=\textwidth]{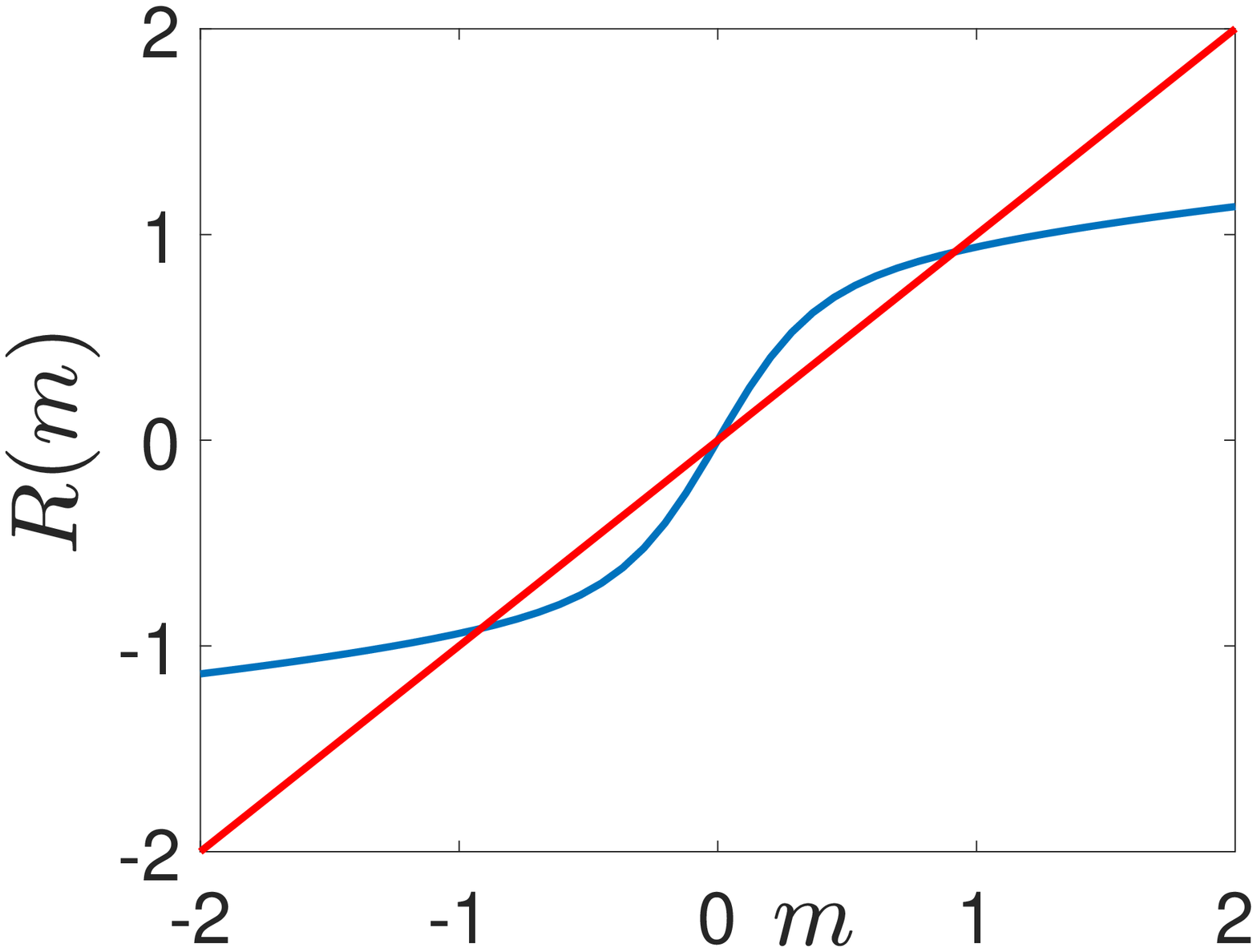}  
\caption{$R(m;\theta,\beta)=m$}  \label{fig:Shiino}
\end{subfigure}
\begin{subfigure}{0.5\textwidth}
\includegraphics[width=\textwidth]{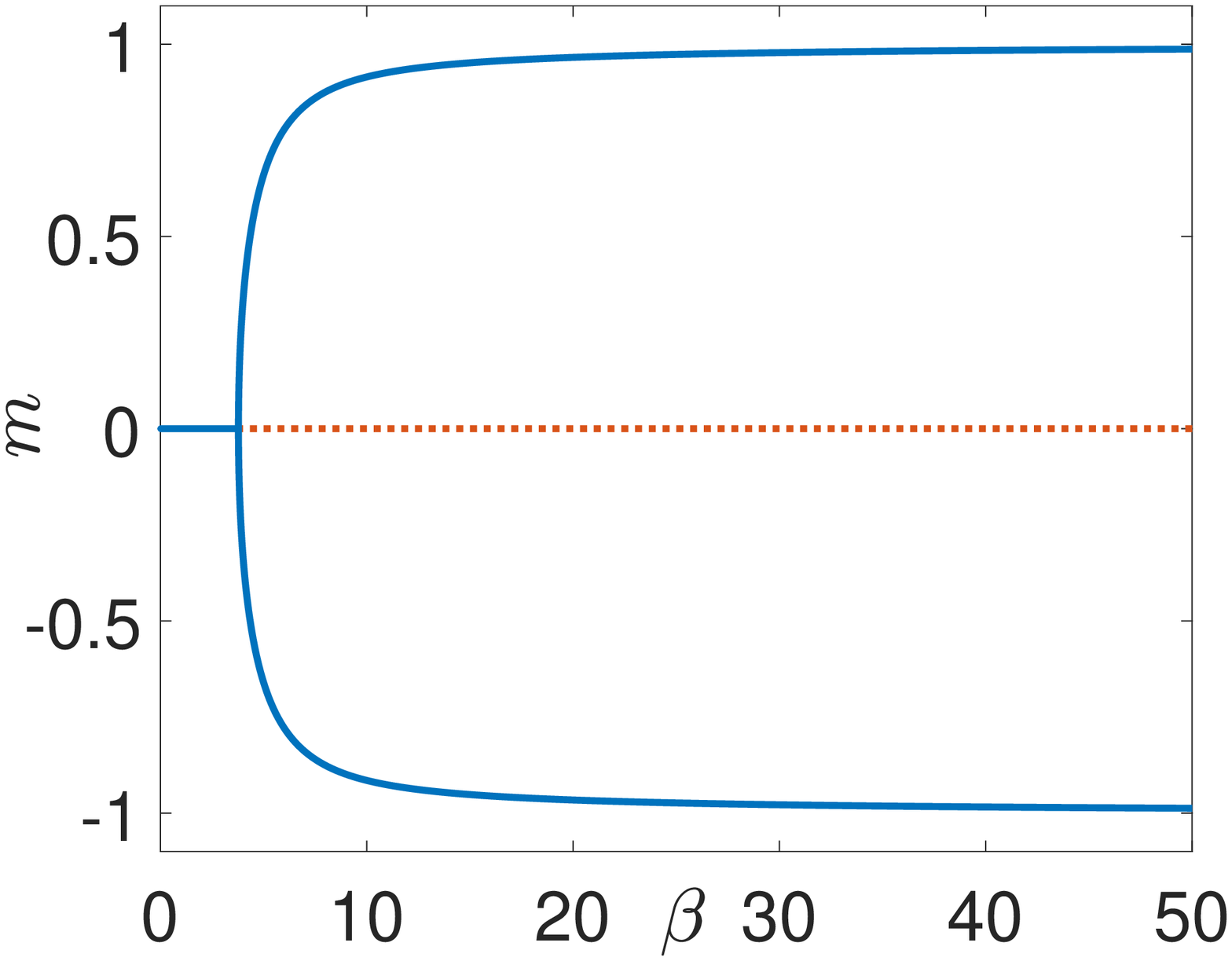}
\caption{Bifurcation diagram}\label{fig:bif_bistable_intro}
\end{subfigure}
    \caption{\eqref{fig:Shiino} Plot of $R(m;\theta,\beta)$ and of the straight line $y =x$ for $\theta = 0.5$, $\beta = 10$, and \eqref{fig:bif_bistable_intro} bifurcation diagram of $m$ as a function of $\beta$ for $\theta = 0.5$ for the bistable potential $V(x) = \frac{x^4}{4}-\frac{x^2}{2}$ and interaction potential $F(x) = \frac{x^2}{2}$.}\label{fig:Shiino_intro}
\end{figure}

To compute the critical temperature we need to solve the equation obtained by differentiating the selfconsistency equation with respect to the order parameter $m$ at $m=0$: 
\begin{equation}\label{e:critical_temp}
\operatorname{Var}_{p_{\infty}}(x)\Big|_{m=0} := \int x^2 p_\infty(x;m=0, \beta, \theta) \ dx - \left( \int x p_\infty(x;m=0, \beta, \theta) \ dx )\right)^2 = \frac{1}{\beta\theta}.
\end{equation} 

The main purpose of this paper is to study the dynamics and, in particular, bifurcations and phase transitions for a system of interacting diffusions moving in a rugged energy landscape, 
coupled through the Curie-Weiss interaction. We are particularly interested in understanding the combined effect of the presence of several local minima (metastable states) 
in the confining potential and of the passage to the mean field limit. 
We will study the problem for a system of interacting diffusions of the form~\eqref{e:inter_diff} moving in a two-scale, locally periodic confining potential 
\begin{equation}\label{e:pot-multisc}
V^{\eps}(\bx) = V\left(\bx, \frac{\bx}{\eps} \right),
\end{equation}
where $V: \, (x,y)\in \R\times\Y\rightarrow \R$, $\Y$ denotes a periodic box in $\R^d$, $\Y = [0,L]^d$:
\begin{equation}\label{e:Y_defn}
V(x,y+kLe_i) = V(x,y), \quad k\in \mathbb{Z}, \quad i \in \left\{1,\dots,d\right\},
\end{equation}
and $\left\{e_1,\dots,e_d\right\}$ is the canonical basis of $\R^d$. Throughout this paper, $L = 2\pi$. The particles $\{ X_t^i, \,  i=1,\dots,N \}$ are 
interacting through the Curie-Weiss interaction, $F(x) = \frac{x^2}{2}$. 
This class of potentials provides us with a natural testbed for testing several techniques and methodologies for the study of multiscale diffusions such as maximum likelihood 
estimation~\cite{PapPavSt08,PavlSt06}, particle filters and filtering~\cite{papav-2007,Imkeller2012}, importance sampling and large deviations~\cite{spiliop2013} and 
optimal  control~\cite{HLZP2014}.

Of particular relevance to us is the multiscale analysis presented in~\cite{Pavliotis_al2016, DuncanPavliotis2016}.\footnote{In fact, in these papers a potential with $N$ microscales 
and one macroscopic scale of the form $V^{\eps}(\bx) = V\left(\bx, \frac{\bx}{\eps}, \frac{\bx}{\eps^2}, \dots \frac{\bx}{\eps^N} \right)$, where $V$ is periodic in all the 
microvariables is studied. For the purposes of this work it is sufficient to consider a potential with two characteristic, widely separated, length scales. } 
In these works, the homogenized SDE for a Brownian particle moving in a two-scale potential in $\R^d$, valid in the limit of infinite scale separation $\eps \rightarrow 0$ was obtained and the effect of the multiscale structure on noise-induced 
transitions was investigated. It was shown, in particular, that the homogenized SDE is characterized by multiplicative noise. 
For a single Brownian particle in $\R^d$ moving in a two-scale potential~\eqref{e:pot-multisc} (or, equivalently, for a system of $d$ non-interacting Brownian particles in a $2$--scale potential) the homogenized equation reads
\begin{equation}\label{e:homog_intro}
dX_t = -\mathcal{M}(X_t) \nabla \Psi(X_t) \, dt + (\nabla \cdot \mathcal{M})(X_t) \, dt + \sqrt{2 \mathcal{M}(X_t)} \, dB_t,
\end{equation}
where $\mathcal{M}(\cdot)$ denotes the diffusion tensor and $\Psi (\cdot)$ the free energy--see Section~\ref{sec:epsilon_first}. 
It is important to note that, in addition to the presence of multiplicative noise, the potential energy driving the dynamics is not simply the average of the two-scale potential over 
its period, but, rather, the free energy  $\Psi = - \beta^{-1} \ln \left(\int e^{-\beta V(x,y)} \, dy \right)$. Since the dynamics~\eqref{e:homog_intro} is finite dimensional, 
no phase transitions can occur. In fact, the homogenized dynamics is reversible with respect to the thermodynamically consistent Gibbs measure, 
see the discussion in Section~\ref{sec:epsilon_first}. It is well known, however, that multiplicative noise can lead to noise-induced transitions, i.e. to changes in the topological 
structure of the invariant measure~\cite{HorsLef84},~\cite[Sec. 5.4]{Pavl2014}. Such phenomena, including multiscale-induced hysteresis effects, for a one-dimensional Brownian particle moving in a multiscale potential, were studied in detail in~\cite{Pavliotis_al2016}. 

Our goal is to study mean field limits for multiscale interacting diffusions of the form
\begin{equation}
\label{eq:system_of_sdes}
d X^{\eps,i}_t = -\nabla V^\epsilon(X_t^{\eps,i})\,dt - \frac{\theta}{N}\sum_{j=1}^{N} \nabla F(X_t^{\eps,i} - X_t^{\eps,j})\,dt + \sqrt{2\beta^{-1}}dB_t^i,  
\end{equation} 
where the two-scale potential is given by~\eqref{e:pot-multisc}. The interaction potential $F(\cdot)$ is assumed to be a smooth even function, with $F(0) = 0$ and $F'(0)= 0$. 
All of the numerical experiments that we will present will be for the Curie-Weiss quadratic interaction potential $F(x) = \frac{1}{2}x^2$. Although we will mainly consider the case of interacting particles moving in one dimension, (at least parts of) the multiscale analysis that we will present is also valid in arbitrary dimensions.

The main issues that we address in this work are:

\begin{enumerate}

\item What is the effect of the presence of (infinitely) many local minima in the locally periodic confining potential on the bifurcation diagram? In other words, how do the bifurcation diagrams for $\eps \ll 1$ but finite and $\eps \rightarrow 0$ differ?

\item Do the homogenization and mean field limits commute, in particular when also passing to the long time limit $T \rightarrow+\infty$? In other words: are the bifurcations diagrams corresponding to the $N\rightarrow\infty, \, T\rightarrow\infty, \, \epsilon\rightarrow 0$ and $\epsilon\rightarrow 0, \, N\rightarrow\infty, \, T\rightarrow\infty$ limits the same? 

\end{enumerate}

Two typical examples of the type of locally-periodic potentials that we will study in this paper are shown in Figure~\ref{fig:pot}:
\begin{figure}[t!]
\begin{subfigure}{0.5\textwidth}
\includegraphics[width=0.95\textwidth]{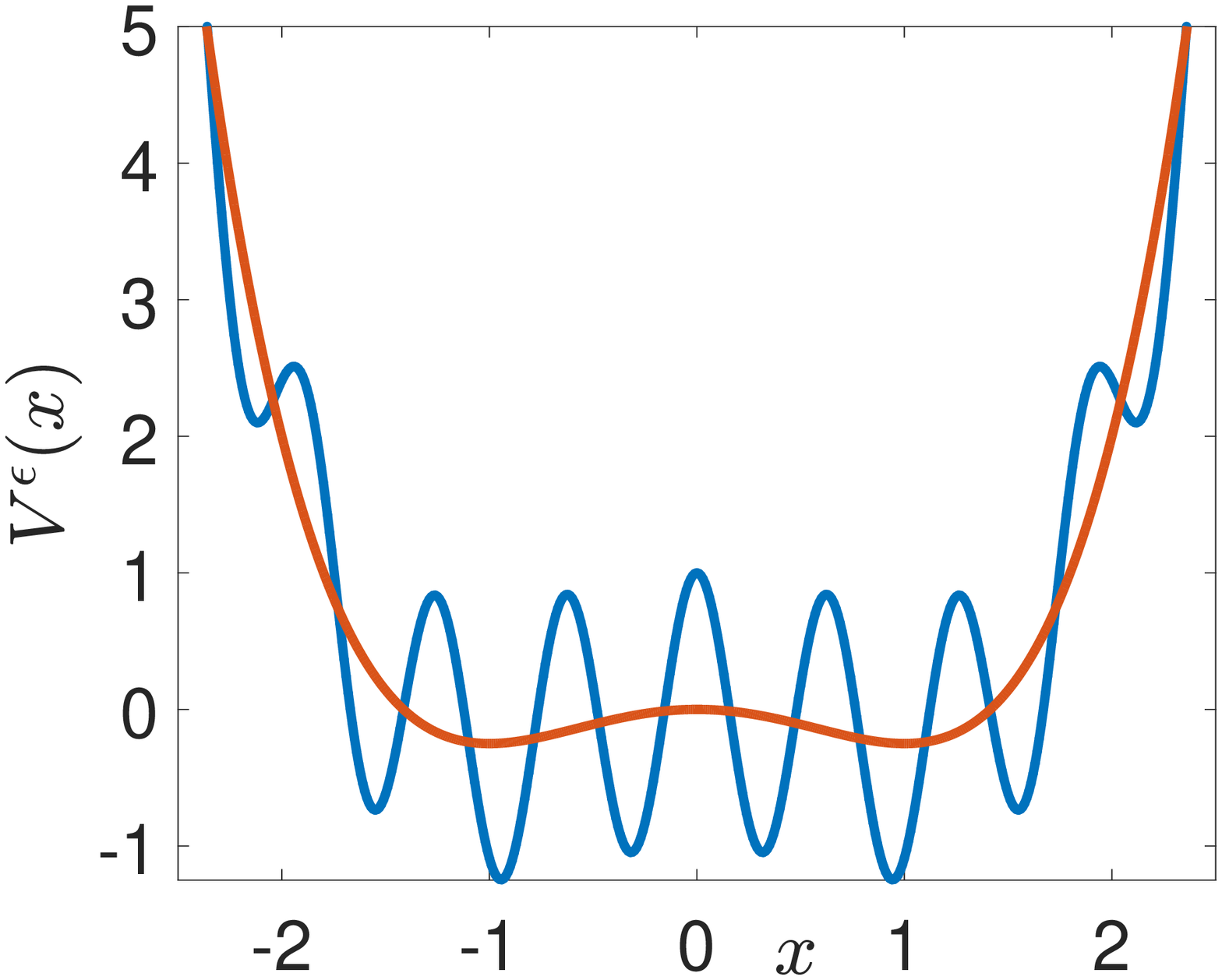}
\caption{Separable fluctuations}\label{fig:sep_pot}
\end{subfigure}
\begin{subfigure}{0.5\textwidth}
\includegraphics[width=0.95\textwidth]{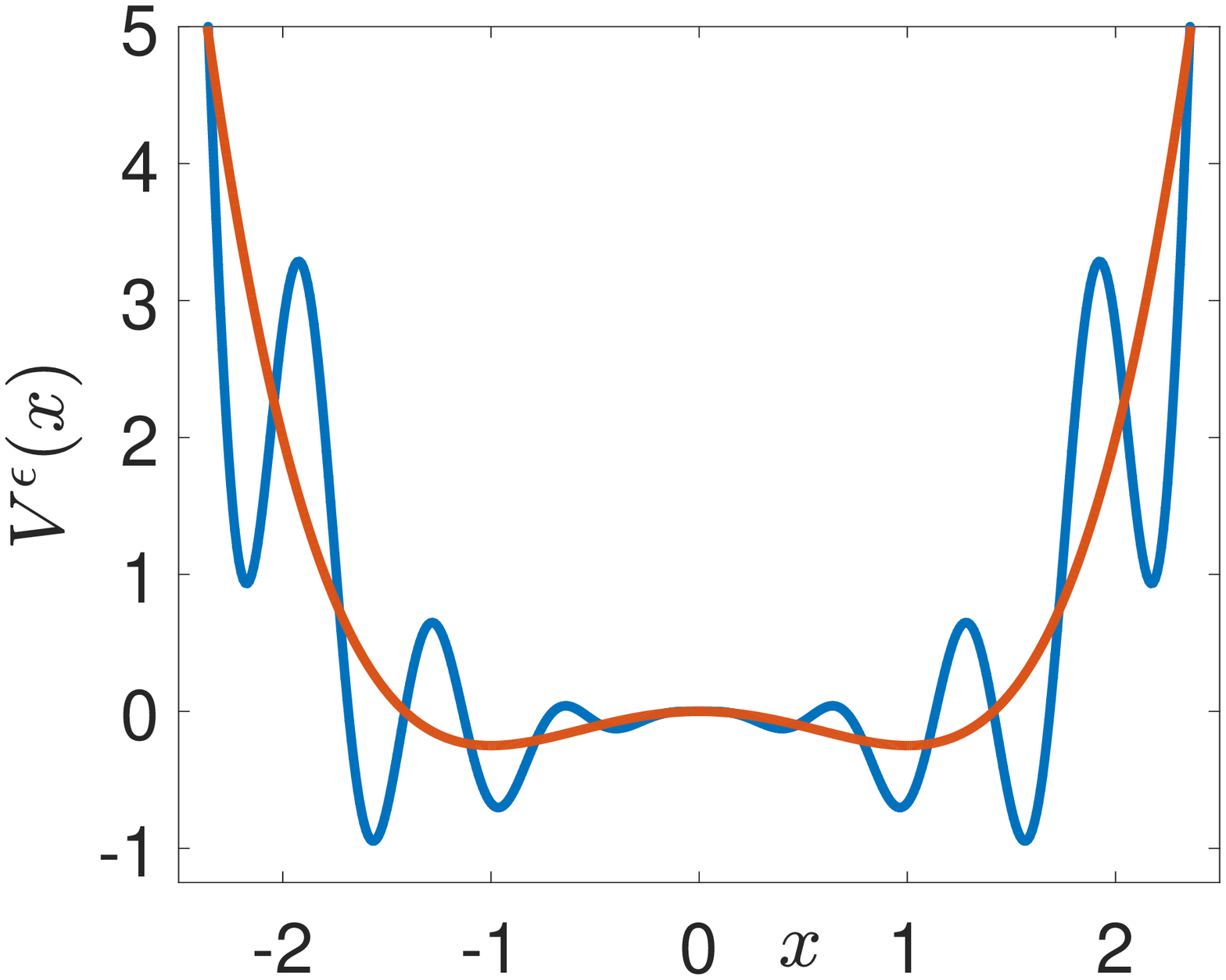}
\caption{Nonseparable fluctuations}\label{fig:non_sep_pot}
\end{subfigure}
\caption{Bistable potential with (left) separable  and (right) nonseparable  fluctuations.}\label{fig:pot}
\end{figure}
\begin{equation}
V^{\eps}(x) = \frac{x^4}{4} - \frac{x^2}{2} + \delta \cos \left(\frac{x}{\eps} \right) \quad \mbox{and} \quad V^{\eps}(x) = \frac{x^4}{4} - \left( 1 - \delta \cos \left(\frac{x}{\eps} \right) \right) \frac{x^2}{2}.
\end{equation}
It should be clear from these two figures that the homogenization and mean field limits, when also combined with the long time limit, do not necessarily commute. 
First, the homogenization process tends to smooth out local minima and to even ``convexify" the confining potential -- think of a quadratic potential perturbed by 
fast periodic fluctuations. This implies, in particular, that even though many additional stationary solutions, i.e. branches in the bifurcation diagram may appear for all finite values of $\epsilon$, most, if not all, of them may not be present in the bifurcation diagram for the homogenized dynamics. Furthermore, multiplicative/nonseparable fluctuations of the type presented in Figure~\ref{fig:pot}(b) tend to flatten the potential around $x=0$. As we will see in Section~\ref{sec:numerics}, this phenomenon is very much related to the lack of commutativity of the limits $N\rightarrow\infty, \, T\rightarrow\infty, \, \epsilon\rightarrow 0$ and $\epsilon\rightarrow 0, \, N\rightarrow\infty, \, T\rightarrow\infty$.

We will study these problems using a combination of formal multiscale calculations, (some) rigorous analysis and extensive numerical simulations. There are many technical issues that we do not address, such as the rigorous homogenization study of the McKean-Vlasov equation and the rigorous study of bifurcations in the presence of infinitely many local minima. We will address these in future work.

The rest of the paper is organized as follows. In Section~\ref{sec:epsilon_first} we study the mean field limit for a system of homogenized interacting diffusions, i.e. the first $\epsilon\rightarrow 0$, then $N\rightarrow\infty$ limit. In Section~\ref{sec:N_first} we study the homogenization problem for the McKean-Vlasov equation in a two-scale potential. In Section~\ref{sec:numerics} we present extensive numerical simulations. Section~\ref{sec:conclusions} is reserved for conclusions. 

%
%
\section{Mean field limit of the homogenized interacting diffusions: first $\epsilon\rightarrow 0$, then $N\rightarrow\infty$}\label{sec:epsilon_first}
In this section we consider the one dimensional version of the system of SDEs~\eqref{eq:system_of_sdes}. We consider the first homogenization, then mean field order of limits. The homogenization theorem for a system of finite dimensional interacting diffusions moving in a two-scale confining potential is presented in~\cite{DuncanPavliotis2016}. The mean field limit of the homogenized SDE system can be obtained by using the results of~\cite{Gartner1988,Oelschlager1984}.
%
%
\subsection{Homogenization for finite system of interacting diffusions in a two-scale potential}
We consider the system of interacting diffusions
\begin{equation}
\label{eq:system_of_sdes_in_1D}
dX_t^i = -\partial_x V^\epsilon(X_t^i)\,dt - \frac{\theta}{N}\sum_{j=1}^{N} \partial_x F(X_t^i - X_t^j)\,dt + \sqrt{2\beta^{-1}}dB_t^i,
\end{equation} 
where $F$ is a smooth even function, with $F(0) = 0$ and $F'(0)= 0$ and $V^\epsilon$ is a smooth locally periodic potential of the form~\eqref{e:pot-multisc} or~\eqref{eq:multiscale_potential}. We introduce the notation $x_t =  (X_t^1, \dots, X_t^N)$, so that we have
\begin{equation}\label{e:System_1eq}
dx_t = -\nabla V^\epsilon(x_t) \, dt - \theta \nabla F(x_t-\bar{x}_t) \, dt +\sqrt{2\beta^{-1}}dB_t,
\end{equation}
where $\bar{x}_t = \frac{1}{N}\sum_{i=1}^N X_t^i$ and $B_t$ is a standard Brownian motion in $\R^N$. This equation is of the same form as \cite[Eqn.(1)]{DuncanPavliotis2016} and \cite[Eqn.(1)]{Pavliotis_al2016}, with $V^\epsilon(X_t^\epsilon)$ replaced by $V^\epsilon(x_t)+\theta F(x_t-\bar{x}_t)$.

Since $F$ does not depend on the fast scale, we can proceed as in \cite{DuncanPavliotis2016} and obtain the homogenized SDE:
\begin{equation}
\label{eq:homogenized_SDE}
dx_t = -\left[\M(x_t)\nabla\Psi_N(x_t) -  \beta^{-1}\nabla\cdot\M(x_t)\right] \, dt + \sqrt{2\beta^{-1}\M(x_t)}dB_t,
\end{equation} 
 where
\begin{equation}
\label{eq:Free_energy}
\Psi_N (\bx) = -\beta^{-1}\ln \Z_N(\bx),
\end{equation}
for
\begin{equation}
\label{eq:partition_function}
\Z_N(\bx) = \int_{\Y} e^{-\beta W_N(\bx,\frac{\bx}{\eps})} \ dy,
\end{equation}
where $W_N(\bx,\by)$ is defined as in Eqn.~\eqref{e:energy},
\begin{equation}\label{e:energy_multiscale}
W_N({\bf X,Y}) = \sum_{\ell =1}^N V(X^{\ell},Y^\ell) + \frac{\theta}{4 N} \sum_{n=1}^N \sum_{\ell =1}^N F(X^n - X^{\ell}).
\end{equation}
It will be useful to decompose the two-scale potential into its large-scale confining part and the modulated, mean-zero, fluctuations:
\begin{equation}
	\label{eq:multiscale_potential}
	V^\epsilon(x)  = V_0(x) + V_1 \left( x,\frac{x}{\eps} \right), \quad V_0(x) = \int_{\Y} V(x, y) \, dy. 
\end{equation}
Notice that this decomposition is not unique, since we can define the average of the two-scale potential over the unit cell with respect to a different, e.g. Gibbs, weight. However, the choice of the weight does not affect our results. See, e.g., the proof of Proposition~\ref{prop:commutative}.
 
We note that the free energy $\Psi_N$ is of the form
\begin{equation}\label{eq:Free_energy_expanded}
\Psi_N(\bx) =  -\left(\sum_{\ell =1}^N V_0(x^{\ell}) + \frac{\theta}{2 N} \sum_{n=1}^N \sum_{\ell =1}^N F(x^n - x^{\ell}) \right)+\psi(\bx),
\end{equation}
where
\begin{equation}\label{e:small_psi_def}
\psi(\bx) = -\beta^{-1}\ln\int_{\Y} \prod_{\ell=1}^N e^{-\beta  V_1(x^{\ell},y^{\ell})}  \ dy^{\ell}.
\end{equation}
Finally, $\M:\mathbb{R}^{d}\rightarrow \mathbb{R}^{d\times d}_{sym}$ is defined by
\begin{equation}
\label{eq:diff_coeff}
\M(\bx) = \frac{\KK(\bx)}{\Z_N(\bx)},
\end{equation}
where
\begin{equation}\label{eq:K(x)}
\KK(\bx) = \int_{\Y} (I + \nabla_y \Phi(\bx, \by))e^{-\beta W_N(\bx,\by)}\,dy,\quad \bx\in\mathbb{R}^d, 
\end{equation}
and,  for fixed $\bx\in\mathbb{R}^d$, $\Phi$ is the unique solution to
\begin{equation}
\label{eq:poisson}
	\nabla_\by\cdot\left(e^{-\beta V_1(\bx,\by)}(I + \nabla_\by\Phi(\bx,\by))\right) = 0,\quad \by\in\Y,
\end{equation}
such that $\int_\Y \Phi(x,y)e^{-\beta V_1(x,y)}\,dy = 0$, for all $x\in\mathbb{R}^d$.

To compute the diffusion tensor, we observe that 
\begin{equation}
\begin{array}{rl}
\nonumber \KK_{ij}(\bx) &= \displaystyle{\delta_{ij} + \frac{1}{\Z(\bx)}\int_{\Y}\frac{\partial \Phi_i}{\partial y_j}(\bx,\by)e^{-\beta \left(\sum_{\ell=1}^N V_0(x^\ell)+\sum_{\ell=1}^N V_1(x^\ell,y^\ell)+\theta F(x^\ell-\bar{x})\right)} \ dy}\\
\label{e:dif_tensor_1} &\displaystyle{=\delta_{ij} + \frac{1}{\bar{\Z}(\bx)} \int_0^L \cdots \int_0^L \frac{\partial \Phi_i}{\partial y_j}(\bx,\by)\prod_{m=1}^N e^{-\beta V_1(x^m,y^m)} \ dy^m,}
\end{array}
\end{equation}
where 
\begin{equation}\label{e:barZ}
\bar{\Z}(\bx) = \prod_{m=1}^N \int_0^L e^{-\beta  V_1(x^m,y^m)} \ dy^m.
\end{equation}

By manipulating the Poisson equation that the function $\Phi(\bx,\by)$ solves, one can conclude that $\Phi(\bx,\by) = (\phi(x_1,y_1),\phi(x_2,y_2),\dots,\phi(x_N,y_N))$, 
where $\phi(x,y)$ solves
\begin{equation}\label{e:poisson-1d}
-\mathcal{L}_0\phi(x,y) = -\frac{\partial V_1}{\partial y}(x,y), \qquad \mathcal{L}_0  = -\partial_y V_1 \partial_y +\beta^{-1}\partial^2_y,
\end{equation}
and therefore $\Phi_i(x,y) = \phi(x_i,y_i)$ and
\[
\frac{\partial \Phi_i}{\partial y_j}(x,y) = \frac{\partial \phi(x^i,y^i)}{\partial y^j} = \delta_{ij}\frac{\partial \phi}{\partial y^j}(x^i,y^i).
\]
Substituting in \eqref{e:dif_tensor_1}, we obtain
\begin{equation}
 \KK_{ij}(\bx) = \displaystyle{=\delta_{ij} + \frac{1}{\bar{\Z}(\bx)} \int_0^L \cdots \int_0^L \delta_{ij} \frac{\partial \phi}{\partial y_j}(x_i,y_i)\prod_{m=1}^N e^{-\beta V_1(x^m,y^m)} \ dy^m,}
\end{equation}
and the diffusion tensor is diagonal, with
\begin{eqnarray*}
\KK_{ii}(\bx) &=& 1 + \frac{1}{\prod_{m=1}^N \int_0^L e^{-\beta V_1(x^m,y^m)} \ dy^m} \left(\int_0^L  \frac{\partial \phi}{\partial y^i}(x^i,y^i) e^{-\beta V_1(x^i,y^i)} \ dy^i\right) \times \\ && \int_0^L \prod_{m=1, m\neq i}^N e^{-\beta V_1(x^m,y^m)} \ dy^m \\
& =&  \displaystyle{1 + \frac{1}{ \int_0^L e^{-\beta V_1(x^i,y^i)} \ dy^i} \int_0^L  \frac{\partial \phi}{\partial y^i}(x^i,y^i) e^{-\beta V_1(x^i,y^i)} \ dy^i.}
\end{eqnarray*}
As it is well known~\cite[Sec 13.6.1]{PavlSt08}, the one dimensional Poisson equation~\eqref{e:poisson-1d} can be solved explicitly, up to quadratures. We can then obtain formulas for the diagonal elements of the diffusion tensor $\KK_{ii}$ and of $\M(x)$: 
\begin{equation}
\label{eq:Diffusion_coeff}
\M(x) = \frac{1}{\left(\frac{1}{L}\int_0^L e^{-\beta V_1(x,y)} \, dy\right)\left(\frac{1}{L}\int_0^L e^{\beta V_1(x,y)} \, dy\right)}.
\end{equation}

We can write the system of stochastic differential equations for the homogenized system of interacting particles:
\begin{equation}
\label{eq:system_of_homogenized_sdes}
dX_t^i = -\left[\M(X_t^i) \partial_{x_i}\Psi(X_t^1, \dots X_t^N) -  \beta^{-1}\M'(X_t^i)\right] \, dt + \sqrt{2\beta^{-1}\M(X_t^i)}dB_t^i,
\end{equation} 
for $i=1,\dots,N$, where $\M$ is defined in Eqn.~\eqref{eq:Diffusion_coeff} above, prime denotes differentiation with respect to $x$ and $\Psi$ is given by
\begin{equation}\label{e:psi_homog_first}
\Psi(x^1,\dots,x^N) = \sum_{\ell =1}^N V_0(x^{\ell}) +  \frac{\theta}{4 N} \sum_{n=1}^N \sum_{\ell =1}^N F(x^n - x^{\ell})-\beta^{-1}\ln\int_0^L\prod_{\ell = 1}^N e^{-\beta V_1(x^{\ell},y^{\ell})} \ dy^\ell.
\end{equation}

We note that the homogenized system of SDEs~\eqref{eq:system_of_homogenized_sdes} is characterized by multiplicative noise.\footnote{In fact, the noise in this SDE can be interpreted in the Klimontovich sense:
\begin{equation*}
dX_t^i = - \M(X_t^i) \partial_{x_i}\Psi'(X_t^1, \dots X_t^N) \, dt + \sqrt{2\beta^{-1}\M(X_t^i)} \circ^{K} dB_t^i,
\end{equation*}
where $\circ^{K}$ denotes the Klimontovich stochastic integral; see~\cite{DuncanPavliotis2016}. 
} Furthermore, the diffusion coefficient of the $i-$th particle depends only on the position of the particle itself, and not of the other particles. The dynamics~\eqref{eq:system_of_homogenized_sdes} is reversible with respect to the Gibbs measure
\begin{equation}\label{e:Gibbs_measure}
p_{\infty}(dx) = \frac{1}{\bar{Z}} e^{-\beta\Psi(x)} \, dx, \quad \bar{Z} = \int_\R e^{-\beta\Psi(x)} \ dx.
\end{equation}

%
%
\subsection{Mean field limit for the homogenized SDE}\label{sec:epsilonThenN}
We can now pass to the mean field limit $N\rightarrow\infty$. 
The system of SDEs~\eqref{eq:system_of_homogenized_sdes} is of the form
\[
dX_t^i = b\left(X_t^i,\frac{1}{N}\sum_{j=1}^N X_t^j\right) dt + \sigma(X_t^i) dB_t^i,
\]
which is in the same form to the one considered~\cite{Gartner1988,Oelschlager1984}, with slightly different drift and diffusion coefficients.\footnote{In fact, 
these papers consider the more general case, where the diffusion coefficient, $\sigma$, also depends on the empirical measure, 
$\sigma\left(X_t^i, \frac{1}{N}\sum_{j=1}^N X_t^j\right)$.} 
It is straightforward to check that the homogenized equation satisfies the conditions in the aforementioned papers.\footnote{These are variants of boundedness and Lipschitz continuity assumptions for the drift and diffusion coefficients. The estimates on the homogenized coefficients that are obtained in~\cite{Abdulle2017}, are sufficient in order to invoke the results of~\cite{Gartner1988,Oelschlager1984}. For the purposes of this paper it is sufficient to pass formally to the mean field limit. The rigorous analysis will be presented elsewhere.} Taking the mean field limit of~\eqref{eq:system_of_homogenized_sdes} we obtain the following nonlinear Fokker-Planck equation:
\begin{equation}
\label{eq:FP_for_homogenized}\frac{\partial p}{\partial t} =\frac{\partial}{\partial x}\left[ \beta^{-1}\frac{\partial\left(\M(x) p\right)}{\partial x} + \M(x)\left( V'_0(x) +\psi'(x) + \theta\left( F' \star p\right)(x)\right)p  + \beta^{-1}\frac{\partial \M(x)}{\partial x}p\right], 
\end{equation}
where
\begin{equation}
\label{eq:small_psi}
\psi(x) = -\beta^{-1}\ln \left( \int_0^L e^{-\beta V_1(x,y)} \ dy \right),
\end{equation}
and $\M(x)$ is defined in~\eqref{eq:diff_coeff}.

The McKean stochastic differential equation corresponding to~\eqref{eq:FP_for_homogenized} is
\begin{equation}\label{e:sde-homog}
dX_t = -\M (X_t)  (V'_0(X_t) + \psi'(X_t) +\theta F'(X_t-\bar{X})) \, dt + \beta^{-1}  \M'(X_t) \, dt + \sqrt{2 \beta^{-1} \M(X_t)} \,  dB_t.
\end{equation}
We reiterate that the correction to the drift, $\beta^{-1} \M'(X_t) \, dt$ is not the Stratonovich correction, but, rather the Klimontovich (kinetic) one. 
This interpretation of the stochastic integral ensures that the homogenized dynamics is reversible with respect to the (thermodynamically consistent) Gibbs measure(s) that we can calculate by solving the stationary Fokker-Planck equation.

The (one or more) stationary distributions $p_\infty(x)$ are solutions to the stationary Fokker-Planck equation
\begin{equation}
\label{eq:steady_FP_homog}
\cL^* p_{\infty}: = \frac{\partial}{\partial x}\left(\M(x)\left( V'_0(x) + \psi'(x) +  (F'\star p_\infty) p_\infty + \beta^{-1}p_\infty\right) + \beta^{-1}\frac{\partial(\M(x)p_\infty)}{\partial x}\right) = 0.
\end{equation}
The detailed balance condition implies that
\[
\beta^{-1}\M(x)\frac{\partial p_\infty}{\partial x} = -\M(x)\left(V'_0(x)+ (F'\star p_\infty)(x) + \psi'(x)\right)p_\infty,
\]
And since $\M(x)$ is strictly positive, a simple variant of~\cite[Lemma 4.1]{Tamura1984} enables us to obtain an integral equation for the invariant distribution:
\begin{equation}
\label{eq:invariant_measure_homog_mean_field}
p_\infty(x) = \frac{1}{Z}e^{-\beta\left(V_0(x)+\theta (F\star p_\infty)(x)+\psi(x)\right)},
 \quad
Z = \int_\R e^{-\beta\left(V_0(x)+\theta (F\star p_\infty)(x)+\psi(x)\right)} \ dx,
\end{equation}
where $\psi(x)$ is given by Eqn.~\eqref{eq:small_psi}. In particular, $p_{\infty}$ is independent of the diffusion tensor $\M(x)$.

For the particular case of a quadratic interaction potential $F(x) = \frac{x^2}{2}$, which is the case that we will study here, all stationary solutions are given by the one parameter family of Gibbs states of the form~\eqref{e:inv-meas-mckean} and the integral equation~\eqref{eq:invariant_measure_homog_mean_field} reduces to a nonlinear equation, the selfconsistency equation~\cite{shiino1987}
\begin{equation}\label{e:selfconsist}
m = R(m;\theta,\beta) := \frac{1}{Z}\int_\R x e^{-\beta\left(V_0(x)+\theta \left(\frac{x^2}{2}-m x\right)+\psi(x)\right)} \ dx.
\end{equation}
By solving this equation we can construct the full bifurcation diagram of the stationary Fokker-Planck equation. This will be done in Section~\ref{sec:numerics}.

We are also interested in the equation for the critical temperature~\eqref{e:critical_temp}, which in this case is given by
\begin{equation}\label{eq:variance_with_Rm}
\frac{1}{Z}\int_\R x^2 e^{-\beta\left(V_0(x)+\theta \left(\frac{x^2}{2}\right)+\psi(x)\right)}\ dx -\left(\frac{1}{Z}\int_\R x e^{-\beta\left(V_0(x)+\theta \left(\frac{x^2}{2} \right)+\psi(x)\right)} \ dx\right)^2=  \frac{1}{\beta\theta}.
\end{equation}
Assuming that the large scale part of the potential is symmetric, we have that $\int x p_\infty(x;m=0, \beta, \theta) \ dx )  = 0$ and the equation above simplifies to
\begin{equation}\label{eq:variance}
\frac{1}{Z} \int_\R x^2 e^{-\beta\left(V_0(x)+\theta \left(\frac{x^2}{2}\right)+\psi(x)\right)}\ dx=  \frac{1}{\beta\theta}.
\end{equation}

From the definition of $\psi(x)$ in Eqn.~\eqref{eq:small_psi}), we can conclude that for separable potentials, i.e. when $V_1(x,y)$ is independent of $x$, 
then $\psi(x)$ becomes a constant. This, in turn, means that the stationary solutions to the homogenized McKean-Vlasov equation are the same to the ones for the system  without fluctuations ($V_1(x,y)=0$) -- see Corollary~\ref{prop:additive_fluctuations} in Section~\ref{sec:N_first} below.
For example, when the large scale part of the potential $V_0(x)$ is convex, there are no phase transitions for the homogenized dynamics. We will show in Sections~\ref{sec:N_first} and~\ref{sec:numerics} 
that this is not the case if we take the limits in different order.
%
%
%
\section{Multiscale Analysis for the McKean-Vlasov Equation in a Two-scale Potential}\label{sec:N_first}
In this section we consider the homogenization problem for the McKean-Vlasov equation in a locally periodic potential for the case of a quadratic (Curie-Weiss) interaction. In particular, we first pass to the mean field limit (i.e., send $N\rightarrow\infty$) in Eqn.~\eqref{eq:system_of_sdes} with $F(x) = \frac{x^2}{2}$ and study the effects of finite (but small) $\epsilon$ on the bifurcation diagram, before sending $\epsilon\rightarrow 0$.
%
%
\subsection{Mean field limit for interacting diffusions in a two-scale potential: $N\rightarrow\infty$, $\epsilon>0$ finite}
We start with the system of interacting diffusions
\begin{equation}
\label{e:sde-2scale}
dX_t^i = -\partial_x V \left( X_t^i, \frac{X_t^i}{\eps} \right) \,dt - \theta \left( X_t^i - \frac{1}{N}  \sum_{j=1}^{N} X_t^j \right)\,dt + \sqrt{2\beta^{-1}} \, dB_t^i.
\end{equation} 
The notation is the same as in Section~\ref{sec:epsilon_first}, i.e. $V^{\eps}(x):=V \left( x,\frac{x}{\eps} \right)$ is a smooth confining potential that is $L-$periodic in its second argument, $\theta >0$ is the interaction strength, $\beta$ the inverse temperature and $\{ B^i_t, \; i=1, \dots, N \}$ are standard independent one-dimensional Brownian motions.

Taking the limit as $N\rightarrow\infty$, we obtain the McKean-Vlasov-Fokker-Planck equation:
\begin{equation}
\label{eq:FP_for_multiscale}
\frac{\partial p}{\partial t} = \frac{\partial}{\partial x}\left(\beta^{-1}\frac{\partial p}{\partial x} + \partial_x V^\epsilon(x) p + \theta \left(x - \int x p(x,t) \, dx \right) p\right).
\end{equation}
The equilibrium solutions, i.e. stationary states, of this equation are given by a one parameter family of two-scale Gibbs distributions -- see Eqn~\eqref{e:inv-meas-mckean}:
\begin{subequations}\label{e:inv-meas-mckean-two-scale}
\begin{eqnarray}
p^{\eps}_{\infty}(x ; \theta, \beta, m^{\eps}) &=& \frac{1}{Z^{\eps}(\theta, \beta ; m^{\eps} )} e^{- \beta \left( V^{\eps}(x) + \theta \left(\frac{1}{2}x^2 - x m^{\eps} \right) \right)}, 
\\  Z^{\eps}(\theta, \beta ; m^{\eps} ) &=& \int_{\R} e^{- \beta \left( V^{\eps}(x) + \theta \left(\frac{1}{2}x^2 - x m^{\eps} \right) \right)} \, dx.
\end{eqnarray}
\end{subequations}
Our goal now is to study the $\eps \rightarrow 0$ limit of the selfconsistency equation -- see Eqn.~\eqref{e:self-consist}
\begin{equation}\label{e:self-consist-eps}
m^{\eps} = \int_{\R} x p^{\eps}_{\infty}(x ; \theta, \beta, m^{\eps}) \, dx =:  R^{\eps} (m^{\eps}; \theta, \beta),
\end{equation}
and also the equation for the critical temperature,
\begin{equation}\label{e:variance-eps}
\int_{\R} x^2 p^{\eps}_{\infty}(x ; \theta, \beta, m^{\eps}) \, dx =\frac{1}{\beta\theta}.
\end{equation}
	
\begin{prop}\label{prop:commutative}
The limits $\epsilon\rightarrow 0, \, N\rightarrow\infty, \, T\rightarrow\infty$ and $N\rightarrow\infty, \, T\rightarrow\infty, \, \epsilon\rightarrow 0$ do not commute. In particular, the $\eps \rightarrow 0$ limits of the selfconsistency equation~\eqref{e:self-consist-eps} and of the equation for the critical temperature~\eqref{e:variance-eps} are {\bf different} from \eqref{e:selfconsist} and~\eqref{eq:variance}.
\end{prop}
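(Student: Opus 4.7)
The plan is to compute the $\eps \to 0$ limit of the selfconsistency equation~\eqref{e:self-consist-eps} and of the critical-temperature equation~\eqref{e:variance-eps} directly, via periodic averaging, and to compare the results with~\eqref{e:selfconsist} and~\eqref{eq:variance}.

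First, I would fix $m$ and note that the Gibbs factor $e^{-\beta V^{\eps}(x)} = e^{-\beta V(x,x/\eps)}$ oscillates rapidly at the scale $\eps$. By the standard periodic-averaging lemma, for any continuous $g$ with sufficient decay,
\[
\int_{\R} g(x)\, e^{-\beta V(x,x/\eps)} \, dx \;\longrightarrow\; \int_{\R} g(x) \Brk{\frac{1}{L}\int_{0}^{L} e^{-\beta V(x,y)} \, dy} \, dx
\]
as $\eps \to 0$. Taking $g(x)=x\,e^{-\beta\theta(x^{2}/2-m x)}$ in the numerator and $g(x)=e^{-\beta\theta(x^{2}/2-m x)}$ in the denominator of $R^{\eps}(m;\theta,\beta)$ yields an explicit candidate limit for the selfconsistency equation, and the analogous computation (with $x^{2}$ in place of $x$, at $m=0$) yields the limit of~\eqref{e:variance-eps}.

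Next, using the decomposition~\eqref{eq:multiscale_potential} and the definition~\eqref{eq:small_psi} of $\psi$, I would rewrite the averaged Gibbs factor as
\[
\frac{1}{L}\int_{0}^{L} e^{-\beta V(x,y)}\, dy \;=\; \frac{1}{L}\, e^{-\beta (V_0(x)+\psi(x))}.
\]
The prefactor $1/L$ cancels between numerator and denominator, so the limiting equations obtained by this leading-order averaging have the \emph{same} analytical expressions as~\eqref{e:selfconsist} and~\eqref{eq:variance}. The proposition therefore requires going beyond this naive formal limit: either a refined two-scale expansion of $R^{\eps}(m;\theta,\beta)$ must pick up an $\eps$-persistent correction that is invisible to weak convergence, or the selected branches of solutions $m^{\eps}$ (and the corresponding critical-temperature curve) must fail to converge to the fixed-point set of~\eqref{e:selfconsist}.

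The main obstacle is to make this discrepancy precise. The route I would follow is to exhibit an explicit family of potentials for which the bifurcation structure inherited from~\eqref{e:self-consist-eps} in the $\eps \to 0$ limit genuinely differs from the one associated to~\eqref{e:selfconsist}. The nonseparable bistable example $V^{\eps}(x)=\frac{x^{4}}{4}-\brk{1-\delta\cos(x/\eps)}\frac{x^{2}}{2}$ of Figure~\ref{fig:pot}(b) is a natural candidate, since the $x$-dependence of $V_{1}$ makes $\psi$ nonconstant and deforms the effective well near the origin, so that the number, location and stability of selfconsistent solutions, and hence the critical temperature predicted by~\eqref{e:variance-eps}, differ qualitatively from those predicted by~\eqref{eq:variance}. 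The delicate step is tracking the branches $m^{\eps}$ uniformly in $\eps$, which I would corroborate with the numerical evidence of Section~\ref{sec:numerics}.
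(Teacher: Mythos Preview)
Your opening periodic-averaging computation is exactly what the paper does: it invokes weak convergence of locally periodic functions to pass to the limit in the numerator and denominator of $R^{\eps}$ and of the second moment, arriving at the double integrals that the paper records as~\eqref{eq:xiLimit} and~\eqref{eq:VarianceLimit}. The paper then simply \emph{asserts} that these are ``different'' from~\eqref{e:selfconsist} and~\eqref{eq:variance} and ends the proof there.

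Your subsequent observation is the crucial one, and it is correct: by Fubini and the very definition of $\psi$ in~\eqref{eq:small_psi},
\[
\int_{0}^{L} e^{-\beta V_1(x,y)}\,dy \;=\; e^{-\beta\psi(x)},
\]
so the double integral in~\eqref{eq:xiLimit} collapses to precisely the single integral in~\eqref{e:selfconsist}, and likewise~\eqref{eq:VarianceLimit} coincides with~\eqref{eq:variance}. In other words, the two limiting selfconsistency maps are \emph{identical as functions of $m$}, and the paper's bare assertion that they are ``different'' is not justified by the argument given. You have, in effect, uncovered a gap in the paper's own proof rather than diverging from it.

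Your diagnosis of what is actually needed---that noncommutativity must manifest at the level of the \emph{set of solutions} (the branch structure of $m^{\eps}$ versus $m$) or of selected critical temperatures, not at the level of the pointwise limit of $R^{\eps}(\,\cdot\,)$---is sound, and it is consistent with the numerical evidence in Section~\ref{sec:numerics}: for every finite $\eps$ the oscillatory potential creates additional metastable branches which do not survive weak convergence, while the ``principal'' branch does converge to the homogenized one. Your proposed resolution (exhibit a nonseparable example and track the branches uniformly in $\eps$) is the right direction, but as stated it is a programme rather than a proof: you would need to show rigorously that some solution $m^{\eps}$ of~\eqref{e:self-consist-eps} fails to converge to any solution of~\eqref{e:selfconsist}, or that the critical $\beta_C^{\eps}$ fails to converge to the homogenized $\beta_C$. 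Neither you nor the paper carries this out.
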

\begin{proof}
The proof of this result follows from properties of periodic functions \cite[Thm. 2.28]{PavlSt08}. Consider $u\in L^2(\R^d;C_{per}(\Y)), \, \epsilon > 0$ and define $u^\epsilon(x,y) = u\left(x,\frac{x}{\epsilon}\right)$. Then 
\begin{equation}\label{e:period}
u^\epsilon\rightharpoonup \int_\Y u(x,y) \ dy \quad \mbox{weakly in} \;\; L^2(\R^d).
\end{equation}

We will use this fact to identify the limits as $\epsilon\rightarrow 0$ of $p^\epsilon_\infty(x;\theta,\beta,m^\epsilon)$ and $Z^\epsilon(m^\epsilon;\theta,\beta)$, in order to obtain the limits of the first and second moments. 
First, we note that both the invariant density $p^\epsilon$ and the first moment $m^\epsilon$ depend on $\epsilon$. For a fixed $\epsilon>0$, it is straightforward to check  that the two-scale potentials verify the conditions presented in \cite[Eqns. (3.1), (3.2)]{Arnold1996}, as long as the nonseparable fluctuations are truncated outside the interval $[-a, a]$ -- this is the case for us; see Table~\ref{tab:pot} in Section~\ref{sec:numerics}. This, by estimates~\cite[Eqn.(3.7), Eqn.(3.8)]{Arnold1996}, implies uniform boundedness of the first moment, $m^\epsilon$, as well as existence of a unique global weak solution for the McKean-Vlasov equation. 
We can therefore extract a converging subsequence that converges to some $m \in \R$. We use the notation $V_{eff}(x;m,\theta) = V_0(x) + \theta \left( \frac{1}{2} x^2 - m x\right)$ with $V(x,y) = V_0(x) +V_1(x,y)$ -- see Eqn.~\eqref{eq:multiscale_potential}. We note that $V_{eff}$ depends smoothly on $m$. We use the convergence of $m^{\eps}$ to $m$ and~\eqref{e:period} to deduce:
\begin{eqnarray}
Z^\epsilon(m^{\eps};\theta,\beta) &=& \int_\R e^{-\beta\left(V_{eff}(x;m^{\eps},\theta)+V_1 \left(x, \frac{x}{\eps} \right)   \right)}\ dx \nonumber \\ &\rightarrow& \int_0^L\int_\R e^{-\beta\left(V_{eff}(x;m,\theta) +  V_1(x,y)\right)} \ dx \ dy 
 =: \bar{Z}(m;\theta,\beta). \label{eq:Zbar}
\end{eqnarray}
Similarly,
\begin{equation}\label{eq:Mlimit}
\int_\R x \  e^{-\beta\left(V_{eff}(x;m,\theta) + V_1\left(x,\frac{x}{\epsilon}\right)\right)}\ dx \rightarrow \int_0^L\int_\R x \ e^{-\beta\left(V_{eff}(x;m,\theta) +  V_1(x,y)\right)} \ dx \ dy.
\end{equation}
Combining \eqref{eq:Zbar} and \eqref{eq:Mlimit}, we obtain
\begin{equation}\label{eq:xiLimit}
m = \frac{1}{\int_0^L\int_\R e^{-\beta\left(V_{eff}(x;m^{\eps},\theta) +  V_1(x,y)\right)} \ dx \ dy} \int_0^L\int_\R x \ e^{-\beta\left(V_{eff}(x;m,\theta) +  V_1(x,y)\right)} \ dx \ dy.
\end{equation}
Arguing in a similar way for the variance, we conclude that
\begin{equation}\label{eq:VarianceLimit}
1 = \frac{\beta\theta}{\bar{Z}(m;\theta,\beta)} \int_0^L\int_\R x^2 \ e^{-\beta\left(V_{eff}(x;m,\theta) +  V_1(x,y)\right)} \ dx \ dy.
\end{equation}

We conclude that equations \eqref{eq:xiLimit} and \eqref{eq:VarianceLimit} are {\bf different}, from \eqref{e:selfconsist} and~\eqref{eq:variance}.
\end{proof}

The two limits, $\epsilon\rightarrow 0, \, N\rightarrow\infty, \, T\rightarrow\infty$ and $N\rightarrow\infty, \, T\rightarrow\infty, \, \epsilon\rightarrow 0$ commute in the case where the fluctuations in the potential are independent of the macroscale $x$, $V_1 =V_1(y)$ in~\eqref{eq:multiscale_potential}. An immediate corollary of the above proposition is the following.

\begin{corollary}\label{prop:additive_fluctuations}
Separable fluctuations do not affect the bifurcation diagram in the mean field limit.
\end{corollary}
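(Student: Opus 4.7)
The plan is to treat the corollary as a direct specialization of the limit equations~\eqref{eq:xiLimit} and~\eqref{eq:VarianceLimit} derived in the proof of Proposition~\ref{prop:commutative} to the case $V_1(x,y)=V_1(y)$, and then exploit the fact that the Gibbs weight $e^{-\beta(V_{eff}(x;m,\theta)+V_1(y))}$ factorizes into a product of a function of $x$ and a function of $y$. This is the only feature of the separable case that matters; everything else follows by cancellation.

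Concretely, for any measurable $g$ for which the integrals make sense,
\begin{equation*}
\int_0^L\!\!\int_\R g(x)\, e^{-\beta(V_{eff}(x;m,\theta)+V_1(y))}\,dx\,dy
= \Big(\int_0^L e^{-\beta V_1(y)}\,dy\Big)\Big(\int_\R g(x)\, e^{-\beta V_{eff}(x;m,\theta)}\,dx\Big).
\end{equation*}
Applying this identity with $g(x)=1,\,x,\,x^2$ to the numerators and denominators of~\eqref{eq:xiLimit} and~\eqref{eq:VarianceLimit}, the $y$-integral $\int_0^L e^{-\beta V_1(y)}\,dy$ appears as a common factor and cancels. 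The limit selfconsistency equation then reduces to
\begin{equation*}
m \;=\; \frac{\int_\R x\, e^{-\beta V_{eff}(x;m,\theta)}\,dx}{\int_\R e^{-\beta V_{eff}(x;m,\theta)}\,dx},
\end{equation*}
which is exactly the classical McKean-Vlasov selfconsistency equation~\eqref{e:self-consist} for a confining potential $V_0$ with no fast fluctuations; the same cancellation turns~\eqref{eq:VarianceLimit} into the corresponding critical-temperature equation for the unperturbed system. Since the whole bifurcation diagram in the mean field limit is determined by the solution set of these two equations, it coincides with the bifurcation diagram driven by $V_0$ alone, independently of the choice of separable $V_1$.

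I do not expect any real obstacle; the content is essentially the factorization observation above. For internal consistency one can also verify that the other order of limits gives the same diagram: by~\eqref{eq:small_psi}, if $V_1$ does not depend on $x$ then $\psi(x)=-\beta^{-1}\ln\int_0^L e^{-\beta V_1(y)}\,dy$ is a constant, and therefore drops out of~\eqref{e:selfconsist} after being absorbed into the normalization. Thus both routes $\eps\to 0,\,N\to\infty,\,T\to\infty$ and $N\to\infty,\,T\to\infty,\,\eps\to 0$ yield the same selfconsistency equation, which is what the commutativity claim preceding the corollary asserts and what the statement of the corollary records.
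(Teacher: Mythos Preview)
Your argument is correct and matches the paper's own proof: both observe that when $V_1=V_1(y)$ the factor $\int_0^L e^{-\beta V_1(y)}\,dy$ (equivalently, the constant $\psi$) cancels between numerator and denominator in each of~\eqref{e:selfconsist},~\eqref{eq:variance},~\eqref{eq:xiLimit} and~\eqref{eq:VarianceLimit}, reducing all of them to the unperturbed equations for $V_0$. You simply spell out the factorization a bit more explicitly than the paper does.
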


\begin{proof}
When the fluctuations are separable (i.e., $V_1(x,y)$ does not depend on $x$), $\psi(x,\beta)$ in \eqref{e:selfconsist}, \eqref{eq:variance} becomes a constant that we can ignore since it also appears in the partition function and they cancel out. Similarly, the terms of the form $\int_\R  e^{-\beta V_1(x,y)} \ dy$ in equations \eqref{eq:xiLimit} and \eqref{eq:VarianceLimit} become constants independent of $x$ and cancel with the corresponding terms in the partition function~\eqref{eq:Zbar}.
\end{proof}

To illustrate the fact that the two limits do commute when the fluctuations are independent of the macroscale, we present in Figures~\ref{fig:compare_epsilon_additive} and~\ref{fig:compare_epsilon_multiplicative} below the plots of $R(m^\epsilon;\theta,\beta)$ for various values of $\epsilon$ and fixed 
$\beta$ and $\theta$, which we compare with the solution of the homogenized selfconsistency equation $R(m;\theta,\beta)=m$. We present results both for a convex and nonconvex confining potential, with periodic fluctuations. More details about the two-scale potentials that we use for the numerical simulations will be given in Section~\ref{sec:numerics}.

\begin{figure}[h!]
\begin{subfigure}{0.5\textwidth}
	\includegraphics[width=6cm,height=6cm]{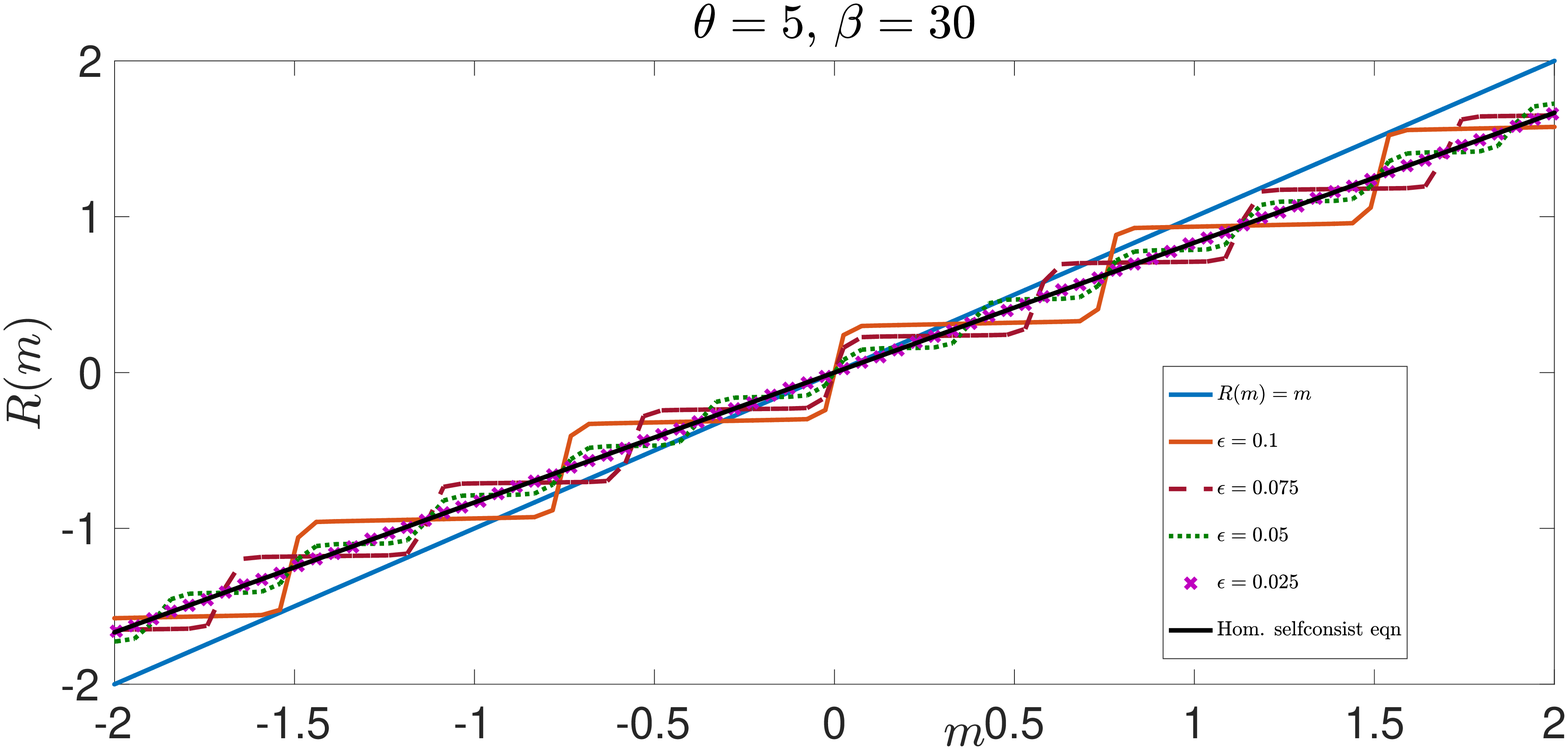}
	\caption{ $V^{\eps}(x) = \frac{x^2}{2} + \delta \cos \left(\frac{x}{\eps} \right)$} \label{fig:convex_add}
\end{subfigure}
\begin{subfigure}{0.5\textwidth}
	\includegraphics[width=6cm,height=6cm]{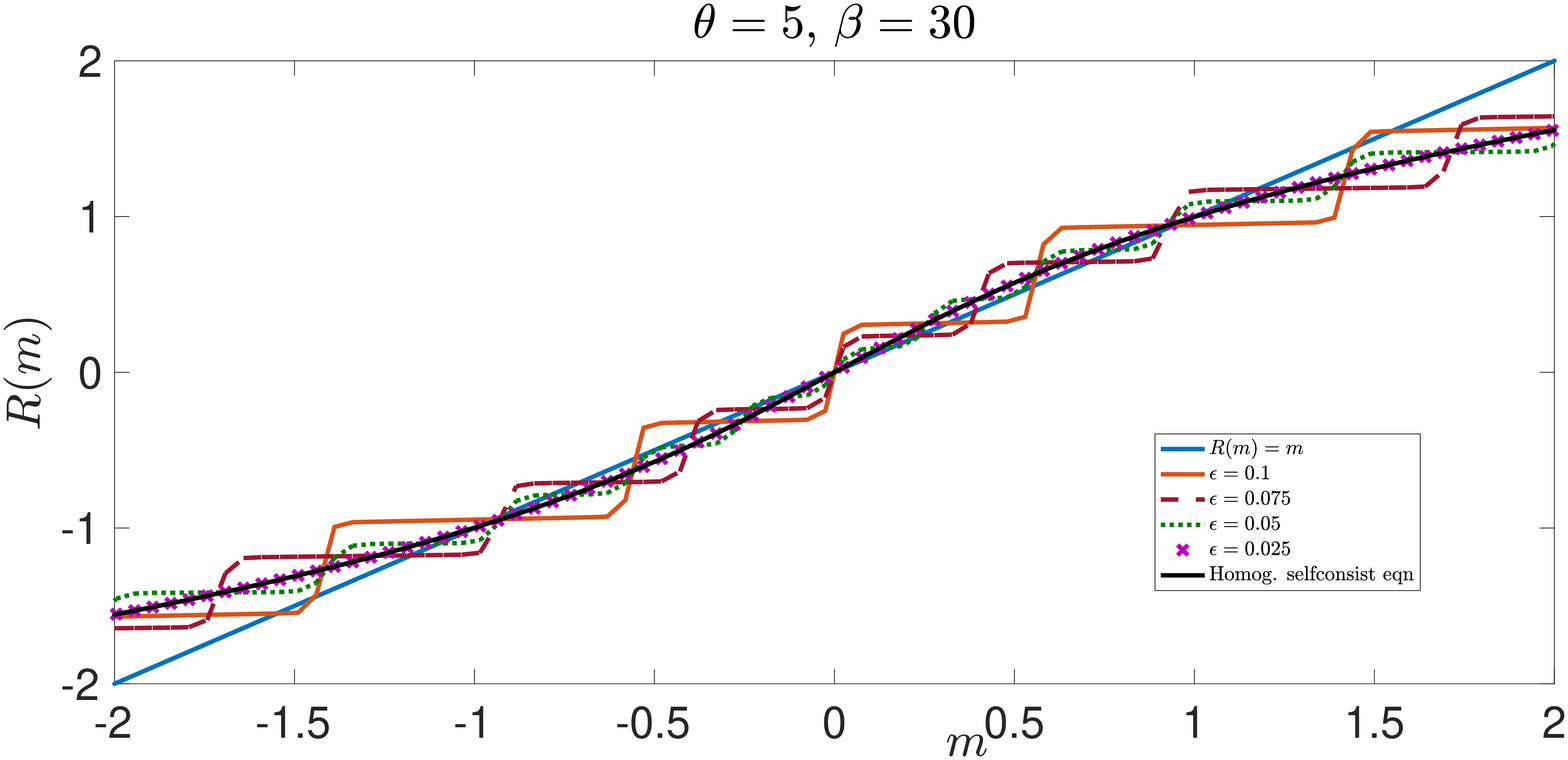}
\caption{$V^{\eps}(x) = \frac{x^4}{4} - \left( 1 + \delta \cos \left(\frac{x}{\eps} \right) \right) \frac{x^2}{2}$}\label{fig:bistable_add}
\end{subfigure}
		 \caption{Plot of $R(m;\theta,\beta) = m$ and $R(m^\epsilon;\theta,\beta)$ for $\theta = 5, \, \beta = 30, \delta = 1$ and various values of $\epsilon$ for separable potentials. \eqref{fig:convex_add} Convex potential $V_0(x)$ and \eqref{fig:bistable_add} Bistable potential $V_0(x)$.}
	 \label{fig:compare_epsilon_additive}
 \end{figure}
As is evident from Figure~\ref{fig:sep_pot}, the oscillatory part of the potential introduces (infinitely many) additional local minima. Consequently,~\cite{Tugaut2014}, the selfconsistency equation $R(m^\epsilon;\theta,\beta)=m^\eps$ has multiple solutions. Furthermore, at shown in Figure~\ref{fig:bistable_add}, in the limit $\epsilon\rightarrow 0$, the curves $R(m^\epsilon;\theta,\beta)$ (various dashed lines) approach those given by $R(m;\theta,\beta)$ computed from Eqn.~\eqref{e:selfconsist} (full black line), in accordance with Corollary~\ref{prop:additive_fluctuations}, showing the commutativity of the two limits.

\begin{figure}[h!]
\begin{subfigure}{0.5\textwidth}
	\includegraphics[width=6cm,height=6cm]{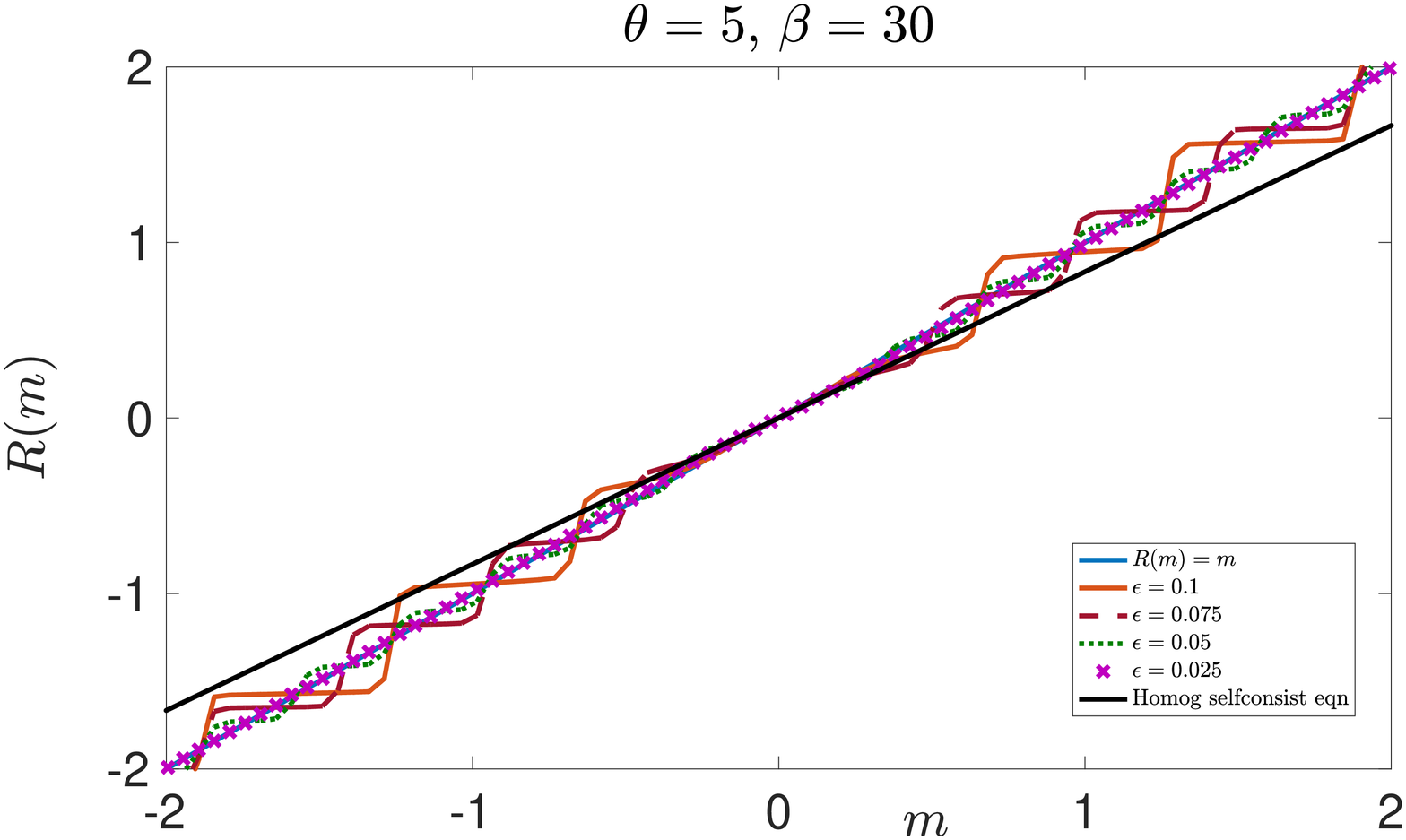}
	\caption{$V^{\eps}(x) = \left(1 + \delta \cos \left(\frac{x}{\eps} \right) \right) \frac{x^2}{2} $} \label{fig:convex_mult}
\end{subfigure}
\begin{subfigure}{0.5\textwidth}
	\includegraphics[width=6cm,height=6cm]{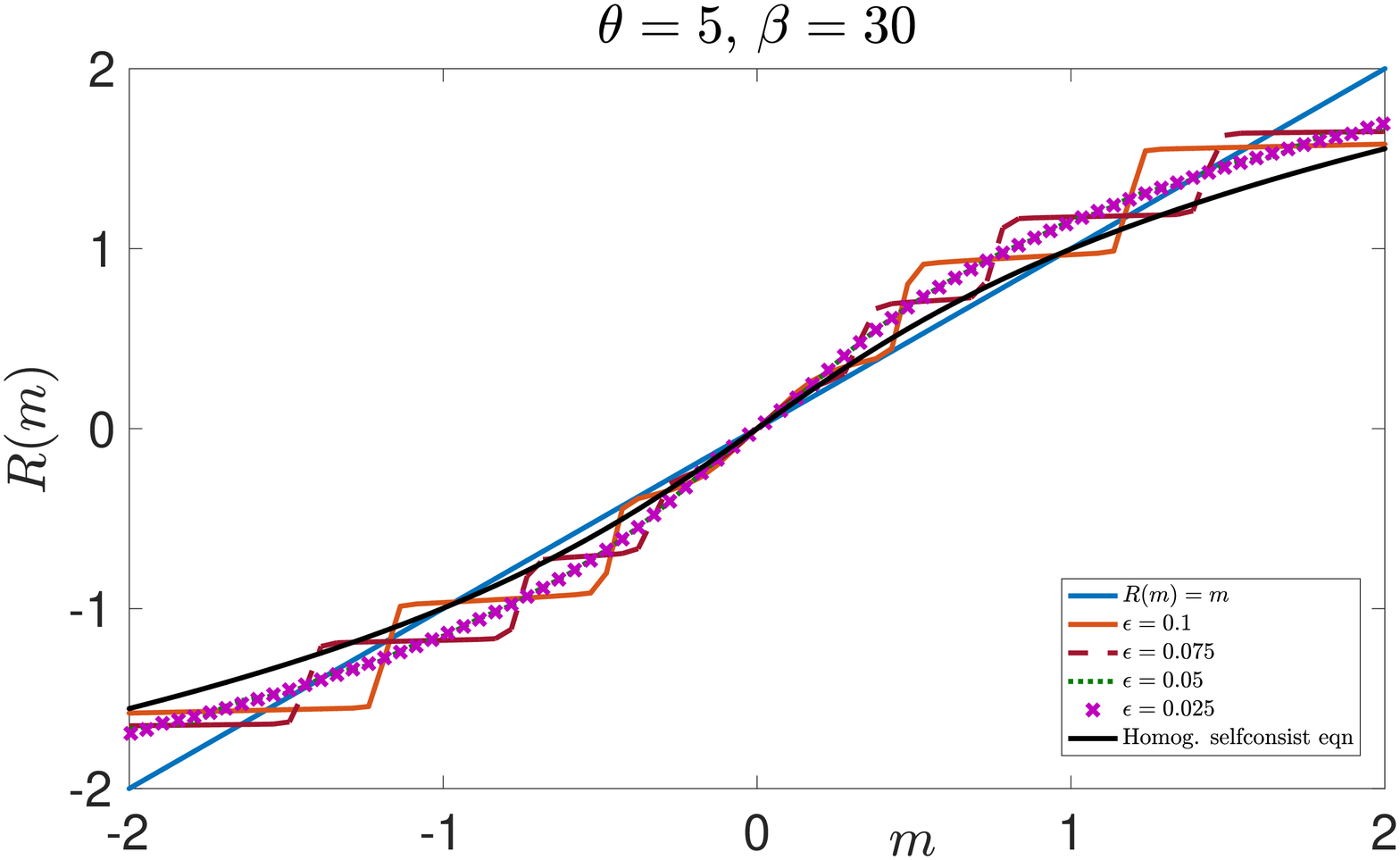}
\caption{$\frac{x^4}{4} - \left( 1 + \delta \cos \left(\frac{x}{\eps} \right) \right) \frac{x^2}{2}$}\label{fig:bistable_mult}
\end{subfigure}		
 \caption{Plot of $R(m;\theta,\beta) = m$ and $R(m^\eps;\theta,\beta)$ for $\theta = 5, \, \beta = 30, \delta = 1$ and various values of $\epsilon$ for nonseparable potentials. \eqref{fig:convex_add} Convex potential $V_0(x)$ and \eqref{fig:bistable_add} Bistable potential $V_0(x)$.}
	 \label{fig:compare_epsilon_multiplicative}
 \end{figure}

Let us consider now the case of nonseparable fluctuations. As we have already discussed, see Figure~\ref{fig:non_sep_pot} and also the inside panels of Figures~\eqref{fig:conv_mult_mean} and \eqref{fig:bistab_mult_mean}, the resulting two-scale potential does not only contain many additional local minima, it is also flattened around $x=0$. In Figure~\ref{fig:compare_epsilon_multiplicative} we present curves $R(m^\epsilon;\theta,\beta)$ for nonseparable fluctuations, 
compared with the line $R(m;\theta, \beta) = m$ (or $y=x$). 
We observe that in the limit $\epsilon\rightarrow 0$ the curves $R(m^\epsilon;\theta,\beta)$ (various dashed lines) {\bf do not} converge to $R(m;\theta,\beta)$ corresponding 
to the homogenized problem (full black line), in accordance with Prop.~\ref{prop:commutative}. 
Notice also the flatness of $R(m^\epsilon;\theta,\beta)$ around $m=0$ for smaller values of $\epsilon$, which follows from the flatness of  the corresponding 
potentials $V^\epsilon$ around $x=0$.
%
%
\subsection{Multiscale analysis for the McKean-Vlasov equation in a two-scale confining potential}

In this section we study the problem of periodic homogenization for the McKean-Vlasov equation in a locally periodic confining potential, for the Curie-Weiss quadratic interaction and in one dimension. We only present formal arguments. The rigorous analysis of this problem will be presented elsewhere.

We consider the nonlinear Fokker-Planck equation~\eqref{eq:FP_for_multiscale} with $F(x) = \frac{x^2}{2}$:
\begin{equation}
\label{eq:FP_particular}
\frac{\partial p^{\eps}}{\partial t} = \beta^{-1}\frac{\partial^2p^\eps}{\partial x^2} + \frac{\partial}{\partial x} \left(V^\prime_0(x)p^{\eps} + V_1^\prime \left(x,\frac{x}{\epsilon}\right) p^{\eps} + \theta(x-m^\epsilon)p^{\eps}\right),
\end{equation}
with initial conditions $p^{\eps}(x,0) = p_{in}(x)$, independent of $\eps$ and where the prime denotes differentiation with respect to $x$. The PDE~\eqref{eq:FP_particular} is coupled to the selfconsistency equation
\begin{equation}
\label{eq:selfconsistency_1}
m^\epsilon (t) = \int_\R x \ p^{\eps}(x,t)\ dx.
\end{equation}
This homogenization problem is (slightly) different from the standard one for the Fokker-Planck equation in a two-scale potential that was studied in~\cite{Pavliotis_al2016, DuncanPavliotis2016} due to the self-consistency equation~\eqref{eq:selfconsistency_1}. In particular, in addition to the standard two-scale expansion for the solution of the Fokker-Planck equation~\eqref{eq:FP_particular}, we also need to expand the solution of~\eqref{eq:selfconsistency_1} into a power series in $\eps$:
\begin{subequations}\label{e:expans-two-scale}
\begin{eqnarray}
\label{eq:p_expansion} p^{\eps}(x,t) &=& p_0 \left(x, \frac{x}{\eps},t \right) + \epsilon p_1\left(x, \frac{x}{\eps},t \right) + \epsilon^2 p_2\left(x, \frac{x}{\eps},t \right) + \dots, \\
\label{eq:xi_expansion} m^\epsilon & = & m_0 + \epsilon m_1 + \epsilon^2m_2+ \dots,
\end{eqnarray}
\end{subequations}
where, as usual~\cite{PavlSt08}, we take $\{ p_j = p_j\left(x ,\cdot, t \right), \; j=0, 1, \dots \}$ to be $L-$periodic in their second argument. Substituting~\eqref{e:expans-two-scale} into~\eqref{eq:FP_particular} and~\eqref{eq:selfconsistency_1} and using the standard tools from the theory of periodic homogenization, e.g. Fredholm's alternative, we obtain the homogenized equation~\eqref{eq:FP_for_homogenized}, satisfied by the marginal of the first term in the two-scale expansion $p(x,t) = \int_0^L p(x,y,t) \, dy$ and with the partial free energy $\psi(x)$ given by~\eqref{eq:small_psi} and with 
$$
m(t):=m_0(t) = \int_{\R} \int_0^L x p_0(x,y,t) \, dx dy.
$$
The convergence of $m^{\eps}(t)$ to $m(t)$ can be justified using the a priori estimates on moments of the solution to the McKean-Vlasov equation that were derived in~\cite{Arnold1996}, in particular~\cite[Eqns (3.1), (3.2)]{Arnold1996}).
 
Alternatively, we can work with the backward Kolmogorov equation: we recall that Eqn.~\eqref{eq:FP_particular} corresponds to the McKean SDE
\begin{equation}
\label{eq:SDE_for_FP_multiscale}
dx_t =-\left[ {V^{\epsilon}}^{\prime} \left(x_t\right) + \theta(x_t-m^\epsilon)\right]\ dt + \sqrt{2\beta^{-1}} dB_t,
\end{equation}
with $V^{\eps}(x) = V \left(x, \frac{x}{\eps} \right)$.
We introduce the auxiliary variable $y_t = \frac{x_t}{\epsilon}$, see, e.g.~\cite{PavlSt06}, and using the chain rule, we can write \eqref{eq:SDE_for_FP_multiscale} as a system of interacting diffusions across scales, driven by the same Brownian motion,
\begin{align}
dx_t & = -\left[ \partial_xV\left(x_t,y_t\right) +\frac{1}{\epsilon}\partial_y V\left(x_t,y_t\right) + \theta(x_t-m^\epsilon)\right]\ dt + \sqrt{2\beta^{-1}} dB_t,\\
dy_t & = -\left[ \frac{1}{\epsilon}\partial_xV \left(x_t,y_t\right) +\frac{1}{\epsilon^2}\partial_yV \left(x_t,y_t\right) + \frac{\theta}{\epsilon}(x_t-m^\epsilon)\right]\ dt + \sqrt{\frac{2\beta^{-1}}{\epsilon^2}} dB_t.
\end{align}
We start by expanding the first moment $m^\eps$ in powers of $\eps$ as in~\eqref{eq:xi_expansion}. The backward Kolmogorov equation for the observable $u^{\eps}(x,y,t) = \E (f(x^{\eps}_t, y^{\eps}_t) | x^{\eps}_0 = x, y^{\eps}_0 = y)$ reads (neglecting terms of $O(\eps)$ that are due to the expansion of $m^\eps$)
\begin{subequations}
\begin{eqnarray}\label{e:back_kolm}
\frac{\partial u^{\eps}}{\partial t} &=& \left(\frac{1}{\epsilon^2}\mathcal{L}_0+\frac{1}{\epsilon}\mathcal{L}_1+\mathcal{L}_2\right ) u^{\eps},
    \\  u^{\eps}(x,y,0) & = & f(x,y),
\end{eqnarray}
\end{subequations}
with 
\begin{eqnarray*}
\cL_0 &=& - \partial_y V \partial_y - \beta^{-1} \partial_y^2, 
\\
 \cL_1 &=& -\left( \partial_x V -\theta (x - m_0) \right) \partial_y -  \partial_y V \partial_x  - 2 \beta^{-1} \partial_x \partial_y, 
\\
 \cL_2 & = & - \left( \partial_x V -\theta (x - m_0) \right) \partial_x - \theta m_1 \partial_y - \beta^{-1} \partial_x^2,
\end{eqnarray*}
We can now proceed with the analysis of~\eqref{e:back_kolm}, first for the choice $f(x) = x$, i.e. the evolution of the first moment, and then for arbitrary observables. We obtain, thus, the homogenized backward Kolmogorov equation, from which we can read off the homogenized McKean SDE and the corresponding Fokker-Planck equation:  
\begin{equation}
\label{eq:FP_for_homogenized-1}\frac{\partial p}{\partial t} =\frac{\partial}{\partial x}\left[ \beta^{-1}\frac{\partial\left(\M(x) p\right)}{\partial x} + \M(x)\Big( V'_0(x) +\psi'(x) + \theta\left( x - m(t)\right)\Big)p  + \beta^{-1}\frac{\partial \M(x)}{\partial x}p\right], 
\end{equation}
where $\psi(x) = -\beta^{-1}\ln \left( \int_0^L e^{-\beta V_1(x,y)} \ dy \right)$ 
and $\M(x)$ is defined in~\eqref{eq:diff_coeff}. For the sake of brevity we will omit the details.
%
%
\section{Numerical Simulations}\label{sec:numerics}
In this section we construct the bifurcation diagram  for the stationary McKean-Vlasov equation (both for finite values of $\eps$ and in the homogenization limit), present the results of Monte Carlo (MC) simulations based on the numerical solution of the particle/SDE approximation, and we also solve the time-dependent McKean-Vlasov PDE. Our goal is to investigate numerically the issue of (lack of) commutativity of the mean field and homogenization limits. We consider interacting diffusions (and the corresponding McKean-Vlasov) in one dimension and we study two types of large-scale and fluctuating parts of the potential. We consider both convex and nonconvex potentials, and both additive (separable) and multiplicative (nonseprarable) fluctuations. The four potentials that we use for our simulations are tabulated in Table~\ref{tab:pot}. 
\begin{table}[h!]
\centering
\begin{tabular}{|c|c|c|}
\hline
Confining potential $V_0(x)$ & Fluctuating potential $V_1(x)$ & Case \\
\hline
\multirow{2}{*}{$V_0^c(x) = \frac{x^2}{2}$} & $V_1^+(x) = \delta\cos\left(\frac{x}{\epsilon}\right)$   & $1$\\
\cline{2-3}
							 & $V_1^\times(x) = \delta\chi_{[-a,a]}(x)\frac{x^2}{2}\cos\left(\frac{x}{\epsilon}\right)$   & $2$\\
\hline
\multirow{2}{*}{$V_0^b(x) = \frac{x^4}{4} - \frac{x^2}{2}$} & $V_1^+(x) = \delta\cos\left(\frac{x}{\epsilon}\right) $  & $3$\\
\cline{2-3}
										 & $V_1^\times(x) = \delta\chi_{[-a,a]}(x)\frac{x^2}{2}\cos\left(\frac{x}{\epsilon}\right)$   & $4$\\
 \hline 
\end{tabular}
\caption{Potentials used for the numerical simulations.\label{tab:pot}}
\end{table}
We remark that the nonseparable fluctuations $V_1^\times(x)$ are truncated outside the interval $[-a,a]$ in order to prevent the oscillations from growing as $|x| \rightarrow +\infty$.\footnote{In Table~\ref{tab:pot} we denote by $\chi_A$ the characteristic function of the set $A$.} We note that this is necessary for the proof of the homogenization theorem in~\cite{DuncanPavliotis2016} and that, furthermore, it ensures that the a priori estimates on the moments from~\cite{Arnold1996} hold.\footnote{The moment bounds in~\cite{Arnold1996} were obtained for confining potentials with no oscillatory terms. However, it can be checked that they are also valid for the class of fluctuating potentials that we consider in this work, and that they provide us with bounds on the moments that uniform in $\eps$.}

Throughout this section we consider fluctuations which have period $L=2\pi$. 
In all cases, we will consider the Curie-Weiss interaction potential $F(x) = \frac{x^2}{2}$ and throughout Sections~\ref{sec:homog_first} and~\ref{sec:mean_field_first} we will fix the interaction strength to be $\theta = 5$. We choose this value because larger values of $\theta$ allow for bifurcations to occur at higher temperatures, i.e. lower $\beta$, which is easier to handle numerically. In fact, the relevant bifurcation parameter for our problem is given by the combination $\beta\theta$, see Eqn.~\eqref{e:critical_temp}. Fixing $\theta$ allows us to construct the bifurcation diagram by varying only the temperature.  It is also clear from Eqn.~\eqref{e:critical_temp} that this equation has no solutions for negative values of $\theta$, i.e. that no (pitchfork) bifurcations can occur for $\theta < 0$.

Using Eqn.~\eqref{eq:Diffusion_coeff}, we note that the diffusion coefficient for separable fluctuations in the potential is independent of $x$ and is given by
\begin{equation}
\label{eq:Diffusion_coeff_additive}
\M^{+}(x) = \frac{1}{\left(\frac{1}{ 2 \pi}\int_0^{2\pi} e^{-\beta V_1^{+}(x,y)} \ dy\right)\left(\frac{1}{ 2 \pi}\int_0^{2\pi} e^{\beta V_1^{+}(x,z)} \ dz\right)} = \frac{1}{I_0(\beta)I_0(-\beta)},
\end{equation}
where $I_0(\cdot)$ is the modified Bessel function of the first kind~\cite{Pavliotis_al2016}. On the other hand, for nonseparable fluctuations (cases $2$ and $4$ in Table~\ref{tab:pot}) we obtain
\begin{equation}
\label{eq:Diffusion_coeff_multiplicative}
\M^{\times}(x) = \frac{1}{\left(\frac{1}{ 2 \pi}\int_0^{2\pi} e^{-\beta V_1^{\times}(x,y)} \ dy\right)\left(\frac{1}{ 2 \pi}\int_0^{2\pi} e^{\beta V_1^{\times}(x,z)} \ dz\right)} = \frac{1}{I_0\left(\beta\frac{x^2}{2}\right)I_0\left(-\beta\frac{x^2}{2}\right)}.
\end{equation}
Furthermore, we obtain the following formulas for the partition functions
\begin{equation}
\label{eq:1d-Z}
\Z^{+}(x) = e^{-\beta\left(V_0(x)+\theta\left(\frac{x^2}{2}-m x\right)\right)}I_0(\beta), \quad \Z^{\times}(x) = e^{-\beta\left(V_0(x)+\theta\left(\frac{x^2}{2}-m x\right)\right)}I_0\left(\beta\frac{x^2}{2}\right).
\end{equation}
We can now solve the selfconsistency equation~\eqref{e:self-consist} and the equation for the critical temperature~\eqref{e:critical_temp} for the various potentials given in Table~\ref{tab:pot}. We will track each branch of the bifurcation diagram using arclength continuation, which will enable us to plot the first moment $m$ as a function of the inverse temperature $\beta$ for a fixed value of the interaction strength $\theta$. We do this using the Moore-Penrose quasi arclength continuation algorithm.\footnote{Rigorous mathematical construction of the arclength continuation methodology can be found, e.g., in~\cite{Krauskopf} and~\cite{ContAllgower}. Some useful practical aspects of implementing arclength continuation are also given in~\cite{matcont}. 
We use \textsc{Matlab}'s toolboxes to compute the integrals in~\eqref{e:selfconsist} and~\eqref{eq:variance_with_Rm} and thus need to solve
\[
\mathbf{F}([p,m]) = \left[
\begin{array}{c}
p - p_\infty(x;m,\beta,\theta)\\
m-R(m;\theta,\beta) \end{array}
\right]
= 0, \quad \textrm{ and } \quad
\mathbf{G}(\beta) = \beta - \frac{1}{\theta\int x^2p_\infty(x;0,\beta,\theta) \ dx} = 0,
\]
where $p_\infty(x;m,\beta,\theta)$ is a stationary solution of the McKean-Vlasov equation. We start the algorithm at a sufficiently large $\beta_0$, i.e. at a sufficiently low temperature for which we have a good initial guess for the value of the order parameter.} The stability of each branch was determined in two different ways: first, we checked whether it corresponded to a local minimum or maximum of the confining potential. 
Second, we solved the time-dependent McKean-Vlasov equation -- see details in Section~\ref{sec:evolution}--using a perturbation of the steady state belonging to each branch (for a particular value of $\beta$ and $\theta$) as initial condition.
Finally, we have confirmed the stability of each branch by computing the free energy~\eqref{e:free-energy} of a steady state from that branch at a particular value of $\beta$, chosen so that all the branches plotted were present.
Stable branches, plotted in blue in all the figures presented in this section, correspond to local minimizers of the free energy functional; unstable branches, plotted in red, correspond to local maxima of the free energy.

%
%
\subsection{Mean field limit of the homogenized system of SDEs - The $\epsilon\rightarrow 0, \, N\rightarrow\infty$ limit}\label{sec:homog_first}
As discussed before (see discussion of Corollary~\ref{prop:additive_fluctuations}), when the fluctuations are separable the partial free energy $\psi(x)$ defined in Eqn.~\eqref{eq:small_psi} drops out from the homogenized stationary Fokker-Planck equation. This implies, in particular, that the invariant measure(s) of the homogenized dynamics is(are) independent of the fluctuating part of the potential. In particular, there are still no phase transitions when the large-scale part of the potential is convex and still only one pitchfork bifurcation for the bistable potential case -- see Figure~\ref{fig:homog_bistable} -- where two new, stable, branches emerge from the zero mean solution. We note that in this case the homogenized confining potential in the homogenized equation depends on the inverse temperature $\beta$; see the inside panels in Figure~\ref{fig:homog_bistable}. In particular, the values of the 
local minima of the effective potential are  shifted, although their location remains the same, and there are no changes in the topology of the bifurcation diagrams.

For nonseparable fluctuations, the mean field and homogenization limits do not commute (see Prop.~\ref{prop:commutative}). In fact, the homogenization procedure can convexify the effective potential, and we still see no bifurcations when the large-scale part of the potential is convex, while for the bistable potential there is still only one phase transition. The effect of fluctuations on the bifurcation diagram is visible by a shift of the critical temperature at which the phase transition occurs.

Since there are no phase transitions for the convex potential (cases $1$ and $2$ in Table~\ref{tab:pot}) we do not present numerical results for this case. We present in Figure~\ref{fig:homog_bistable} the plots of $R(m;\theta,\beta)$ and the bifurcation diagrams for the bistable potential with separable and nonseparable fluctuations (cases $3$ and $4$, respectively). We observe that, for nonseparable fluctuations, the function $R(m;\theta,\beta)$ is flat around $m=0$;  see Figure~\ref{fig:bistable_mult_homog}. As we have already mentioned, the topology of the bifurcation diagram does  not change, in comparison to that of the bistable potential $V_0^b(x)$ with no fluctuations (see Figure~\ref{fig:bif_bistable_intro} 
for this case); thus, the effect of fluctuations is only observed by a shift in the critical temperature.

\begin{figure}[h!]
\begin{subfigure}{0.32\textwidth}
\includegraphics[width=\textwidth]{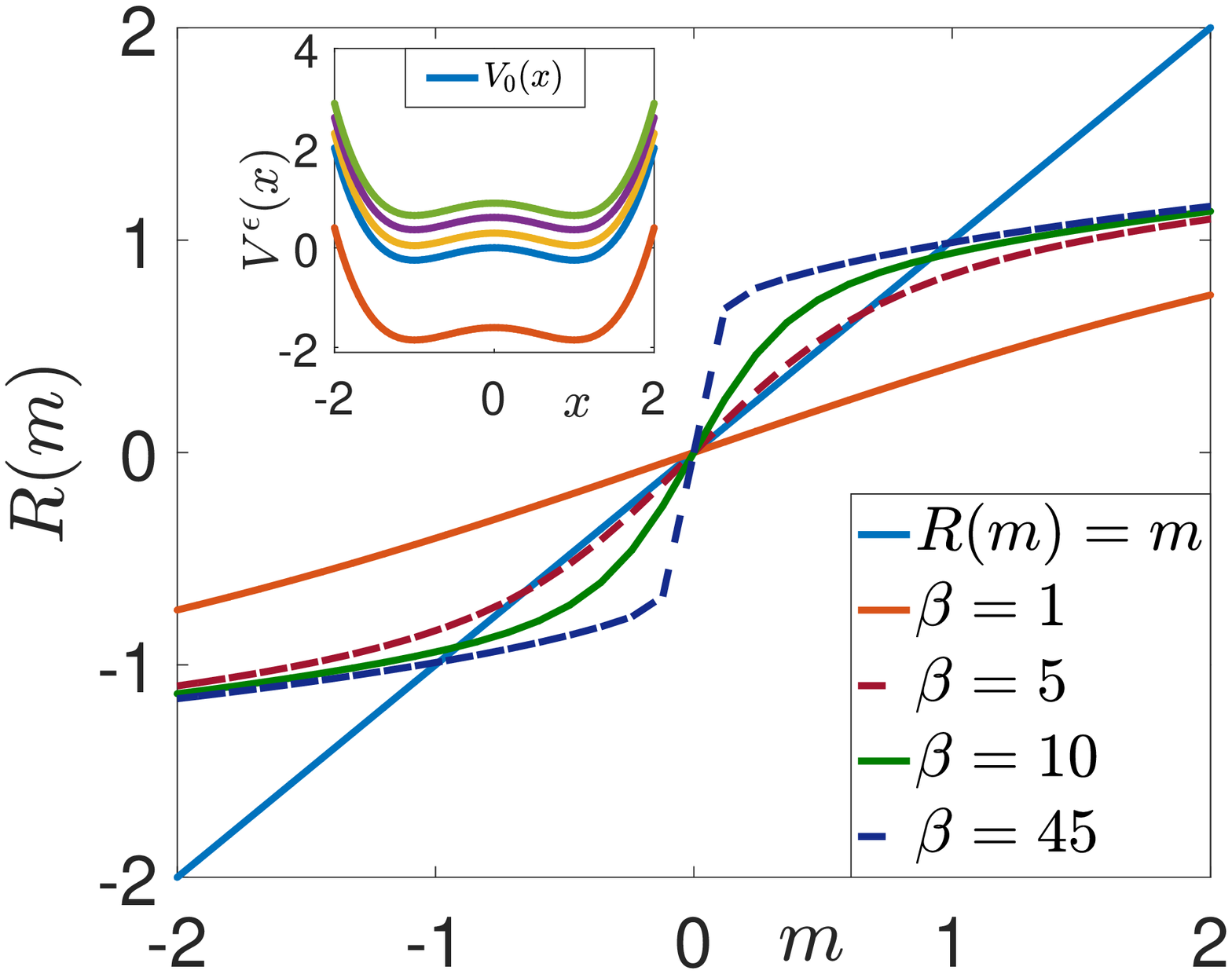}
\caption{Separable }\label{fig:bistable_add_homog}
\end{subfigure}
\begin{subfigure}{0.32\textwidth}
\includegraphics[width=\textwidth]{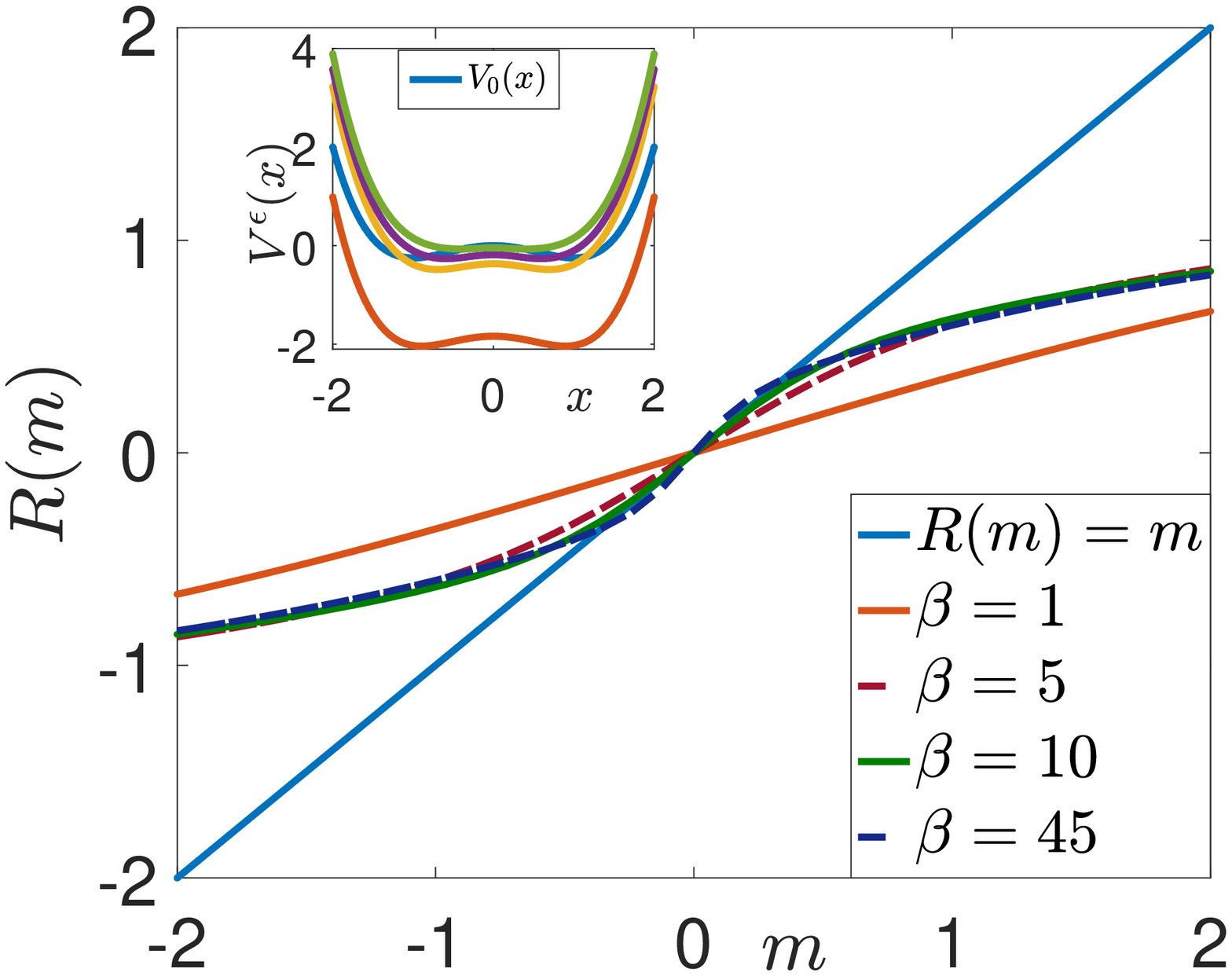}
\caption{Nonseparable}\label{fig:bistable_mult_homog}
\end{subfigure}
\begin{subfigure}{0.32\textwidth}
\includegraphics[width=\textwidth]{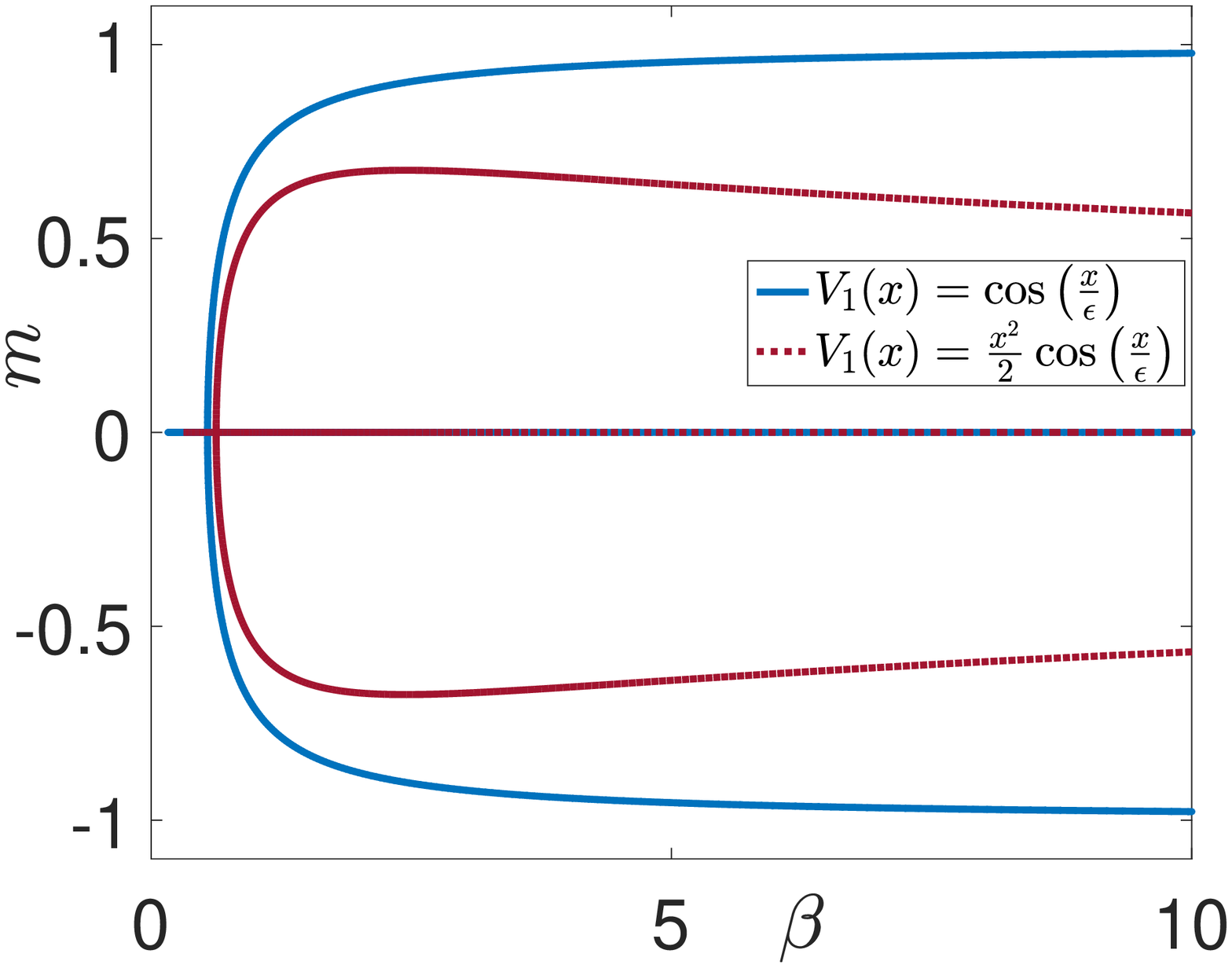}
\caption{Bifurcation diagram}\label{fig:bistable_bif_homog}
\end{subfigure}	
	\caption{Plot of $R(m;\theta,\beta)$ compared to the diagonal $y=x$ ( $R(m;\theta,\beta)=m$) for $\theta = 5,   \delta = 1, \, a = 5$ and various values of $\beta$ for the homogenized  bistable potentials with~\eqref{fig:bistable_add_homog} separable fluctuations (potentials for various values of $\beta$ shown on the inside panel), and  \eqref{fig:bistable_mult_homog}  nonseparable fluctuations (potentials for various values of $\beta$ shown on the inside panel).  \eqref{fig:bistable_bif_homog}  Bifurcation diagram of $m$ as a function of $\beta$ for the potentials in \eqref{fig:bistable_add_homog} (full line) and \eqref{fig:bistable_mult_homog} (dashed line).}
	 \label{fig:homog_bistable}
 \end{figure}

%
%
\subsection{Mean field limit of the multiscale system of SDEs: effects of finite $\epsilon$}\label{sec:mean_field_first}
In this section we present numerical results on the bifurcation diagram when we first pass to the mean field limit, while keeping $\eps$ small but finite. We are particularly interested in the finite $\eps$ effects on the bifurcation diagrams for the two-scale potentials presented in Table~\ref{tab:pot}.

\subsubsection{Convex confining potential with separable and nonseparable fluctuations}
We first consider Case $1$ in Table~\ref{tab:pot}: a convex large-scale potential with separable fluctuations. We present in Figure~\ref{fig:convex_potential_additive} the solution to the selfconsistency equation $R(m;\theta,\beta)=m$, the two-scale potential, and the bifurcation diagram for this case.
\begin{figure}[h!]
\begin{subfigure}{0.5\textwidth}
\includegraphics[width=\linewidth]{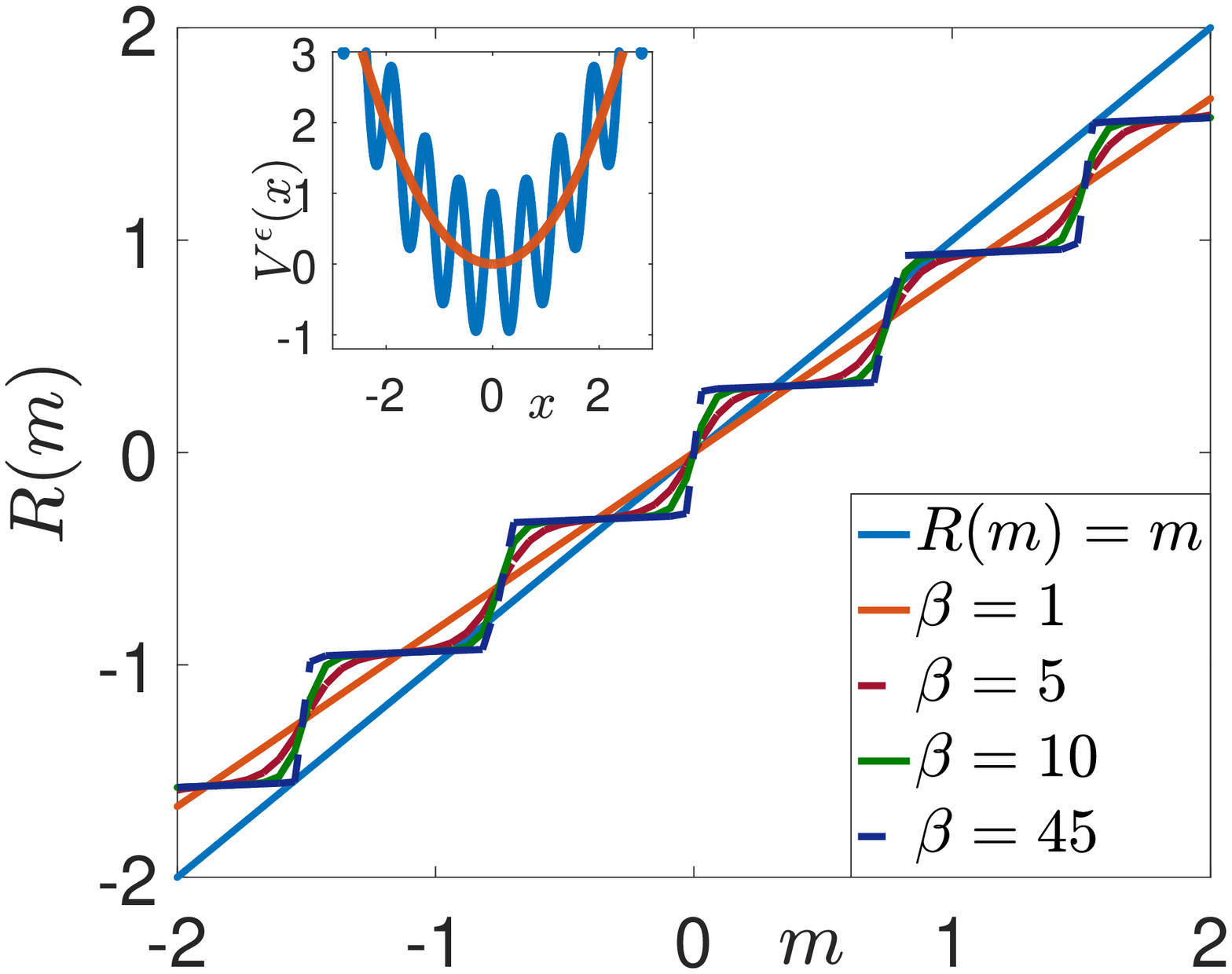}
\caption{$R(m;\theta,\beta)$}\label{fig:conv_add_mean}
\end{subfigure}
\begin{subfigure}{0.5\textwidth}
\includegraphics[width=\linewidth]{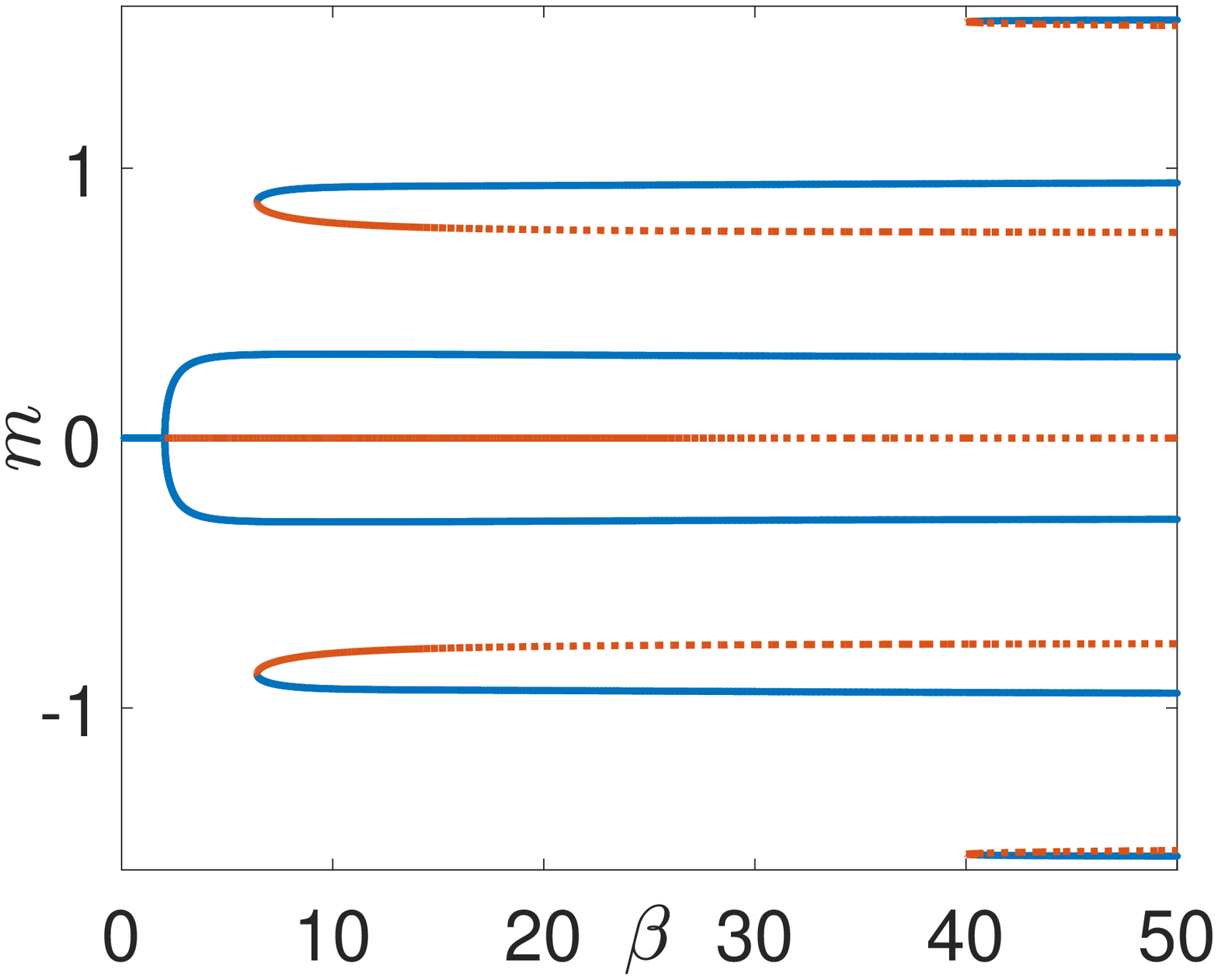}
\caption{Bifurcation Diagram}\label{fig:conv_add_bif}
\end{subfigure}
	 \caption{Results for case $1$: convex $V_0^c$ with separable fluctuations, for $\theta = 5, \, \delta = 1, \, \epsilon = 0.1$. \eqref{fig:conv_add_mean} $R(m^\epsilon;\theta,\beta)$ for various values of $\beta$, with the potential $V^\epsilon(x)$ (full line) compared with $V_0^c(x) $ (dashed line) in the inside panel. \eqref{fig:conv_add_bif} Bifurcation diagram of $m$ as a function of $\beta$. Full lines correspond to stable solutions, while dashed lines represent unstable ones.}
	 \label{fig:convex_potential_additive}
 \end{figure}
For all finite values of $\epsilon$, the resulting potential is nonconvex. This results in the selfconsistency equation having multiple solutions (in fact, as $\epsilon\rightarrow 0$, there are infinitely many solutions). In addition to the emerging pitchfork bifurcation (second order, or continuous, phase transition), we observe the emergence of discontinuous branches that correspond to metastable states, since they are not (global) minimizers of the free energy; see the results presented in Table~\ref{tab:FreeEnergy}. 

Next, we consider the second case in Table~\ref{tab:pot}: a convex large scale potential $V_0^c(x)$ with nonseparable fluctuations. Similarly, we present in Figure~\ref{fig:convex_potential_multiplicative} the solution to the selfconsistency equation $R(m^\epsilon;\theta,\beta)=m^\epsilon$, the two-scale potential, and the bifurcation diagram.
We note that, as we mentioned before, we restrict the nonsparable fluctuations to a finite interval. In our computations we use $a=5$, in the characteristic function in Table~\ref{tab:pot}.

\begin{figure}[h!]
\hskip-0.75cm
\begin{subfigure}{0.5\textwidth}
\includegraphics[width=0.95\linewidth]{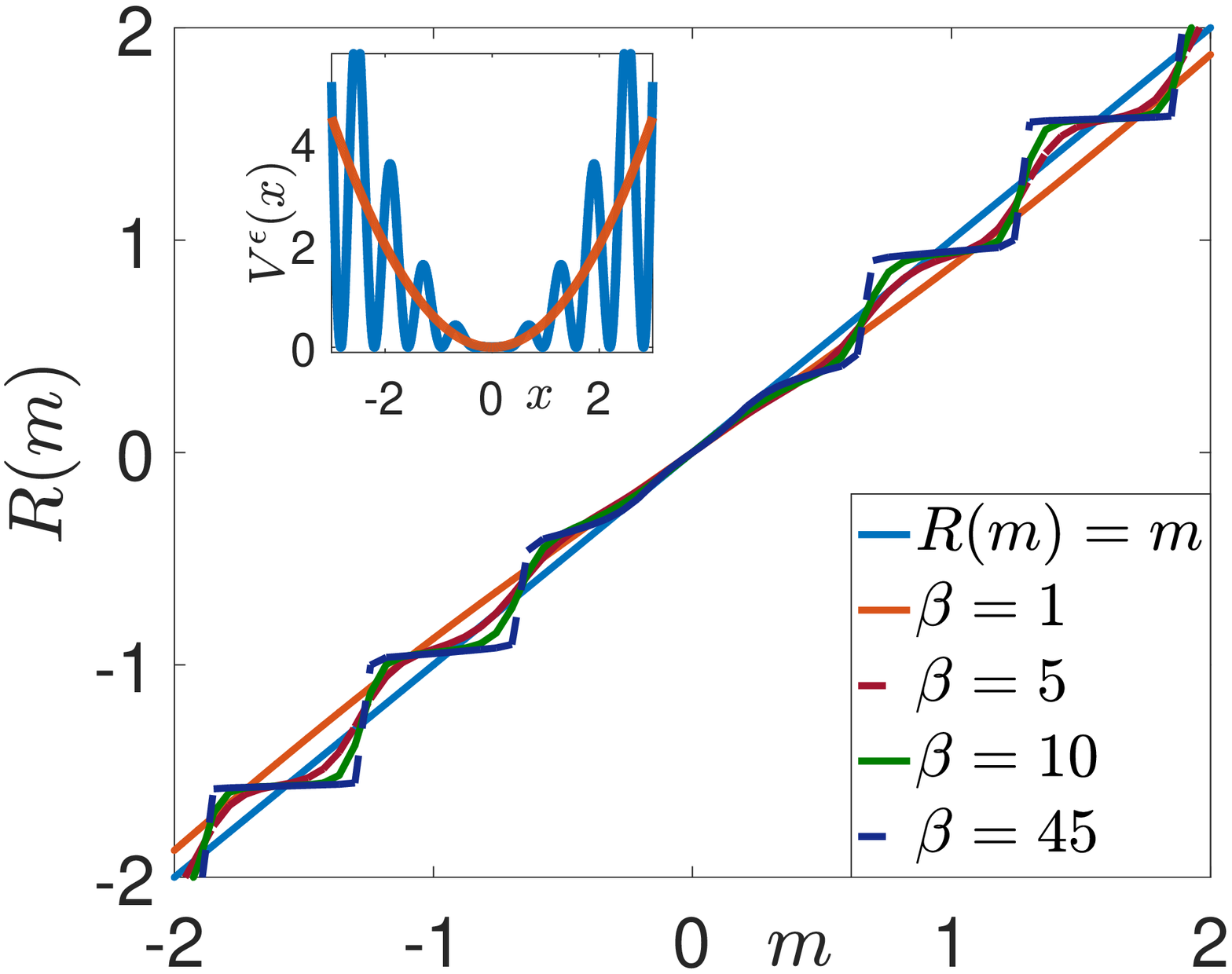}
\caption{$R(m;\theta,\beta)$}\label{fig:conv_mult_mean}
\end{subfigure}
\begin{subfigure}{0.5\textwidth}
\includegraphics[width=0.95\linewidth]{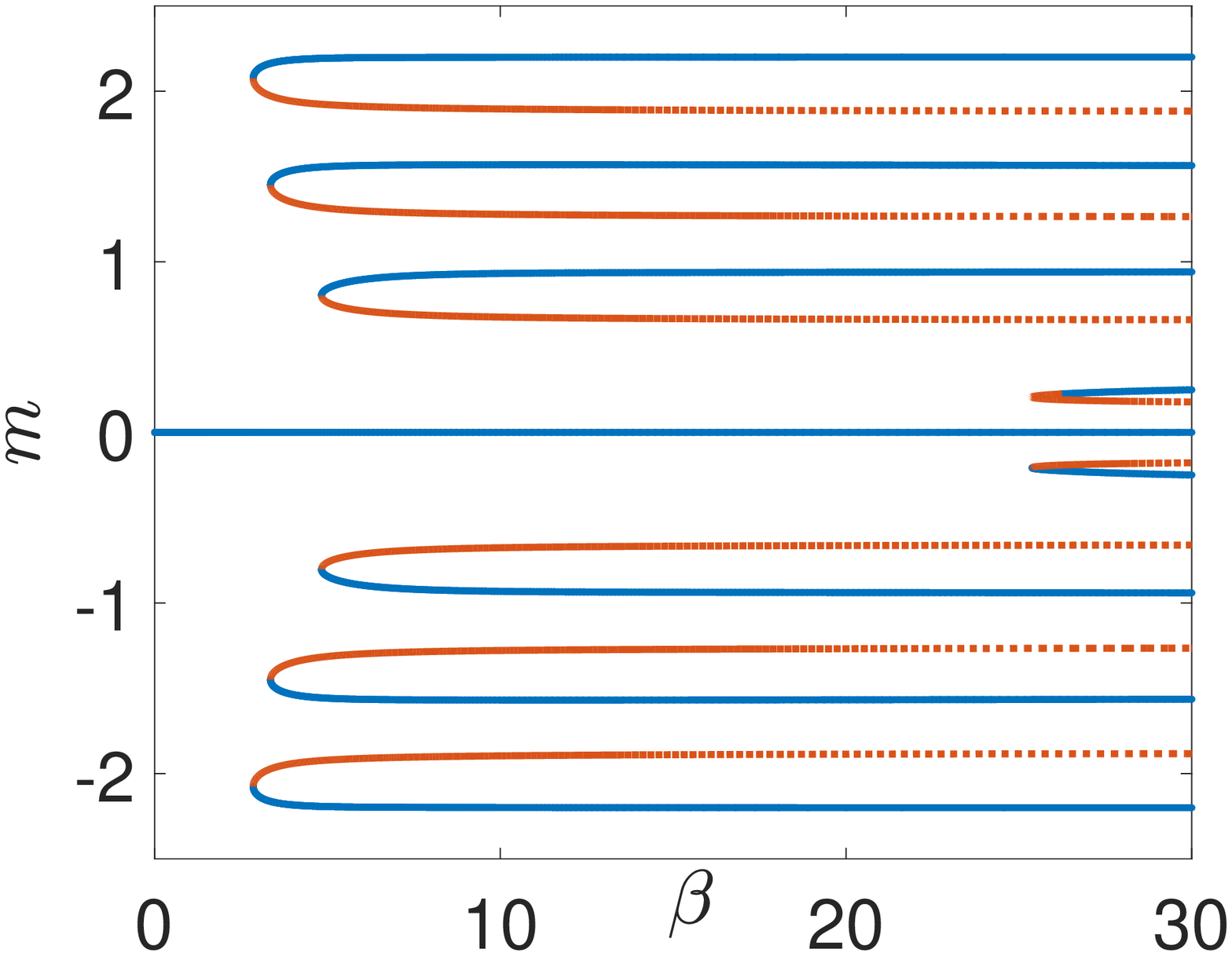}
\caption{Bifurcation diagram}\label{fig:conv_mult_bif}
\end{subfigure}

\begin{subfigure}{0.5\textwidth}
\centering
\begin{tabular}{|c|c|}
\hline
$\beta$ & $45$ \\
\hline
\multirow{9}{*}{\rotatebox[origin=c]{90}{Free Energy}} & $0.1441$ \\
							     & $0.3684$ \\
							     & $0.1433$ \\
							     & $0.3184$ \\
							     & $0.0976$ \\
							     & $0.2425$ \\
							     & $0.0625$ \\
							     & $0.0630$ \\
							     & $0.0586$ \\
\hline 
\end{tabular}
\caption{Free energy values at $\beta = 45$}\label{fig:conv_mult_FE}
\end{subfigure}
\hskip-0.85cm
\begin{subfigure}{0.5\textwidth}
\includegraphics[width=5.0cm,height=5.0cm]{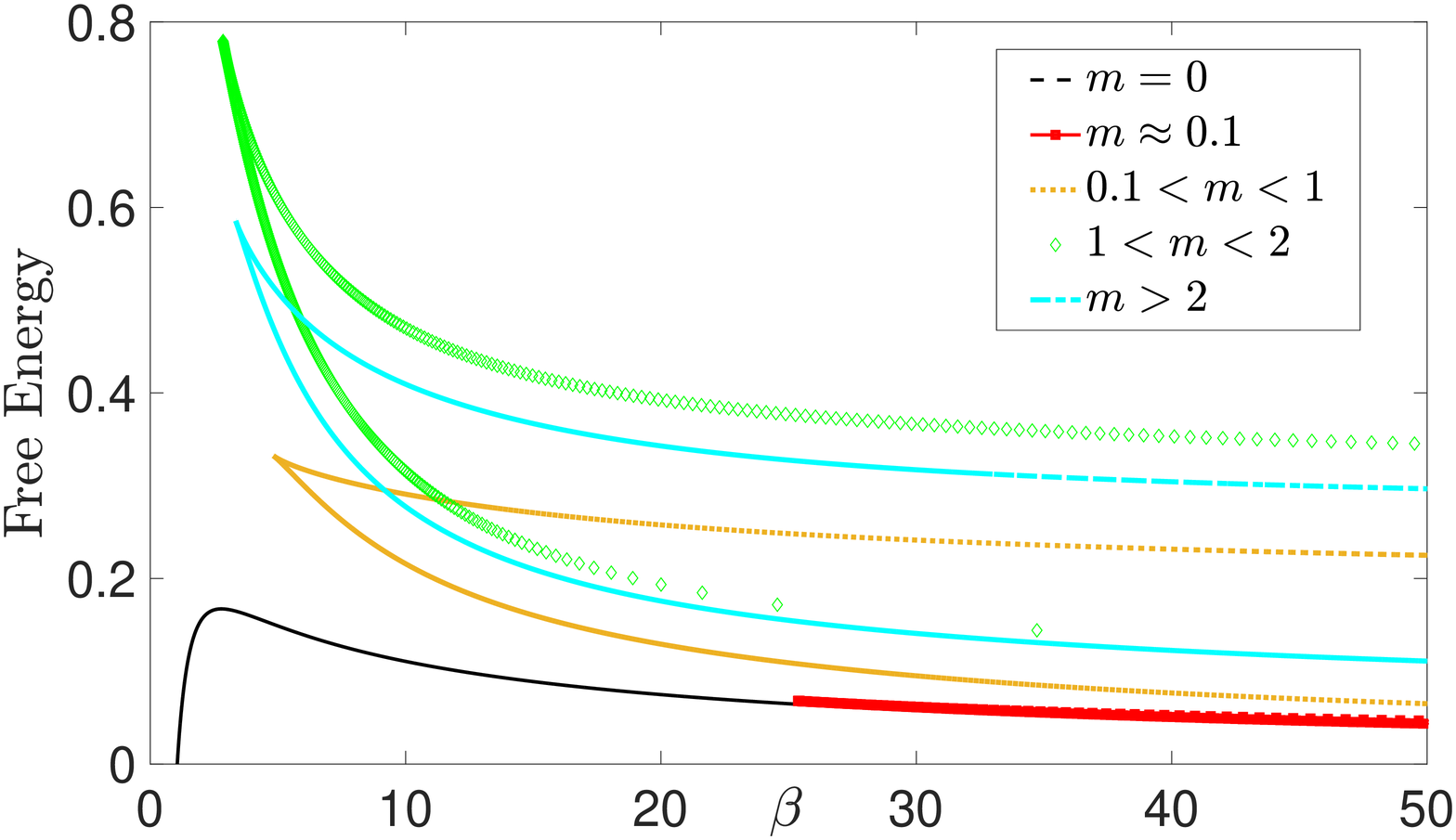}
\caption{Free energy of each branch}\label{fig:conv_mult_FEf}
\end{subfigure}
\caption{Results for case $2$: convex $V_0$ with nonseparable fluctuations, for $\theta = 5, \, \delta = 1, \, \epsilon = 0.1$. \eqref{fig:conv_mult_mean} $R(m;\theta,\beta)$ for various values of $\beta$, with the potential $V^\epsilon(x)$ (full line) compared with $V_0^c(x) $ (dashed line) in the inside panel. \eqref{fig:conv_mult_bif} Bifurcation diagram of $m$ as a function of $\beta$. Full lines correspond to stable solutions, while dashed lines represent unstable ones. \eqref{fig:conv_mult_FE} Values of the freeenergy of the steady state in each branch of~\eqref{fig:conv_mult_bif} for $\beta = 45$. \eqref{fig:conv_mult_FEf} Free energy of each branch of the bifurcation diagram.}
\label{fig:convex_potential_multiplicative}
\end{figure}

We observe in Figure~\ref{fig:conv_mult_bif} that no pitchfork bifurcations appear; all new branches that appear do not emerge continuously from the mean zero solution. This is due to the flatness observed in the potential around $m=0$ (see Figure~\ref{fig:conv_mult_mean}). Furthermore, the mean zero solution remains the global minimizer of the free energy for all values of $\beta$. This is tabulated in Table~\ref{fig:conv_mult_FE}. The free energies of the different branches are presented in Figure~\ref{fig:conv_mult_FEf}. These new branches correspond to metastable states.

We have checked the stability of each branch by computing the free energy~\eqref{e:free-energy} of a steady state from that branch at a particular value of $\beta$, chosen so that all the branches plotted were present.  We summarize the results in Table 2. Since we only consider symmetric potentials, it is sufficient to calculate the free energy for the branches with, say, nonnegative values of $m$. 
In each column of Table~\ref{tab:FreeEnergy}, the values of the free energy are presented from the branch with largest value of $m$ to the lowest; 
the last value presented in each column corresponds to the branch with $m=0$. We summarize the results in Table~\ref{tab:FreeEnergy}.

\begin{table}[h!]
\centering
\begin{tabular}{|c|c|c|c|c|}
\hline
Figure & $6$ & $7$ & $8$ & $9$ \\
\hline
$\beta$ & $45$ & $29$ & $20$ & $8$\\
\hline
\multirow{9}{*}{Free Energy} & $\ 0.3080$ & $0.1441$ & $-0.5827$ & $-1.7409$ \\
& $\ 0.3066$ & $0.3684$ & $-0.5674$ & $-0.9933$ \\
& $-0.4600$ & $0.1433$ & $-1.0918$ & $-0.8241$ \\
& $-0.3908$ & $0.3184$ & $-0.7727$ & $\ 0.0856$ \\
& $-0.8593$ & $0.0976$ & $-0.8868$ & \\
& $-0.6514$ & $0.2425$ & $-0.6903$ & \\
& & $0.0625$ & & \\
& & $0.0630$ & & \\
& & $0.0586$ & & \\
\hline 
\end{tabular}
\caption{Free energy of a steady state in each branch of Figures~\ref{fig:convex_potential_additive}-\ref{fig:potential_multiplicative} for fixed values of $\beta$.\label{tab:FreeEnergy}}
\end{table}

We observe that the branch corresponding to a pitchfork bifurcation (i.e., second order phase transition), when present, has the lowest value of the free energy, 
i.e., it is the globally stable one. 
Furthermore, when a pitchfork bifurcation does not occur--see Figure~\ref{fig:convex_potential_multiplicative}--the branch corresponding to $m=0$ is the one with the lowest value of the free energy.
Finally, we observe that the stability of the branches in Figure~\ref{fig:bistable_mult_bif} does not alternate in the same manner as in the previous figures.  This is due to the flatteness of the potential around $x=0$ for nonseparable oscillations.

The results on the stability of the different branches that are reported in this section are preliminary. A more thorough study of the local (linear) and global stability of the stationary states of the McKean-Vlasov dynamics in multiwell potentials will be presented elsewhere. We mention in passing the early rigorous work on the global stability of the steady states for the McKean-Vlasov equation in~\cite{Tamura1987} as well as the careful study of the connection between the loss of linear stability of the uniform state and phase transitions for the McKean-Vlasov equation on the torus (without a confining potential) and with finite range interactions  in~\cite{ChayesPanferov2010}.
%
%
\subsubsection{Bistable confining potential with separable and nonseparable fluctuations}
Here we consider Cases $3$ and $4$ in Table~\ref{tab:pot}, the bistable potential $V_0^b(x)$. In this case, the large-scale potential exhibits a second order phase 
transition even in the absence of small-scale fluctuations (see the pitchfork bifurcation in Figure~\ref{fig:bif_bistable_intro}) due to the existence of two local minima for $V_0^b(x)$. We are interested in analyzing the topological changes that rapid oscillations in the potential induce to the bifurcation diagram.

We start with separable potentials--see Figure~\ref{fig:potential_additive}. We observe that the selfconsistency equation $R(m^\epsilon;\theta,\beta)=m^\epsilon$ 
exhibits a larger number of solutions for finite $\epsilon$, which, as for the convex case, result in the emergence of metastable states that are not continuously connected with the mean zero Gibbs state.

\begin{figure}[ht]
\begin{subfigure}{0.5\textwidth}
\includegraphics[width=\linewidth]{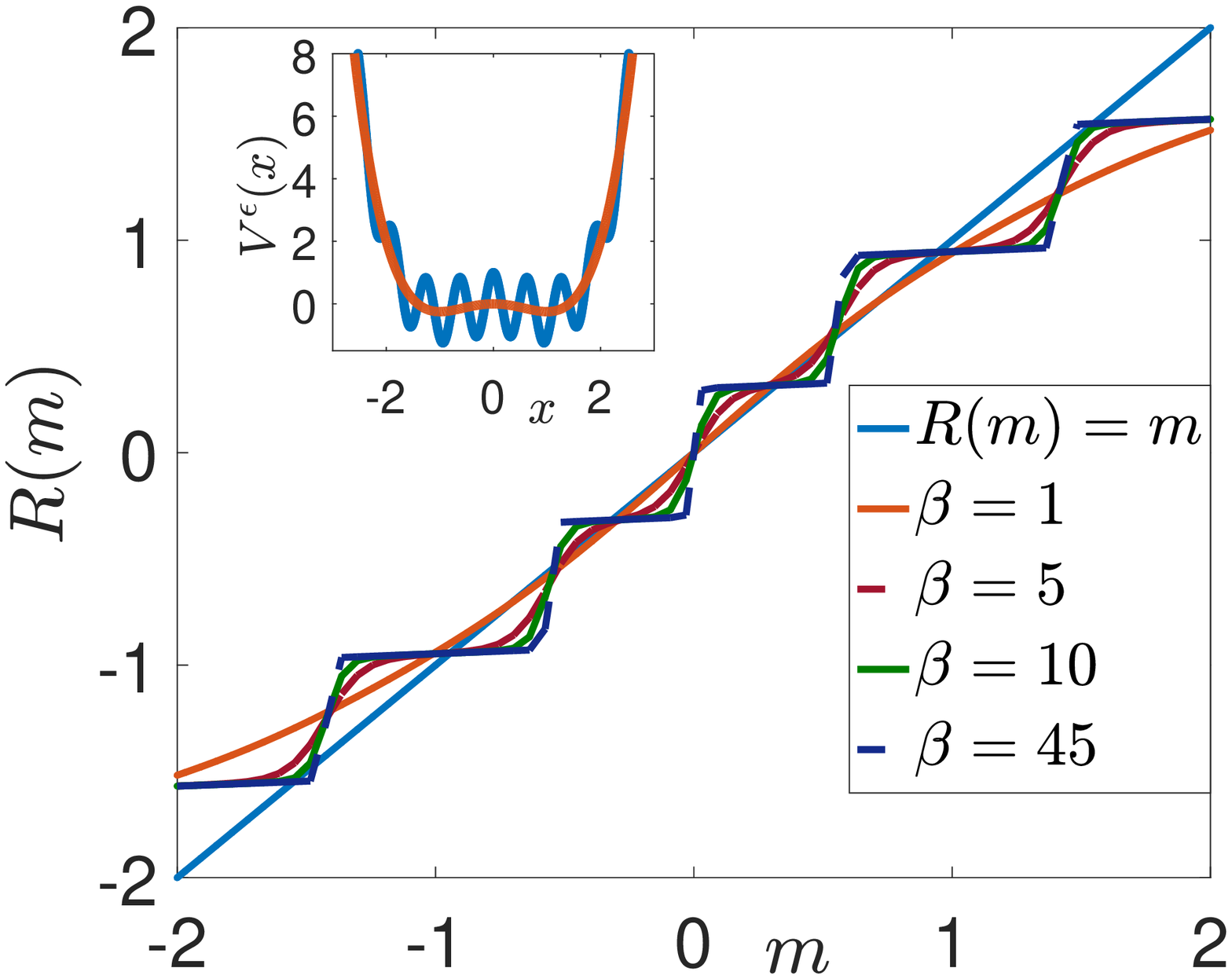}
\caption{$R(m;\theta,\beta)$}\label{fig:bistable_add_mean}
\end{subfigure}
\begin{subfigure}{0.5\textwidth}
\includegraphics[width=\linewidth]{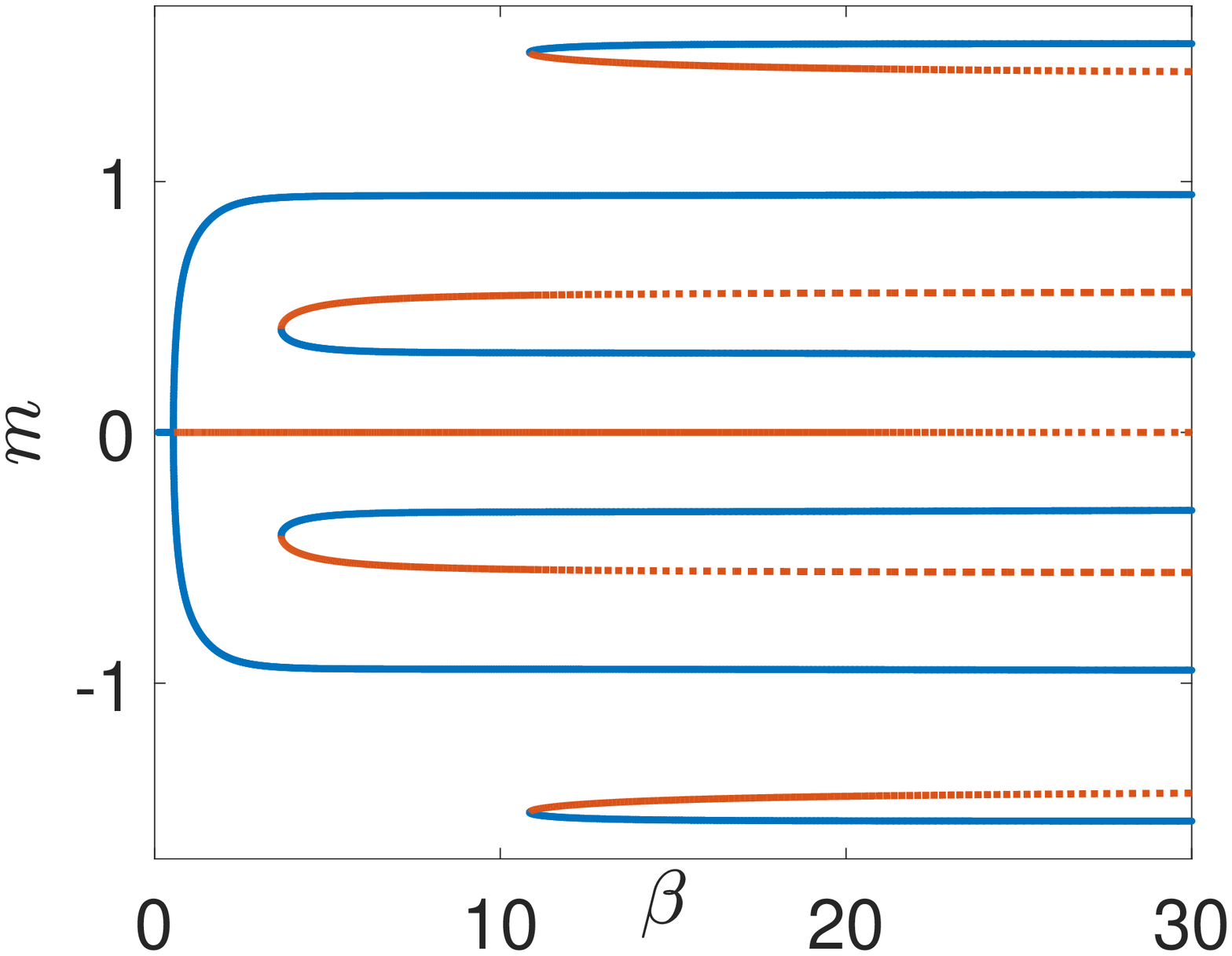}
\caption{Bifurcation diagram}\label{fig:bistable_add_bif}
\end{subfigure}
	\caption{Results for case $3$: bistable $V_0^b$ with separable fluctuations, for $\theta = 5, \, \delta = 1, \, \epsilon = 0.1$. \eqref{fig:bistable_add_mean} $R(m^\epsilon;\theta,\beta)$ for various values of $\beta$, with the potential $V^\epsilon(x)$ (full line) compared with $V_0^b(x) $ (dashed line) in the inside panel. \eqref{fig:bistable_add_bif} Bifurcation diagram of $m$ as a function of $\beta$. Full lines correspond to stable solutions, while dashed lines represent unstable ones.}
	 \label{fig:potential_additive}
 \end{figure}

Similarly, for the last case in Table~\ref{tab:pot}, case $4$ (bistable potential $V_0^b(x)$ and nonseparable fluctuations), there are more solutions to the selfconsistency equation. However, the flatness of the potential (and therefore of the curves $R(m;\theta,\beta)$ near $m=0$) reduces the number of additional branches. Moreover, the topological structure of the bifurcation diagram changes, and we now observe a nonparabolic curve for the main branch, which bifurcates from the mean zero solution via a pitchfork bifurcation. 

\begin{figure}[ht]
\begin{subfigure}{0.5\textwidth}
\includegraphics[width=\linewidth]{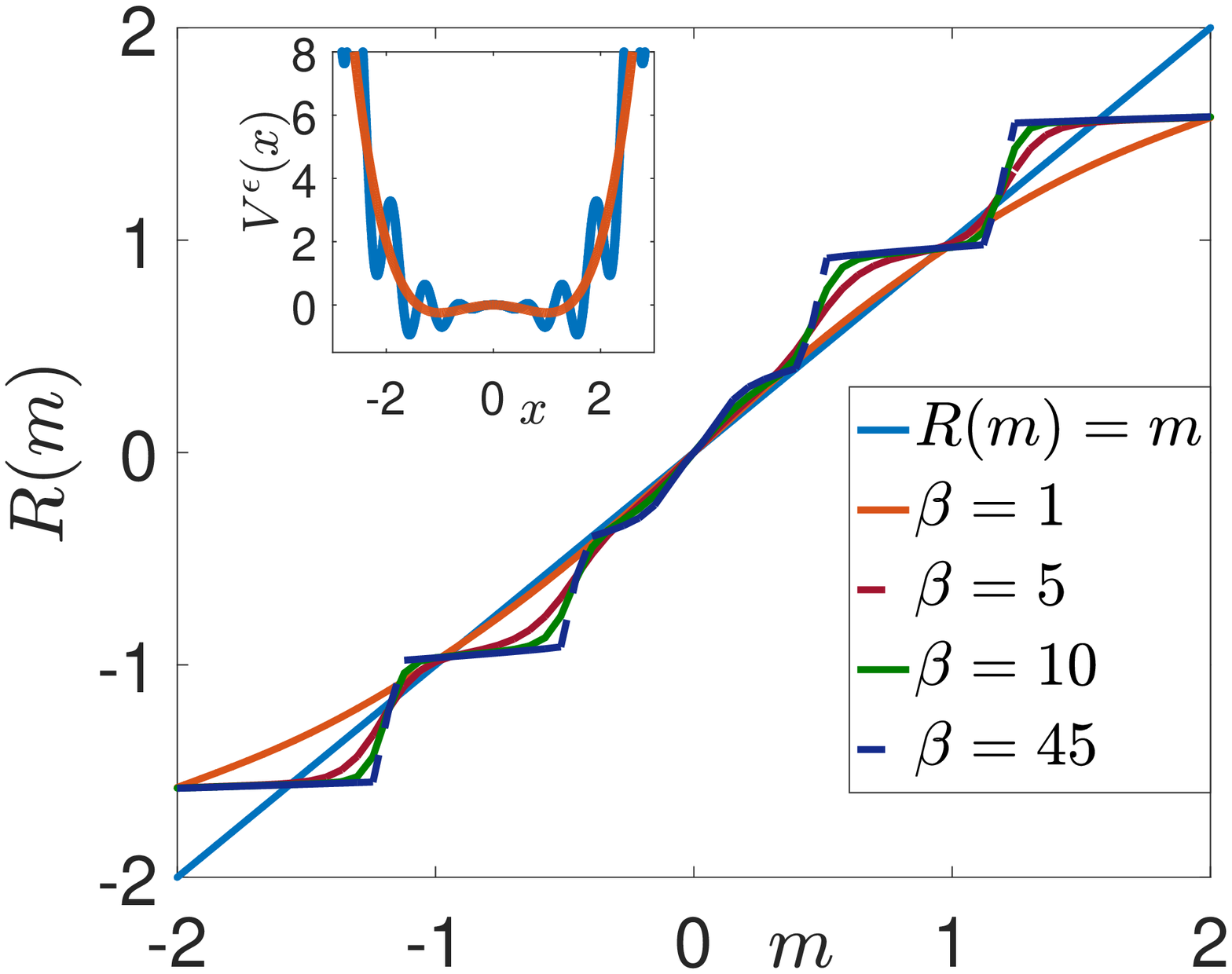}
\caption{$R(m;\theta,\beta)$}\label{fig:bistab_mult_mean}
\end{subfigure}
\begin{subfigure}{0.5\textwidth}
\includegraphics[width=\linewidth]{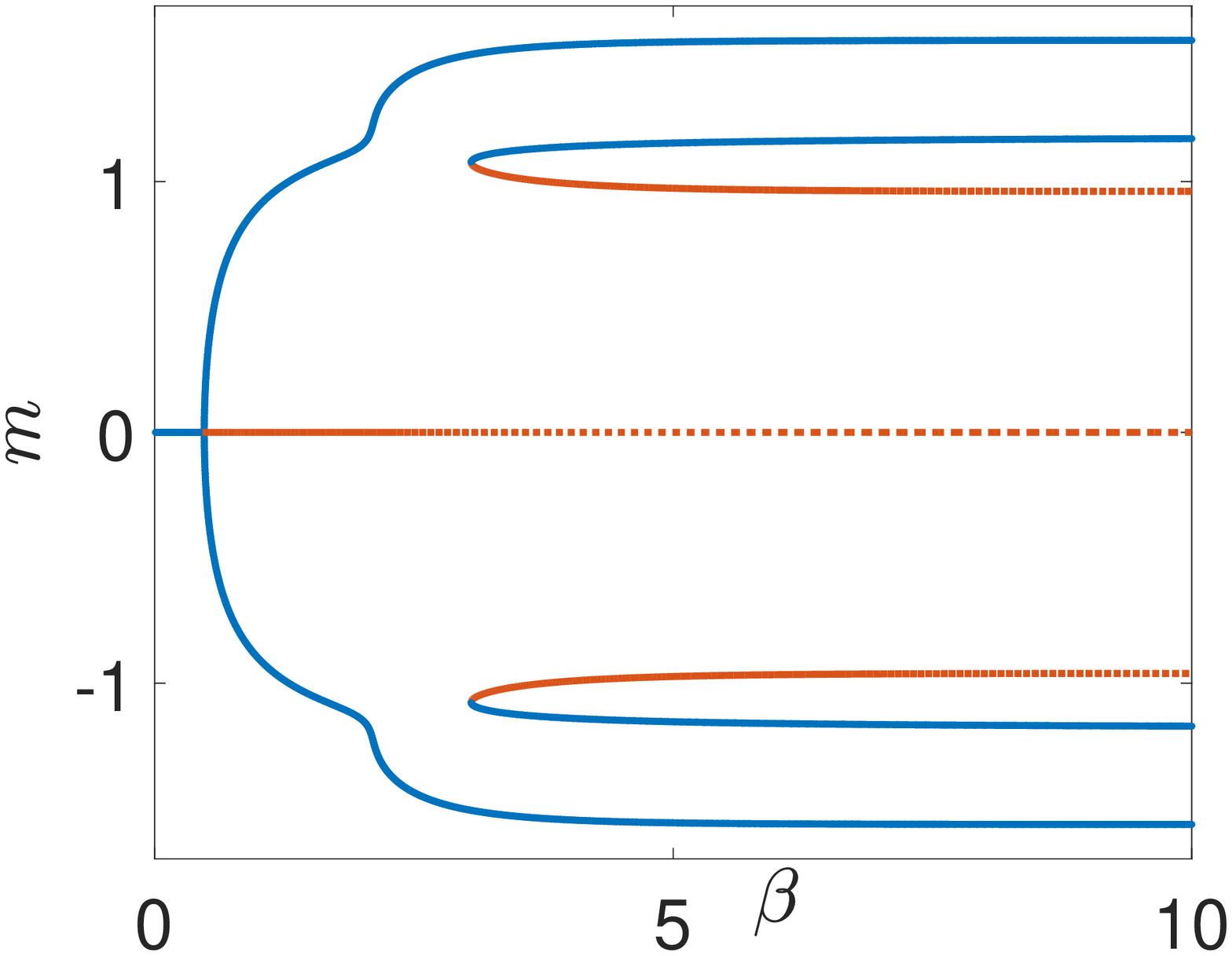}
\caption{Bifurcation diagram}\label{fig:bistable_mult_bif}
\end{subfigure}
	\caption{Results for case $4$: bistable $V_0^b$ with nonseparable fluctuations, for $\theta = 5, \, \delta = 1, \, \epsilon = 0.1$. \eqref{fig:bistab_mult_mean} $R(m^\epsilon;\theta,\beta)$ for various values of $\beta$, with the potential $V^\epsilon(x)$ (full line) compared with $V_0^b(x) $ (dashed line) in the inside panel. \eqref{fig:bistable_mult_bif} Bifurcation diagram of $m$ as a function of $\beta$. Full lines correspond to stable solutions, while dashed lines represent unstable ones.}
	 \label{fig:potential_multiplicative}
 \end{figure}

%
%
\subsection{Numerical study of the critical temperature as a function of $\epsilon$}
Here we study the influence of finite $\epsilon$ on the critical temperature $\beta_C$, the solution of~\eqref{eq:VarianceLimit} for two-scale potentials, after which continuous phase transitions (pitchfork bifurcations) occur. We do this by solving the equation (we only consider symmetric potentials)
\begin{equation}
\label{eq:critical_temp_eqn}
\theta^{-1}\beta^{-1} = \int_\R x^2 p_\infty(x;\theta,\beta,m^\epsilon=0) \ dx,
\end{equation}
for the various potentials in Table~\ref{tab:pot}. 

\begin{figure}[ht]
\begin{subfigure}{0.32\textwidth}
\includegraphics[width=\linewidth]{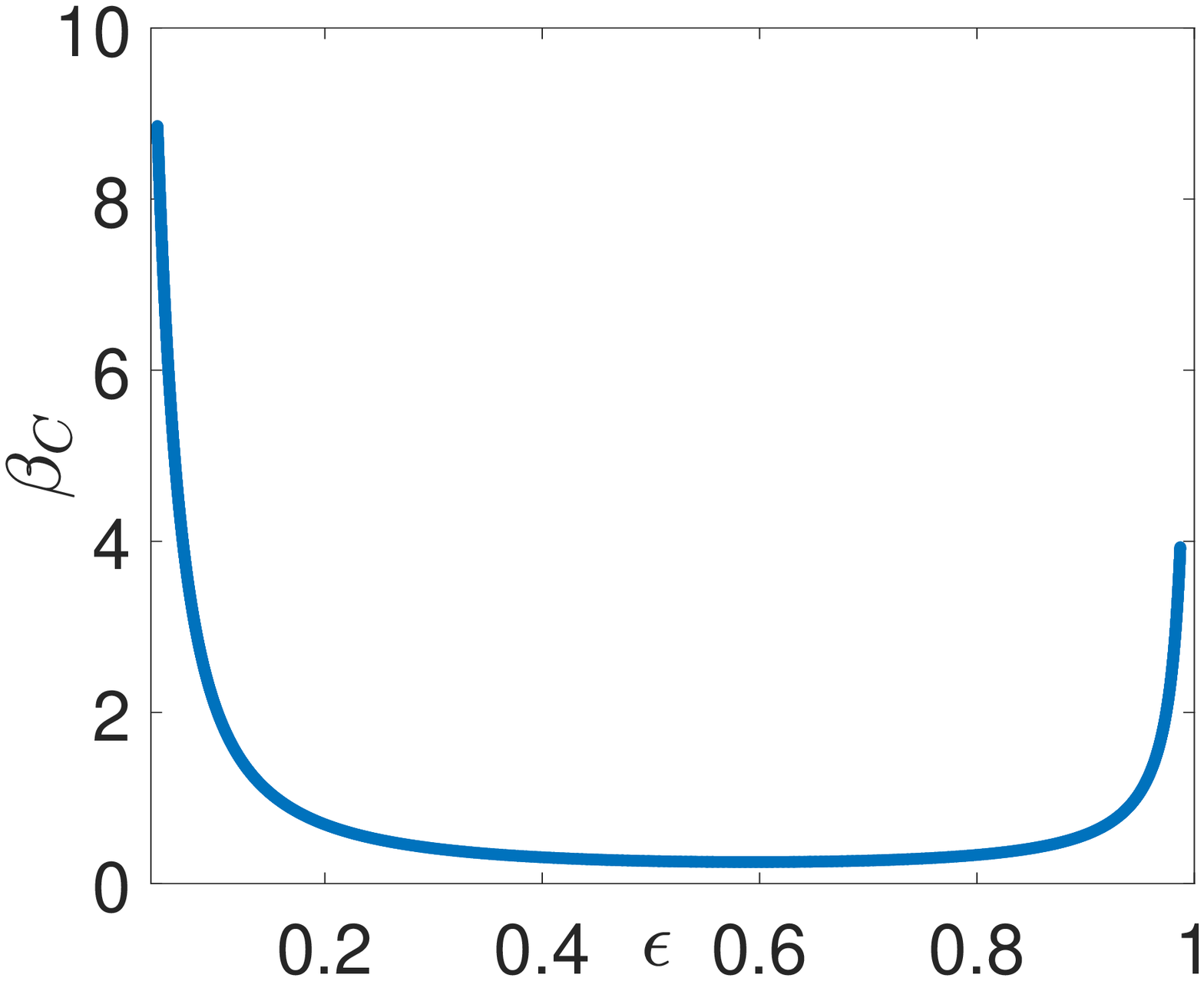}
\caption{Case 1}\label{fig:B_C_1}
\end{subfigure}
\begin{subfigure}{0.32\textwidth}
\includegraphics[width=\linewidth]{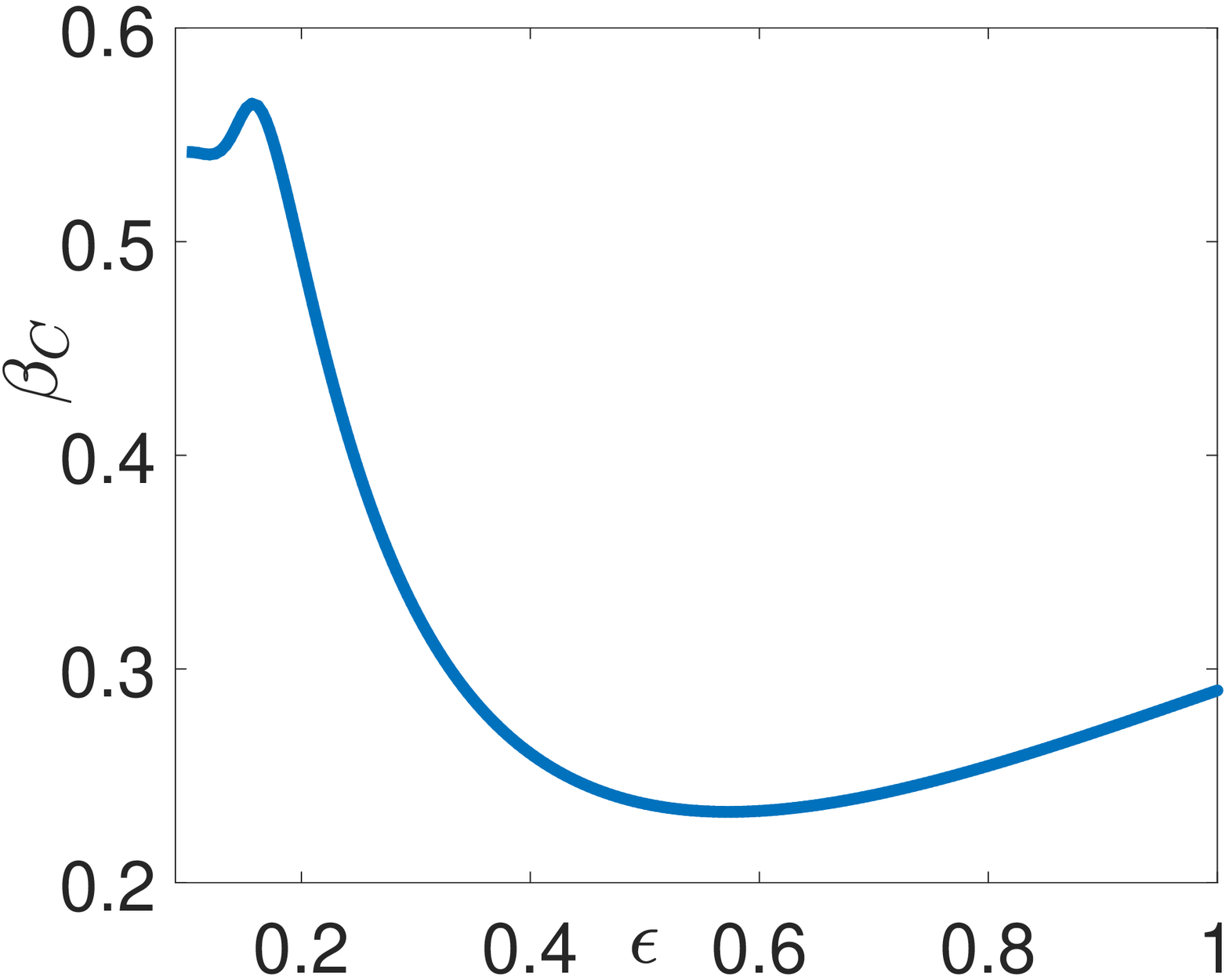}
\caption{Case 3}\label{fig:B_C_3}
\end{subfigure}
\begin{subfigure}{0.32\textwidth}
\includegraphics[width=\linewidth]{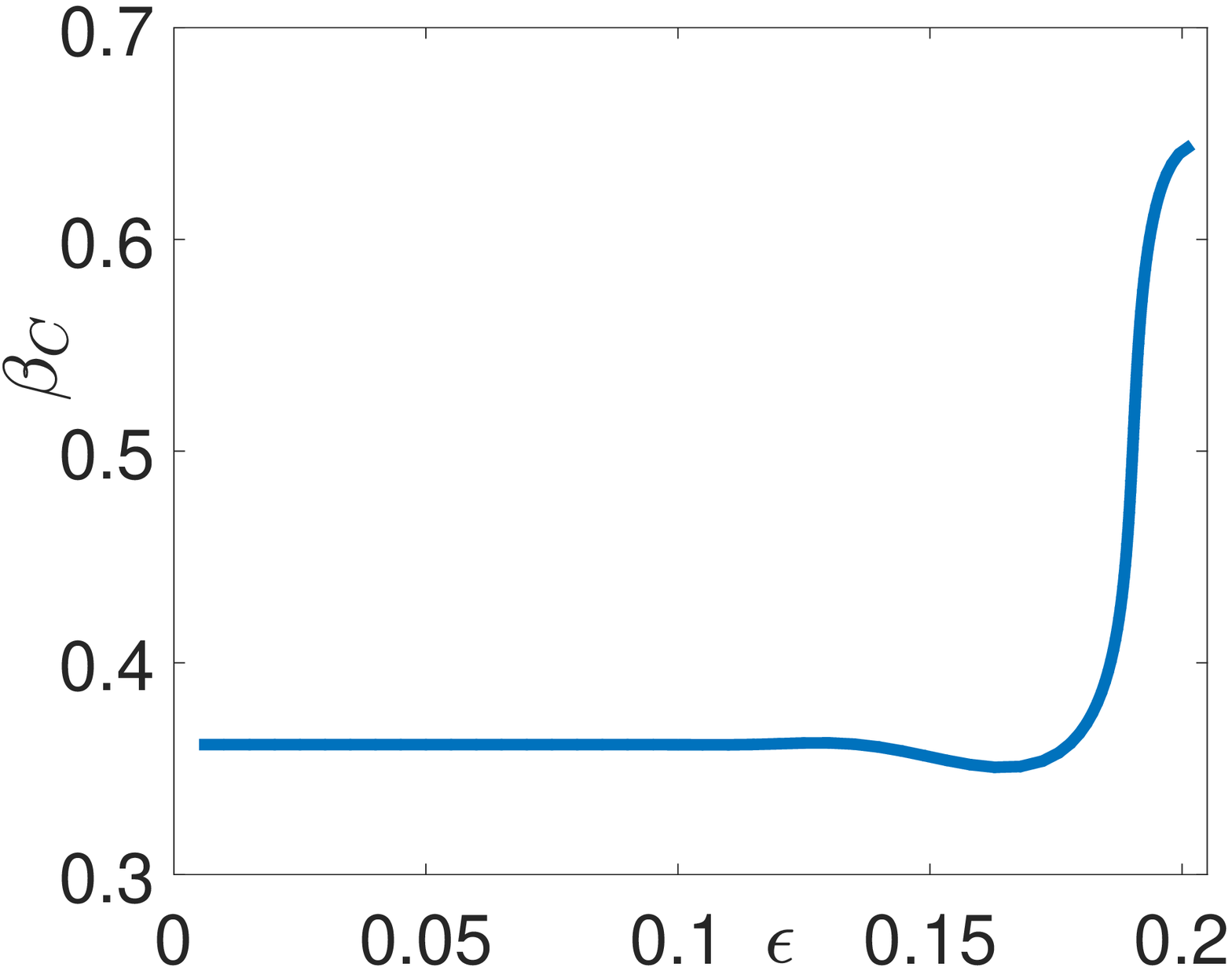}
\caption{Case 4}\label{fig:B_C_4}
\end{subfigure}
\caption{Critical temperature $\beta_C$ as a function of $\epsilon$ for the multiscale Fokker-Planck equation with $\theta = 5$ for cases \eqref{fig:B_C_1} $1$ - $V^\eps(x) = \frac{x^2}{2} + \delta\cos\left(\frac{x}{\eps}\right)$, \eqref{fig:B_C_3} $3$ - $V^\eps(x) = \frac{x^4}{4} - \frac{x^2}{2} + \delta\cos\left(\frac{x}{\eps}\right)$, and \eqref{fig:B_C_4} $4$ - $V^\eps(x) = \frac{x^4}{4} -\frac{x^2}{2}\left(1 - \delta\cos\left(\frac{x}{\eps}\right)\right)$  in Table~\ref{tab:pot}.}
	 \label{fig:critical_temperature}
 \end{figure}

We present in Figure~\ref{fig:critical_temperature} plots of the critical temperature, $\beta_C$ as a function of $\epsilon$ for a fixed $\theta = 5$. The results are presented for cases $1$ (Figure~\ref{fig:B_C_1}), $3$ (Figure~\ref{fig:B_C_3}) and $4$ (Figure~\ref{fig:B_C_4}) from Table~\ref{tab:pot}. 
We do not present the remaining case because, as can be observed in Figure~\ref{fig:conv_mult_bif}, there is no pitchfork bifurcation from the zero mean solution for case 2. The dependence of the critical temperature on $\eps$ is different for separable and nonseparable potentials. It appears that the critical temperature can change considerable by varying $\eps$, which implies that a different number of branches might be present in the bifurcation diagram at a fixed temperature, for different values of $\eps$. This issue will be studied in detail in future work.


%
%
\subsection{Simulations of the interacting particles system}\label{sec:MC}
In this section we present the results of Monte Carlo (MC) simulations for the system of interacting diffusions, both for the full, i.e. $\eps-$dependent,~\eqref{eq:system_of_sdes_in_1D} and for the homogenized dynamics~\eqref{eq:system_of_homogenized_sdes}. Our focus is on the study of the convergence of the interacting particles system to their equilibrium state. It should be emphasized that no phase transitions occur for the finite dimensional particles system. However, the numerical simulation of the two interacting particles systems,~\eqref{eq:system_of_sdes_in_1D} and the homogenized particle system~\eqref{eq:system_of_homogenized_sdes} clearly exhibit the lack of commutativity between the mean field and homogenization limits.

For the full dynamics~\eqref{eq:system_of_sdes_in_1D}, we used $\delta = 1$ and $\epsilon = 0.1$. We solved the SDEs using the Euler-Maruyama scheme. For the homogenized dynamics~\eqref{eq:system_of_homogenized_sdes}, since the noise is multiplicative (for nonseparable potentials), we used the Milstein scheme. In both cases, the time step used 
was $dt = 0.01$, which is of $O(\epsilon^2)$. Finally, in both cases we initialized the $N$ particles as being normally distributed, with mean zero and variance $4$, which was large enough so that all the local minima were contained within two standard deviations of the Gaussian distribution.

In Figures~\ref{fig:convex_add_position}--\ref{fig:convex_add_mean} we present the results of our simulations for Case 1 in Table~\ref{tab:pot}, the convex potential with separable fluctuations $V^{\eps}(x) = \frac{x^2}{2} + \delta \cos \left(\frac{x}{\eps} \right)$. In Figure~\ref{fig:convex_add_position} we present snapshots of the position of each of the $N=1,000$ particles for $t=0$ (top panels), $t=100$ (middle panels) and $t=5,000$ (bottom panels). The left panels show the results for $\epsilon=0.1$, while the right panels show the results for the homogenized system.
\begin{figure}[h!]
\hskip-0.75cm
	\includegraphics[width=0.550\linewidth]{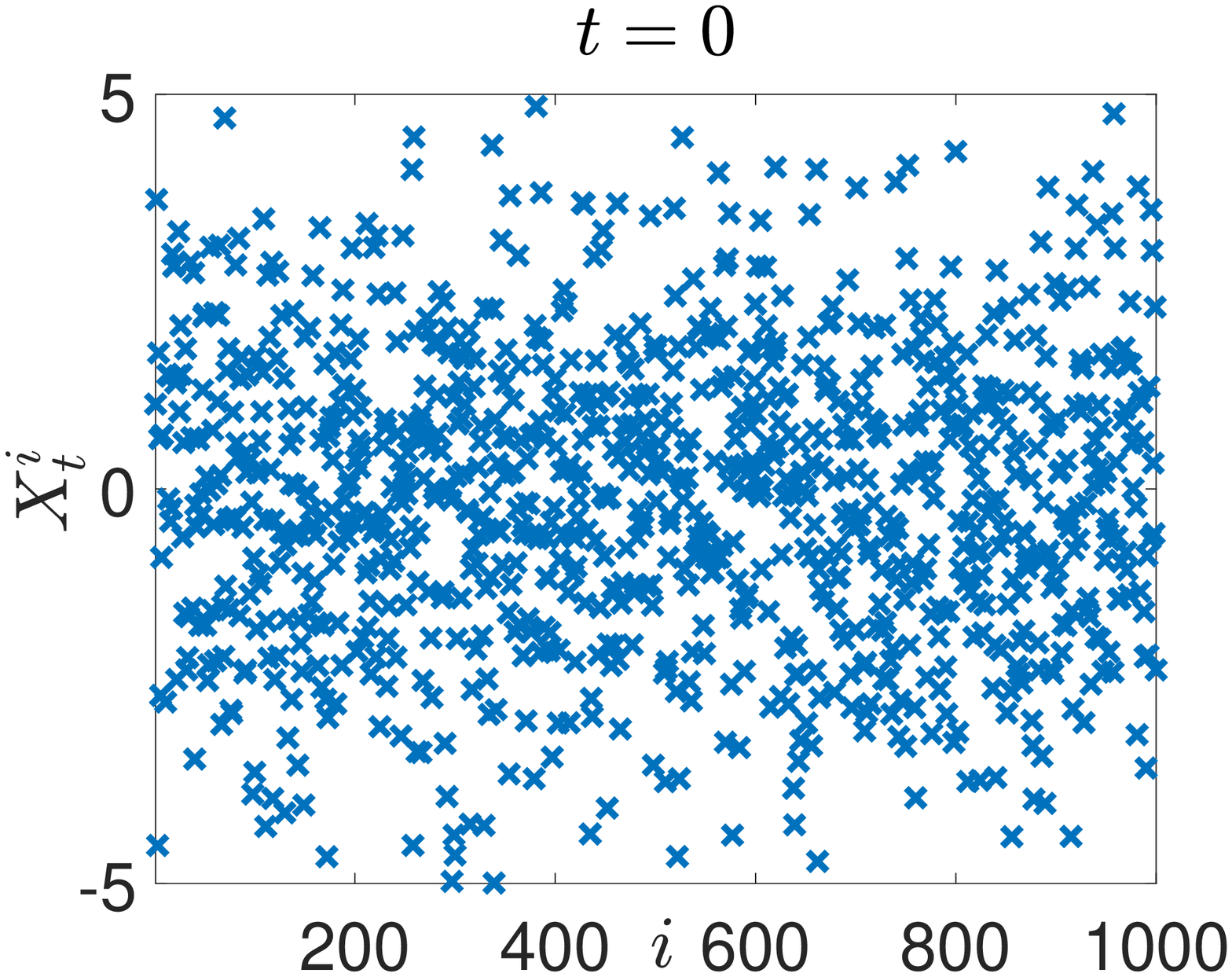}
	\includegraphics[width=0.550\linewidth]{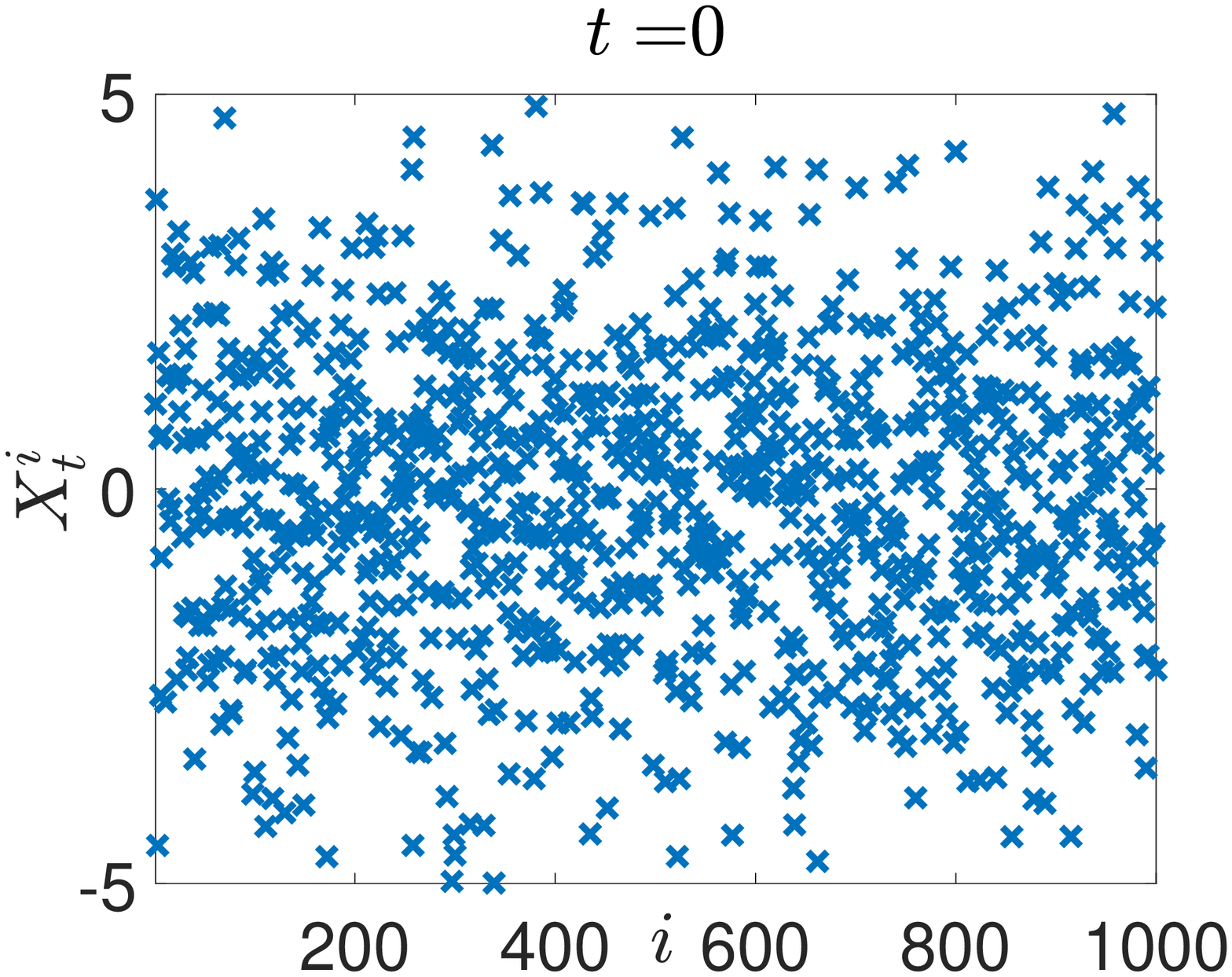}

\hskip-0.75cm
	\includegraphics[width=0.550\linewidth]{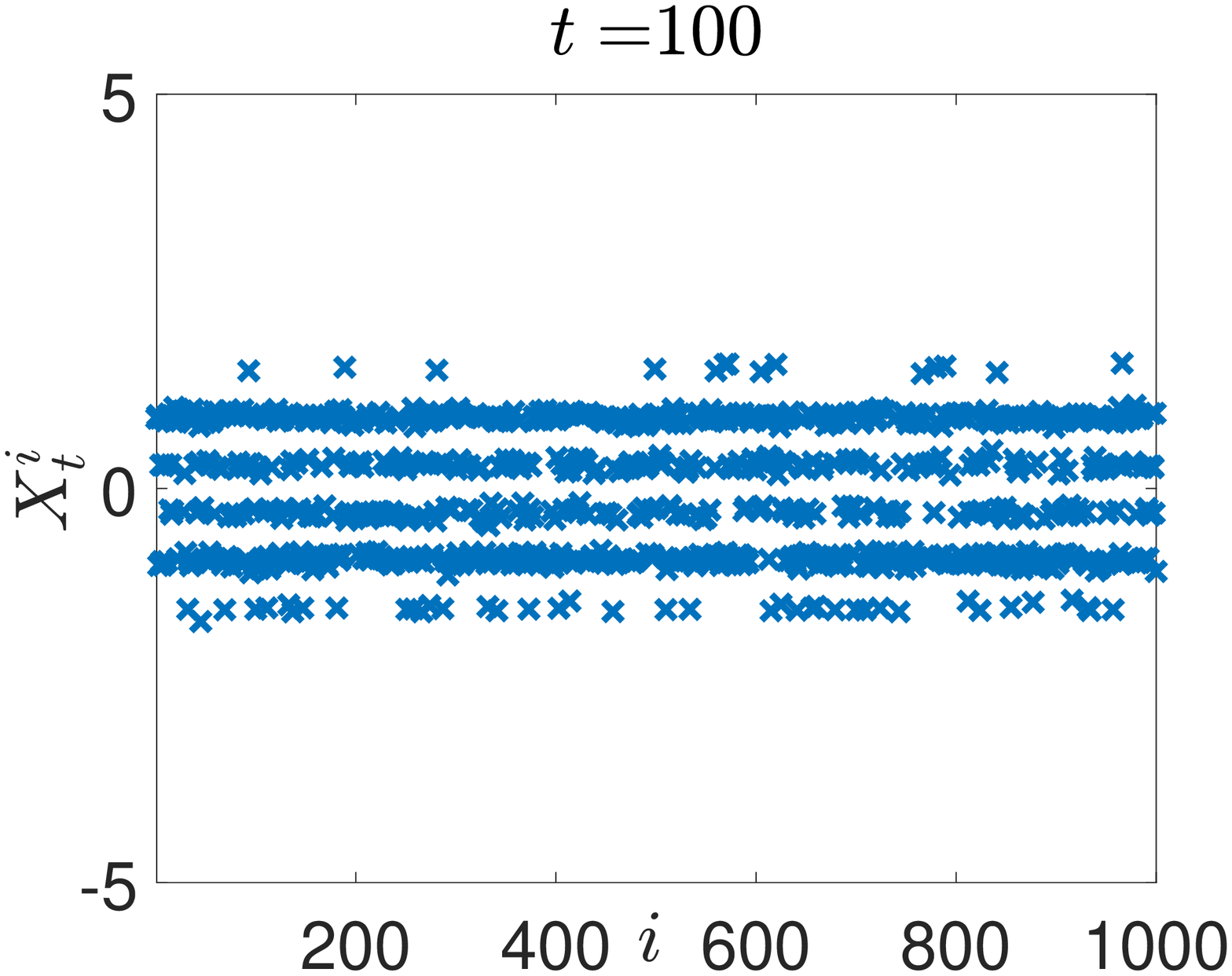}
	\includegraphics[width=0.550\linewidth]{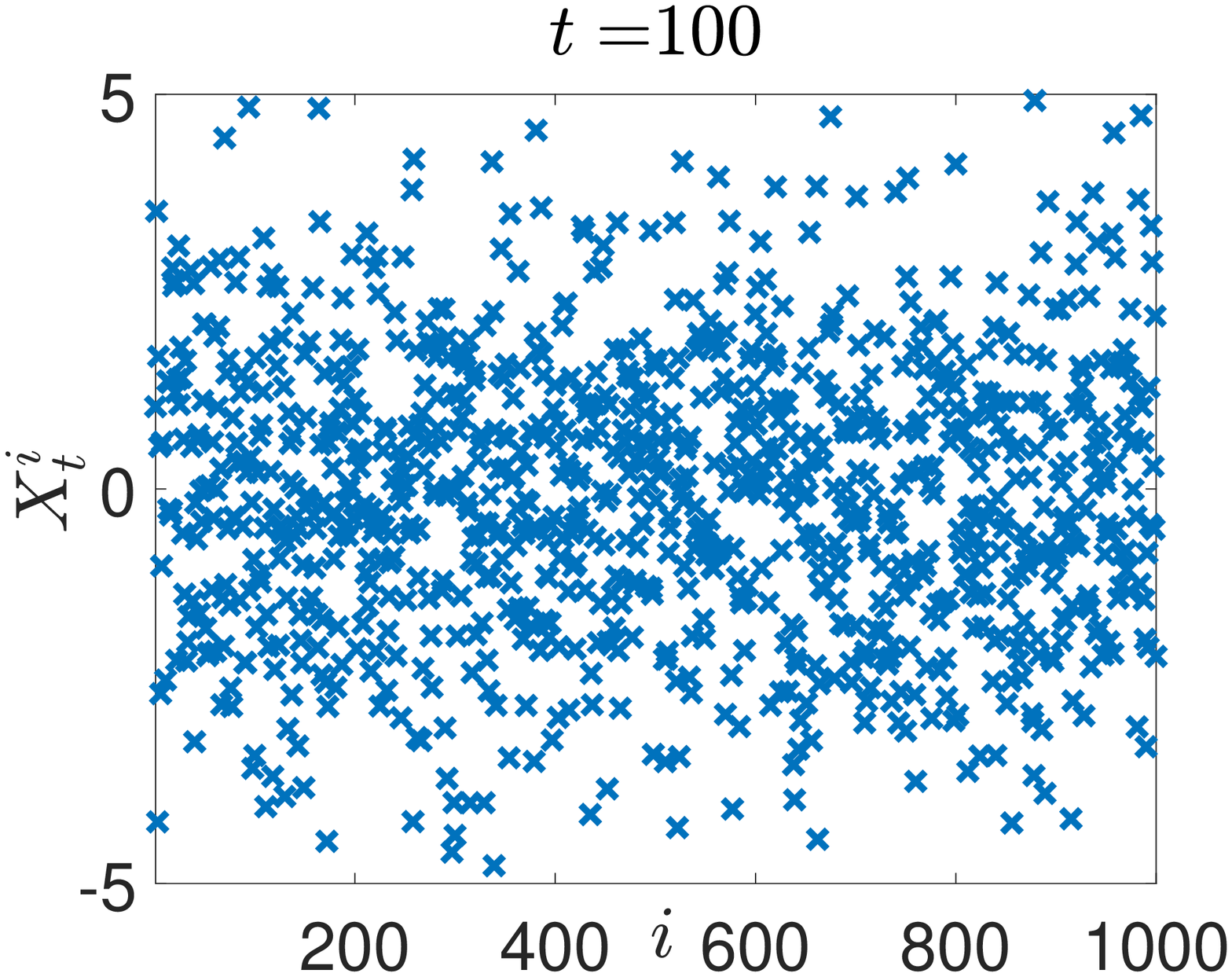}

\hskip-0.75cm
	\includegraphics[width=0.550\linewidth]{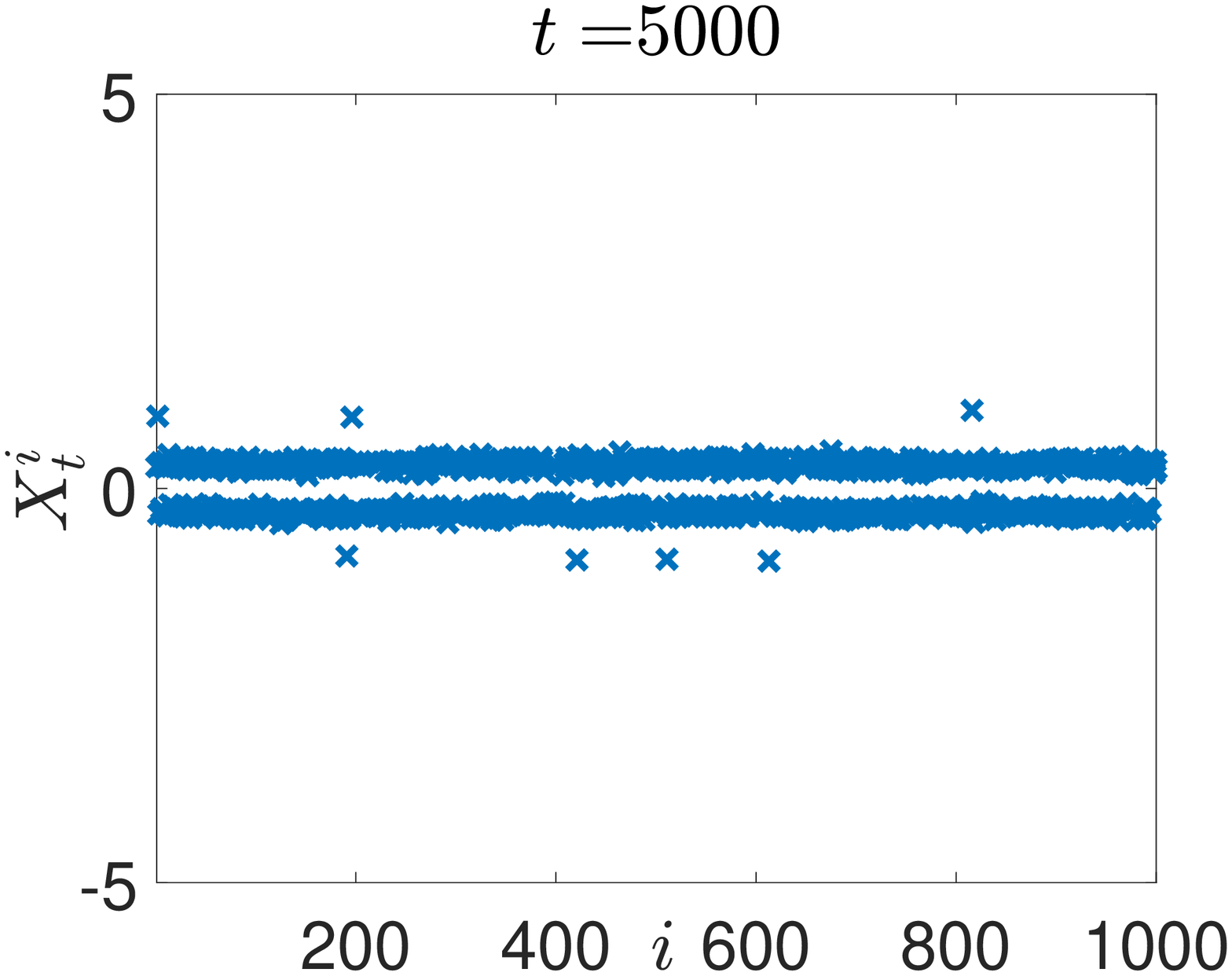}
	\includegraphics[width=0.550\linewidth]{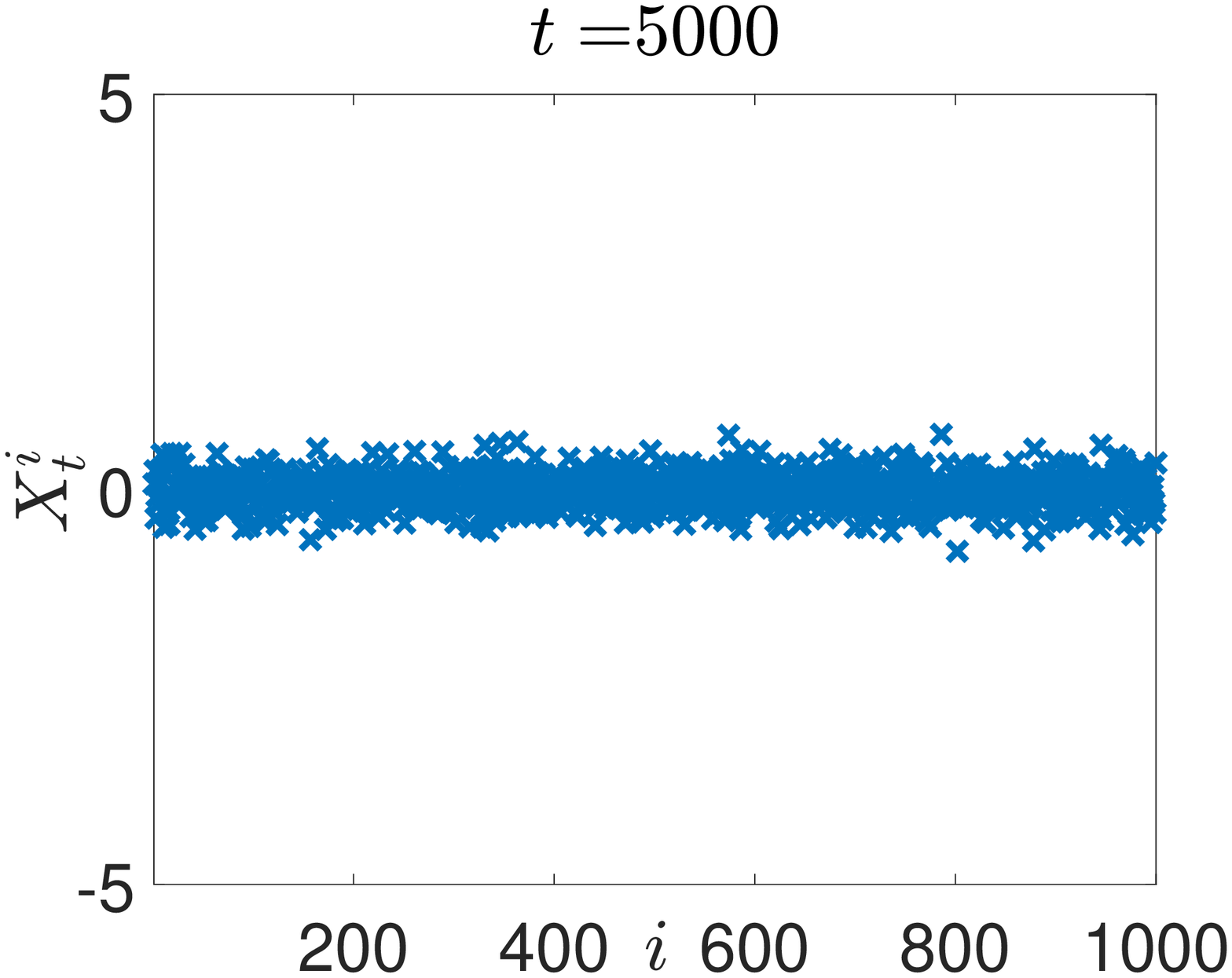}

\caption{Position of $N=1,000$ particles for $V^{\eps}(x) = \frac{x^2}{2} + \delta \cos \left(\frac{x}{\eps} \right)$, with $\theta = 2$, $\beta = 8$, $\delta = 1$. Left: Eqn.~\eqref{eq:system_of_sdes_in_1D} with $\epsilon = 0.1$. Right: homogenized SDEs~\eqref{eq:system_of_homogenized_sdes}. }
	 \label{fig:convex_add_position}
 \end{figure}
In Figure~\ref{fig:convex_add_hist}, we present snapshots of the histogram for the $N=1,000$ particles for the same time and parameter values, which are $\delta = 1, \, \epsilon = 0.1, \, \theta = 2$ and $\beta = 8$. On the $t=5,000$ snapshot, we superpose the corresponding invariant measure, rescaled for comparison, and we observe that the empirical density of the system of interacting diffusions converges to the steady state solution computed by solving the stationary McKean-Vlasov equation. It is clear from the histograms at $t=5,000$ that, even though the invariant distribution of the full dynamics converges weakly to the invariant distribution of the homogenized SDEs, see Corollary~\ref{prop:additive_fluctuations}, the two distributions are quite different for finite values of $\eps$.
\begin{figure}[h!]
\hskip-0.75cm
	\includegraphics[width=0.550\linewidth]{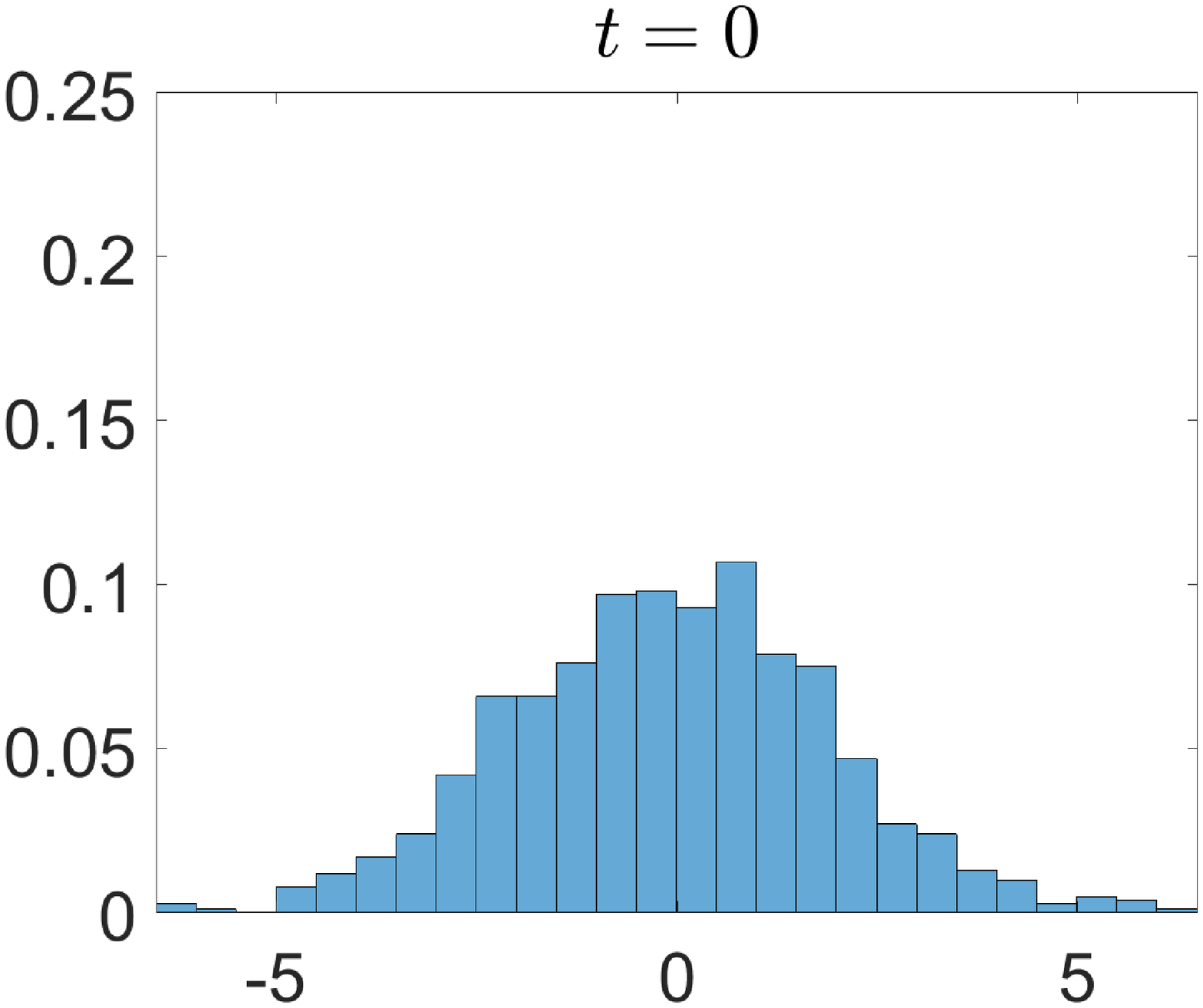}
	\includegraphics[width=0.550\linewidth]{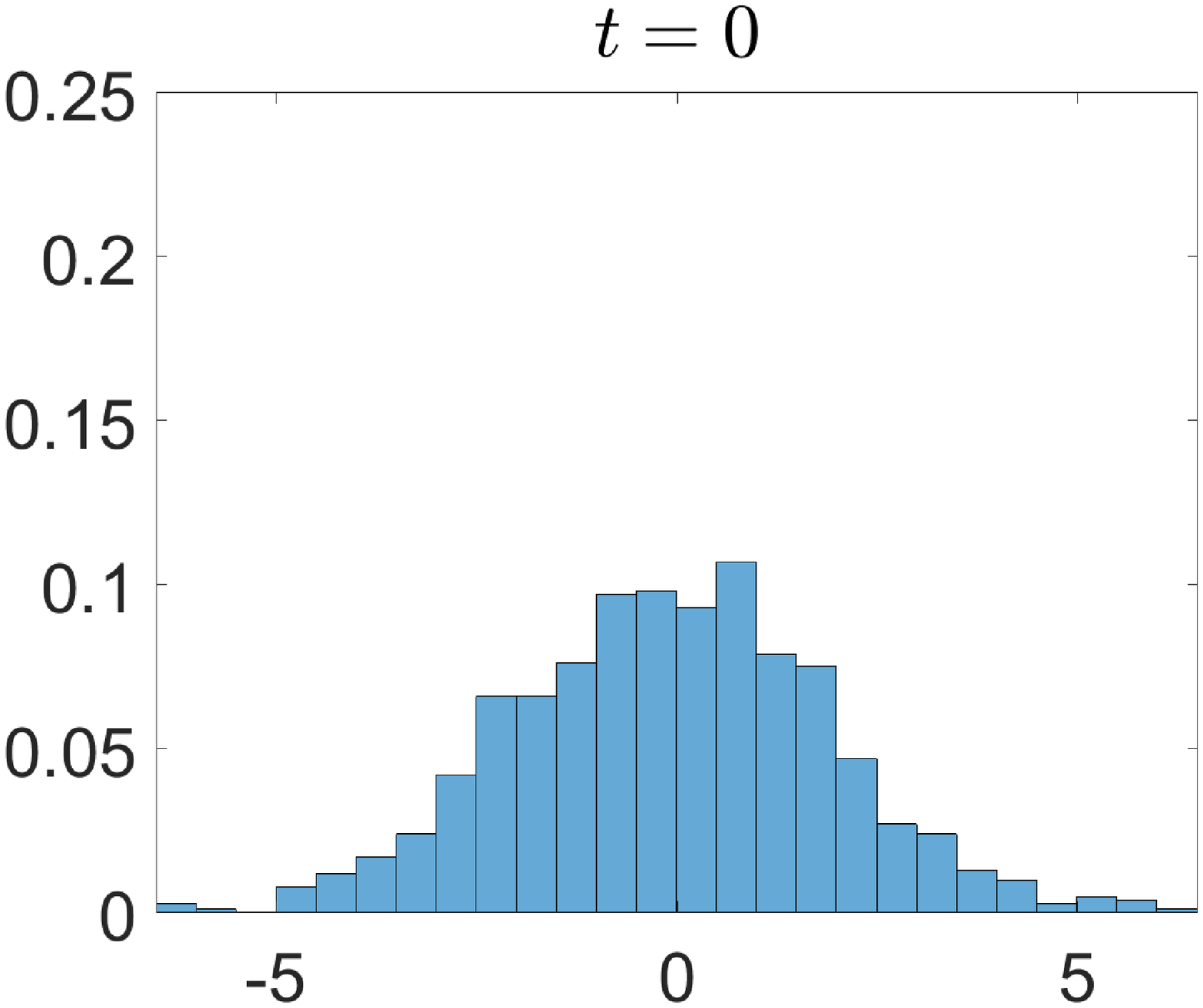}

\hskip-0.75cm
	\includegraphics[width=0.550\linewidth]{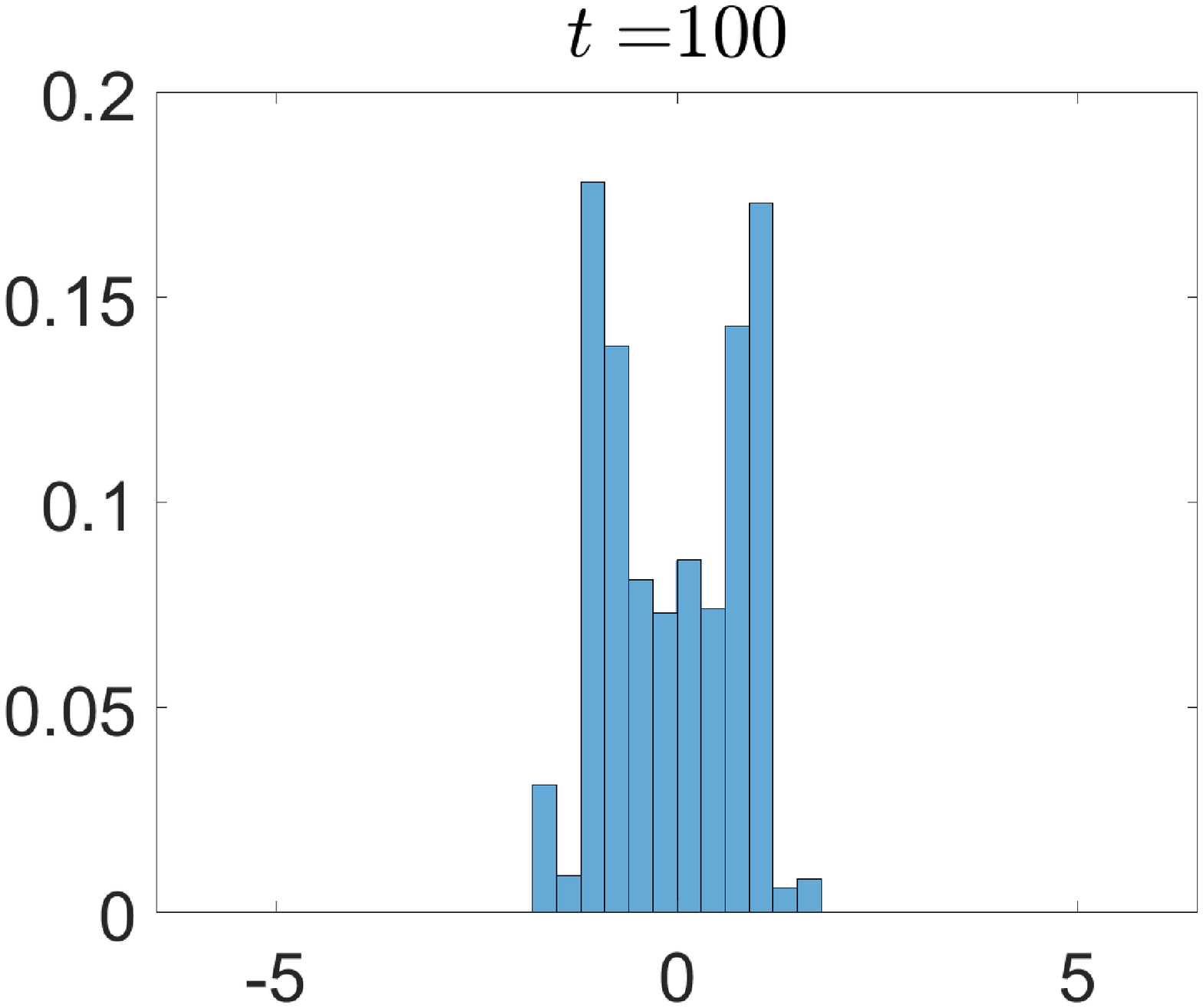}
	\includegraphics[width=0.550\linewidth]{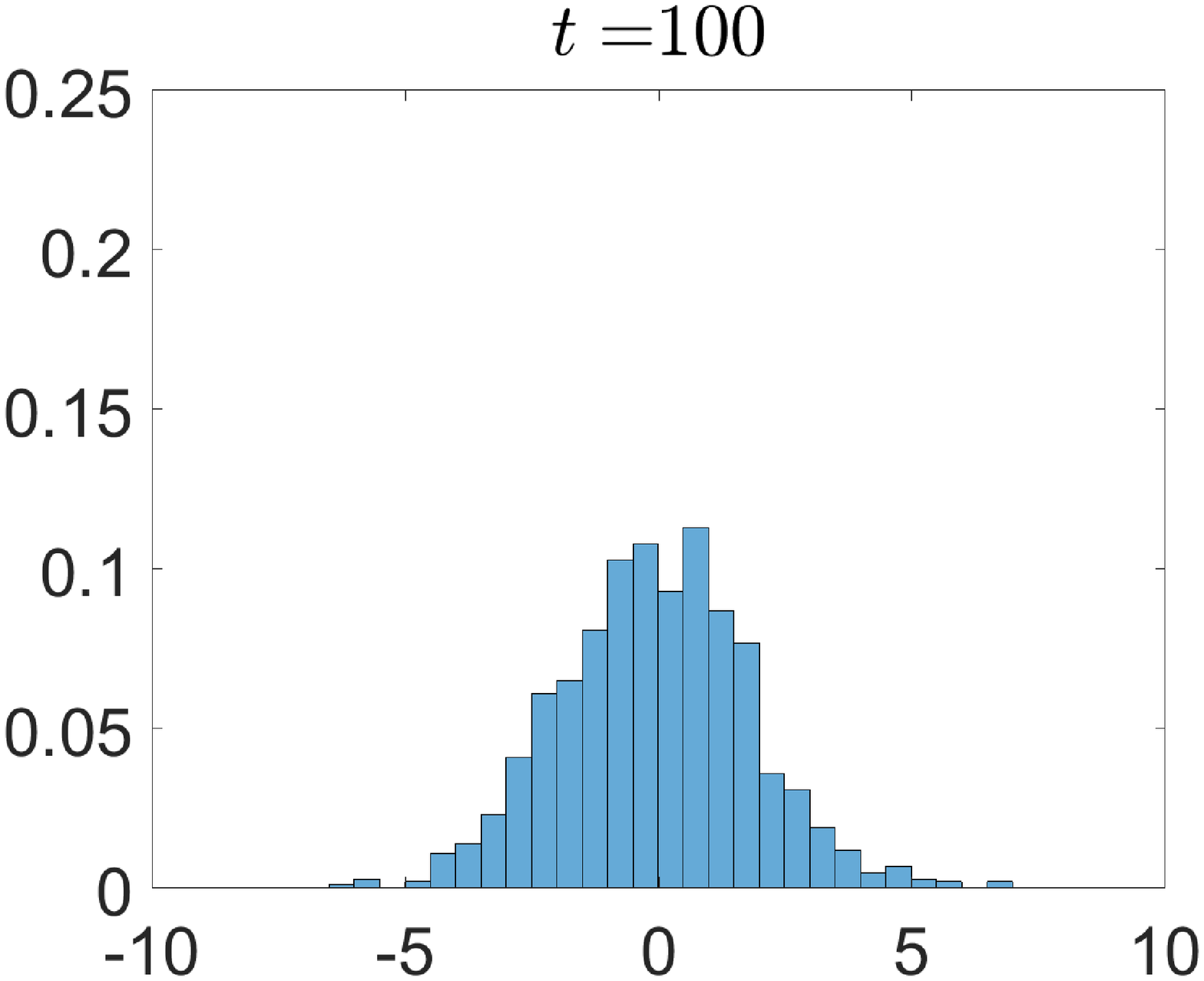}

\hskip-0.75cm
	\includegraphics[width=0.550\linewidth]{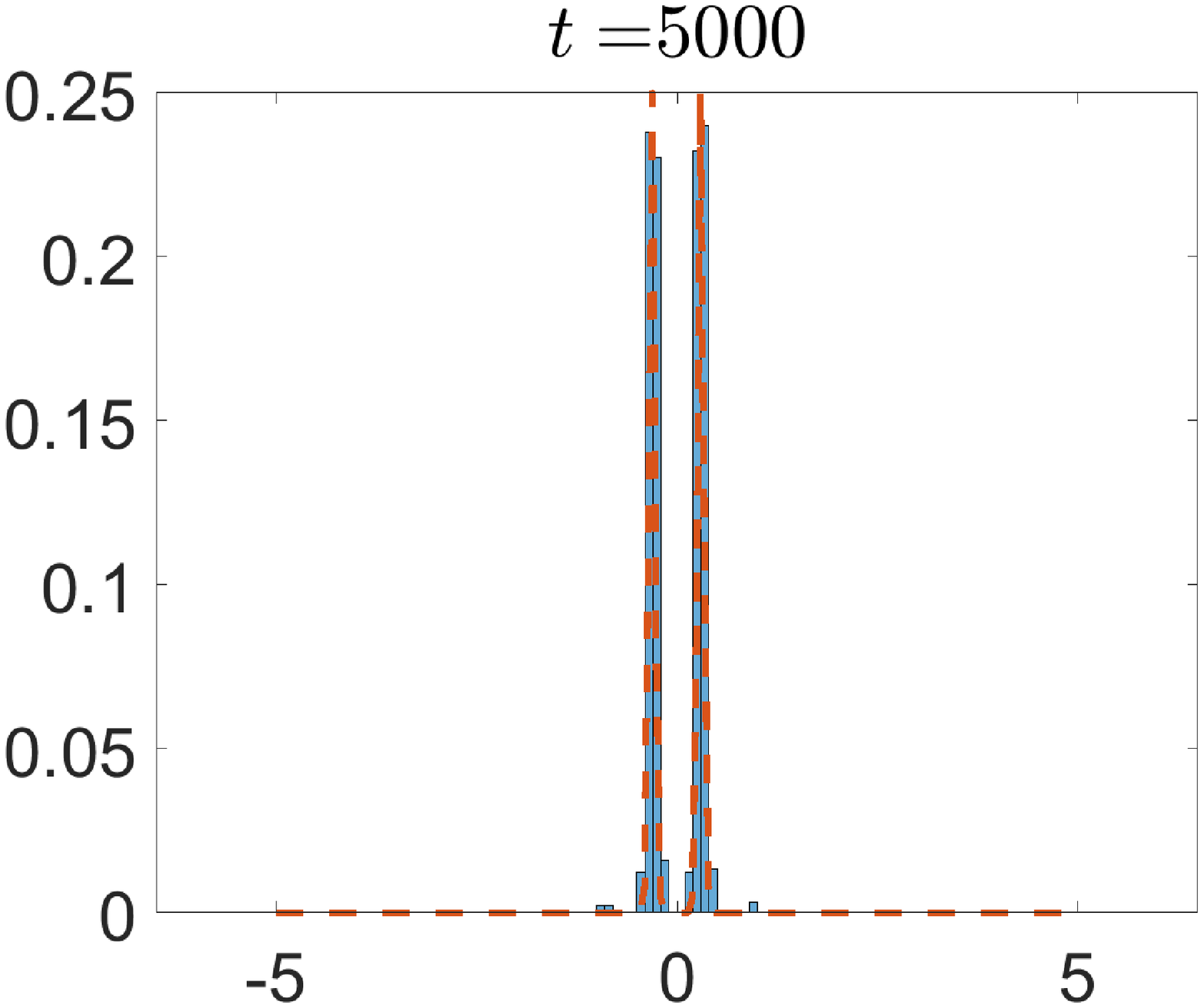}
	\includegraphics[width=0.550\linewidth]{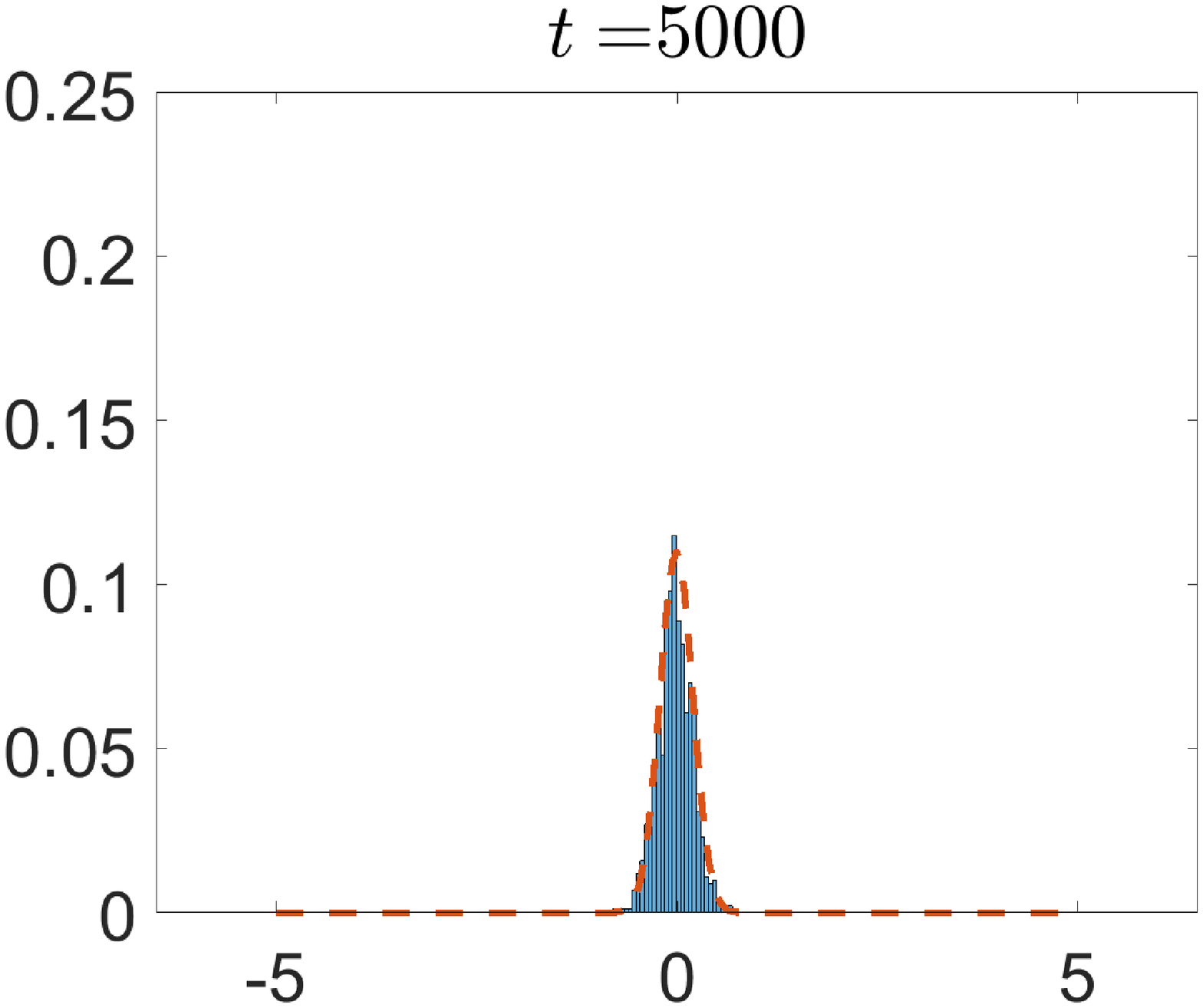}

\caption{Histogram of $N=1,000$ particles for $V^{\eps}(x) = \frac{x^2}{2} + \delta \cos \left(\frac{x}{\eps} \right)$, with $\theta = 2$, $\beta = 8$, $\delta = 1$. Left: Eqn.~\eqref{eq:system_of_sdes_in_1D} with $\epsilon = 0.1$. Right: homogenized SDEs~\eqref{eq:system_of_homogenized_sdes}. }
	 \label{fig:convex_add_hist}
 \end{figure}
We also calculate the mean of the interacting particle system 
\begin{equation}\label{e:mean_N}
\bar{X}_t^N := \frac{1}{N}\sum_{i=1}^N X_t^i,
\end{equation} 
as a function of time $t$. We observe that in both cases, the average converges to $0$ as expected, but that the convergence for the homogenized SDE~\eqref{eq:system_of_homogenized_sdes} is slower. The position of the $N$ particles follows approximately the same qualitative behavior (with the particles clustering close to 
$0$), but as we can see from the corresponding histogram there exist additional wells (nonconvexity) for the finite $\epsilon$ case.
\begin{figure}[h!]
\hskip-0.75cm
	\includegraphics[width=0.550\linewidth]{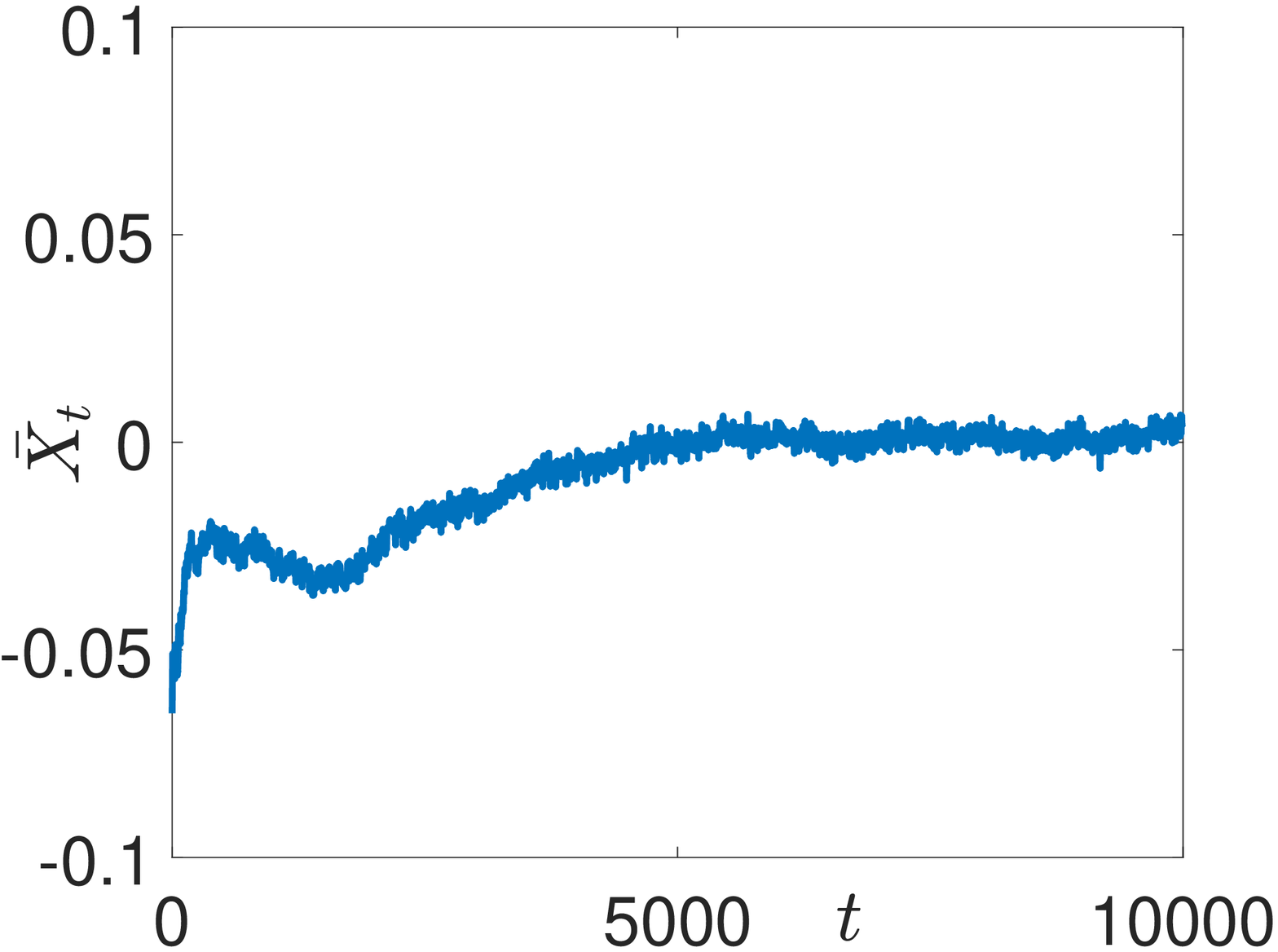}
	\includegraphics[width=0.550\linewidth]{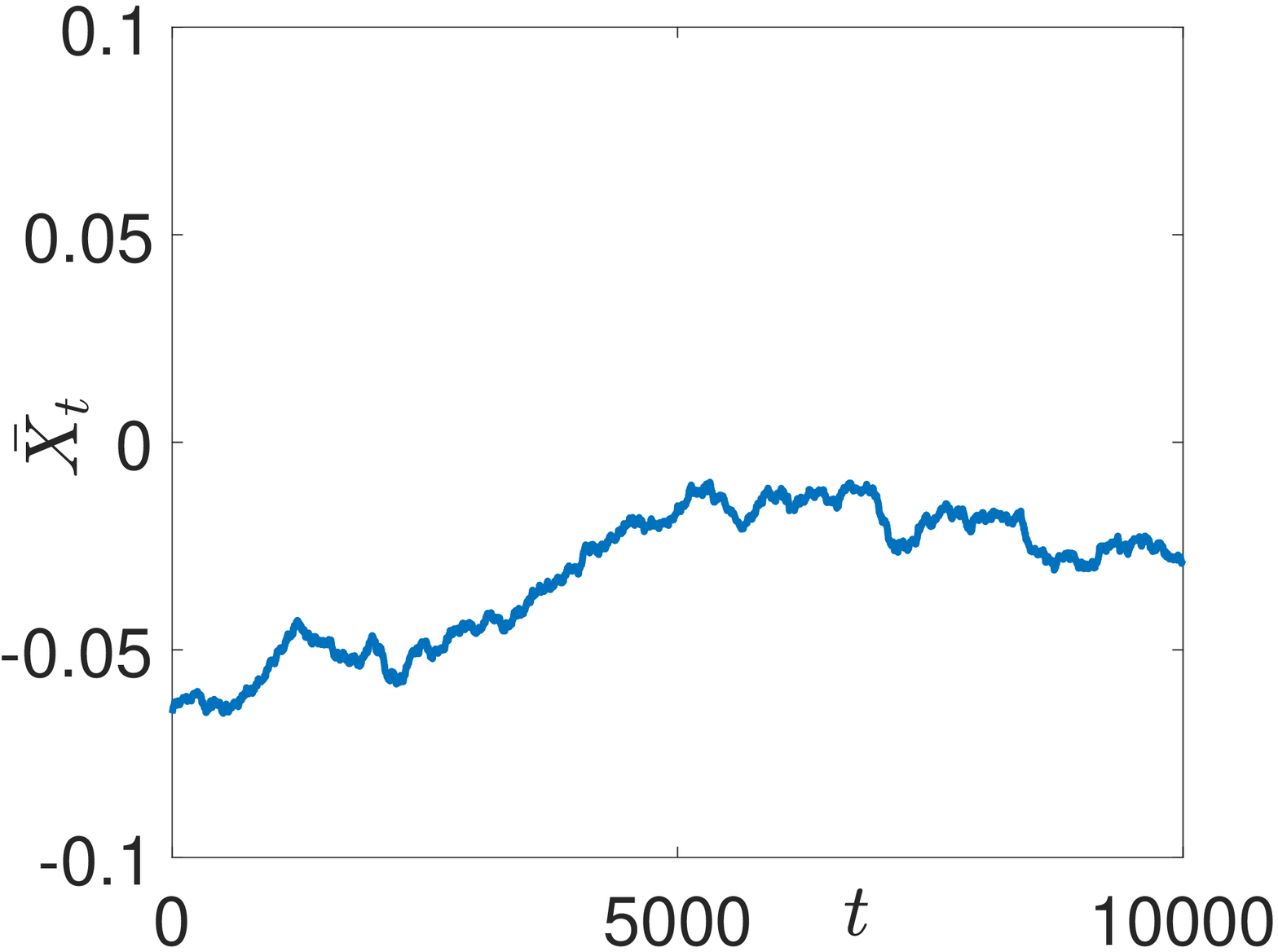}
\caption{Time evolution of the mean $\bar{X}_t^N = \frac{1}{N}\sum_{i=1}^N X_t^i$ of $N=1,000$ particles for $V^{\eps}(x) = \frac{x^2}{2} + \delta \cos \left(\frac{x}{\eps} \right)$, with $\theta = 2$, $\beta = 8$, $\delta = 1$. Left: Eqn.~\eqref{eq:system_of_sdes_in_1D} with $\epsilon = 0.1$. Right: homogenized SDEs~\eqref{eq:system_of_homogenized_sdes}.  }

	 \label{fig:convex_add_mean}
 \end{figure}

We performed similar experiments for Case $4$ in Table~\ref{tab:pot} (i.e., $V^\eps(x) = \frac{x^4}{4}-\frac{x^2}{2}\left(1-\delta\chi_{[-a,a]}(x)\cos\left(\frac{x}{\eps}\right)\right)$). Here we used $N=500$ particles, and smaller values of $\theta$ and $\beta$. The parameters used were $\theta = 0.5$, $\beta \approx 5.6$, $\delta = 1$ and $\epsilon = 0.1$ and the results are plotted in Figures~\ref{fig:bistable_mult_position}-\ref{fig:bistable_mult_mean}.

In Figure~\ref{fig:bistable_mult_position} we present snapshots of the position of each of the $N=500$ particles for $t=0$ (top panels), $t=100$ (middle panels) 
and $t=5,000$ (bottom panels). The left panels show the results for $\epsilon=0.1$, while the right panels show the results for the homogenized SDE~\eqref{eq:system_of_homogenized_sdes}.
\begin{figure}[h!]
\hskip-0.75cm
	\includegraphics[width=0.550\linewidth]{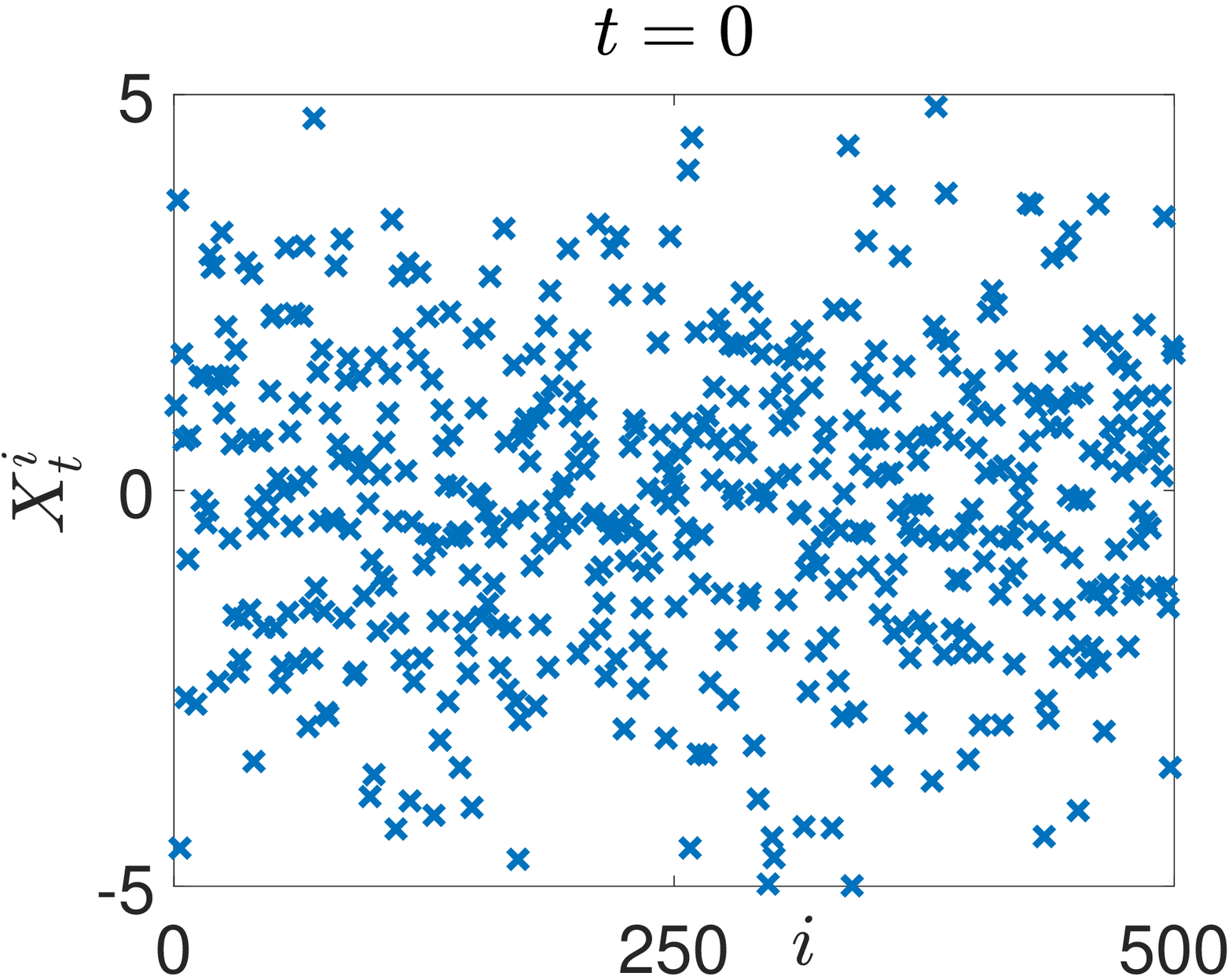}
	\includegraphics[width=0.550\linewidth]{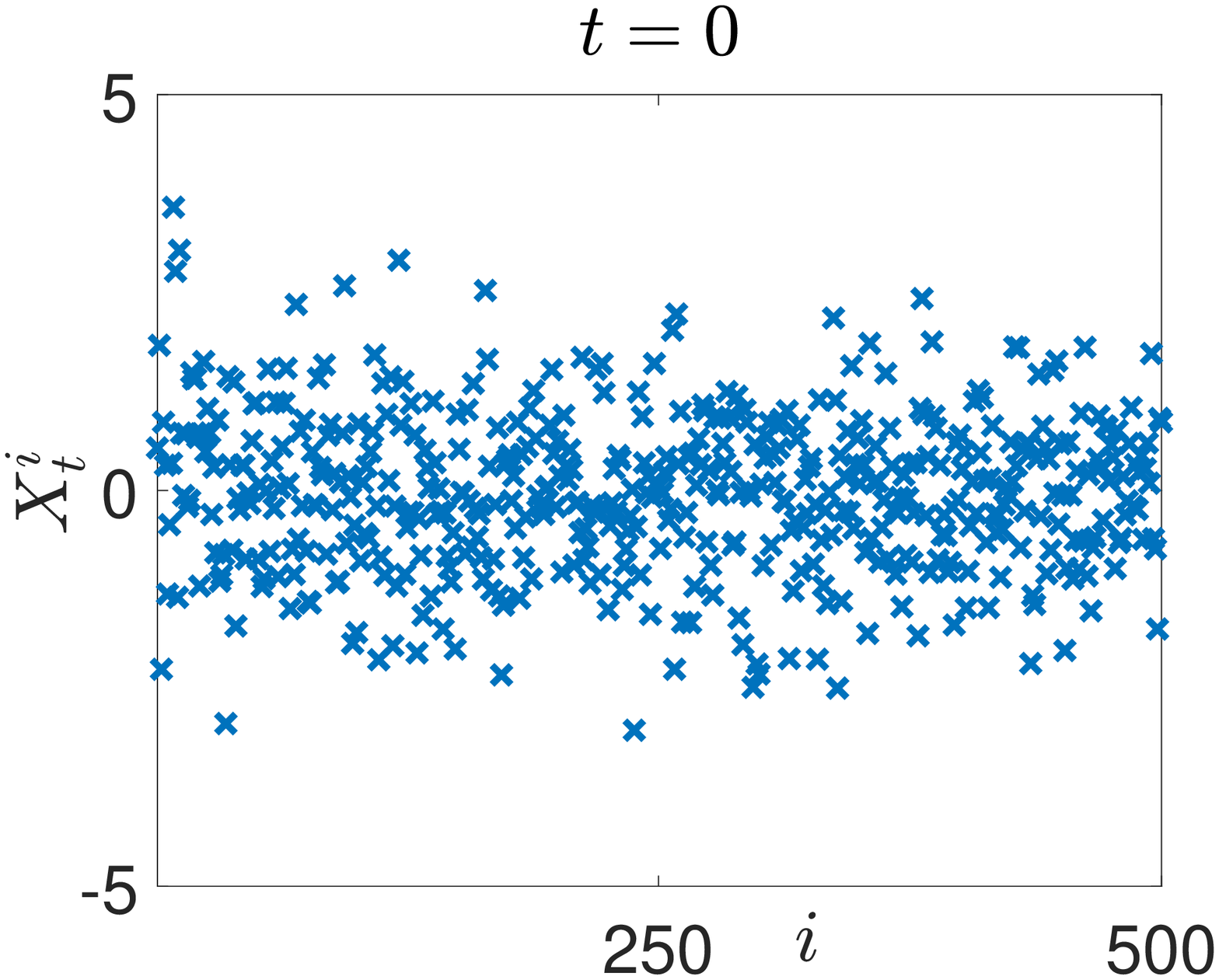}

\hskip-0.75cm
	\includegraphics[width=0.550\linewidth]{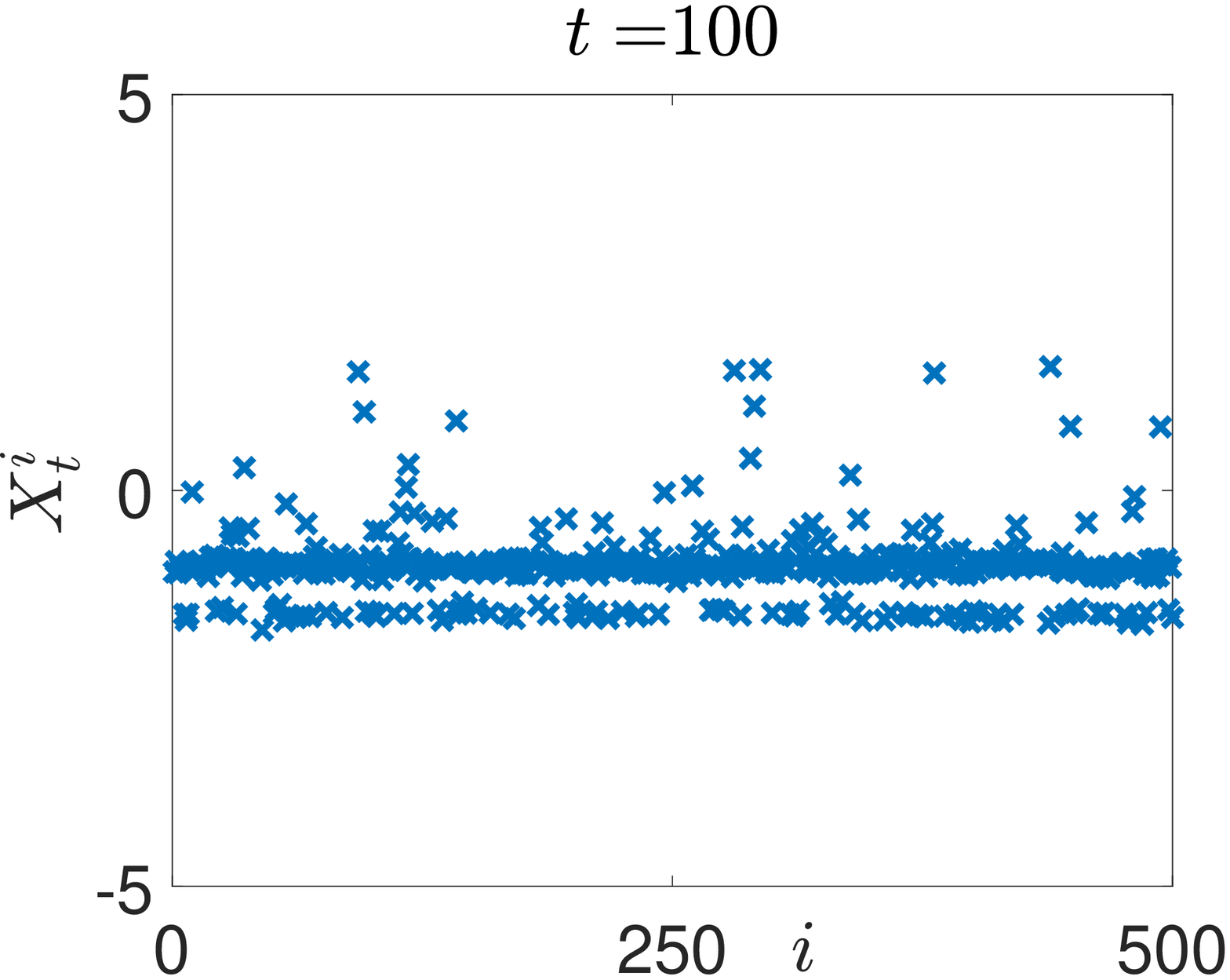}
	\includegraphics[width=0.550\linewidth]{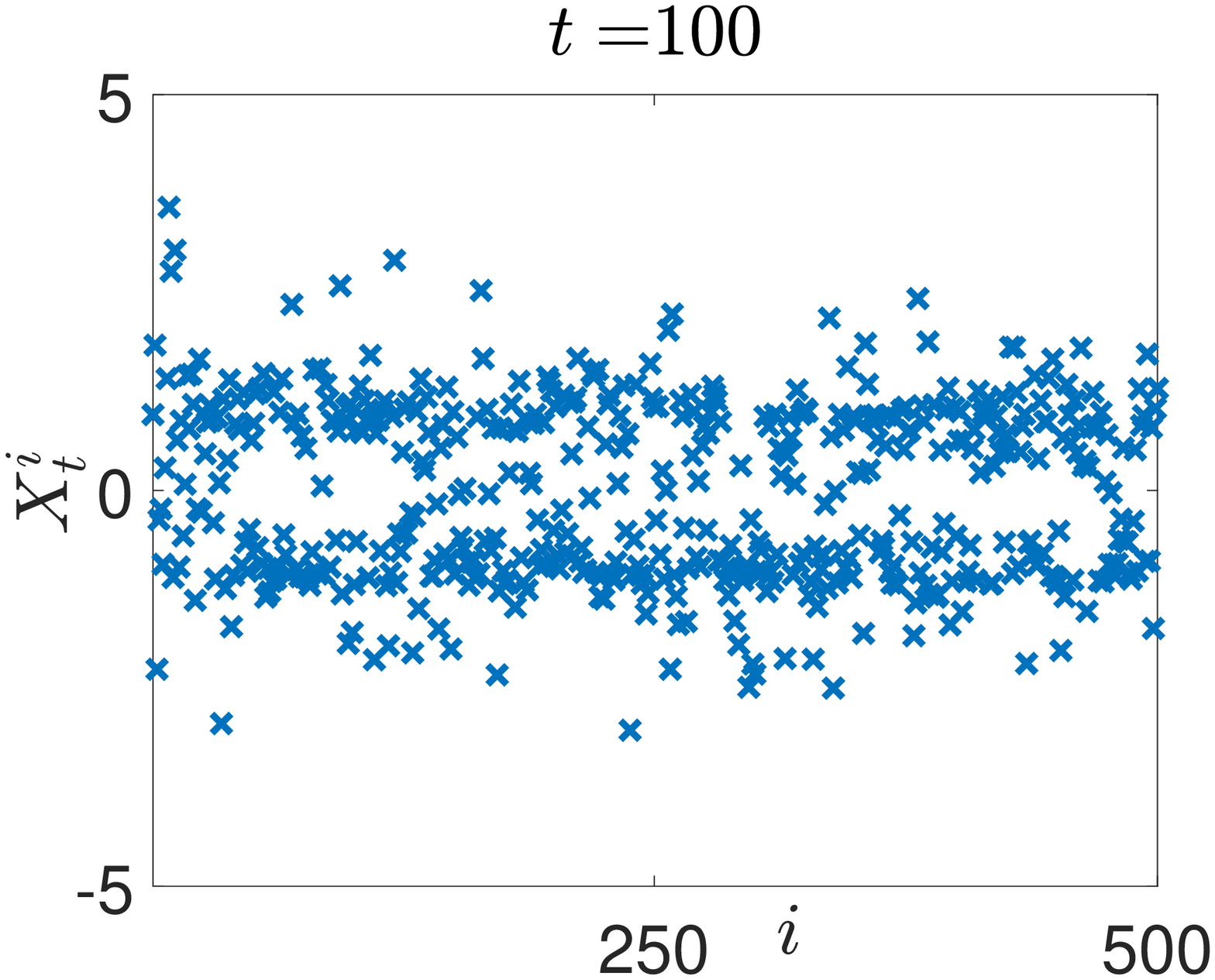}

\hskip-0.75cm
	\includegraphics[width=0.550\linewidth]{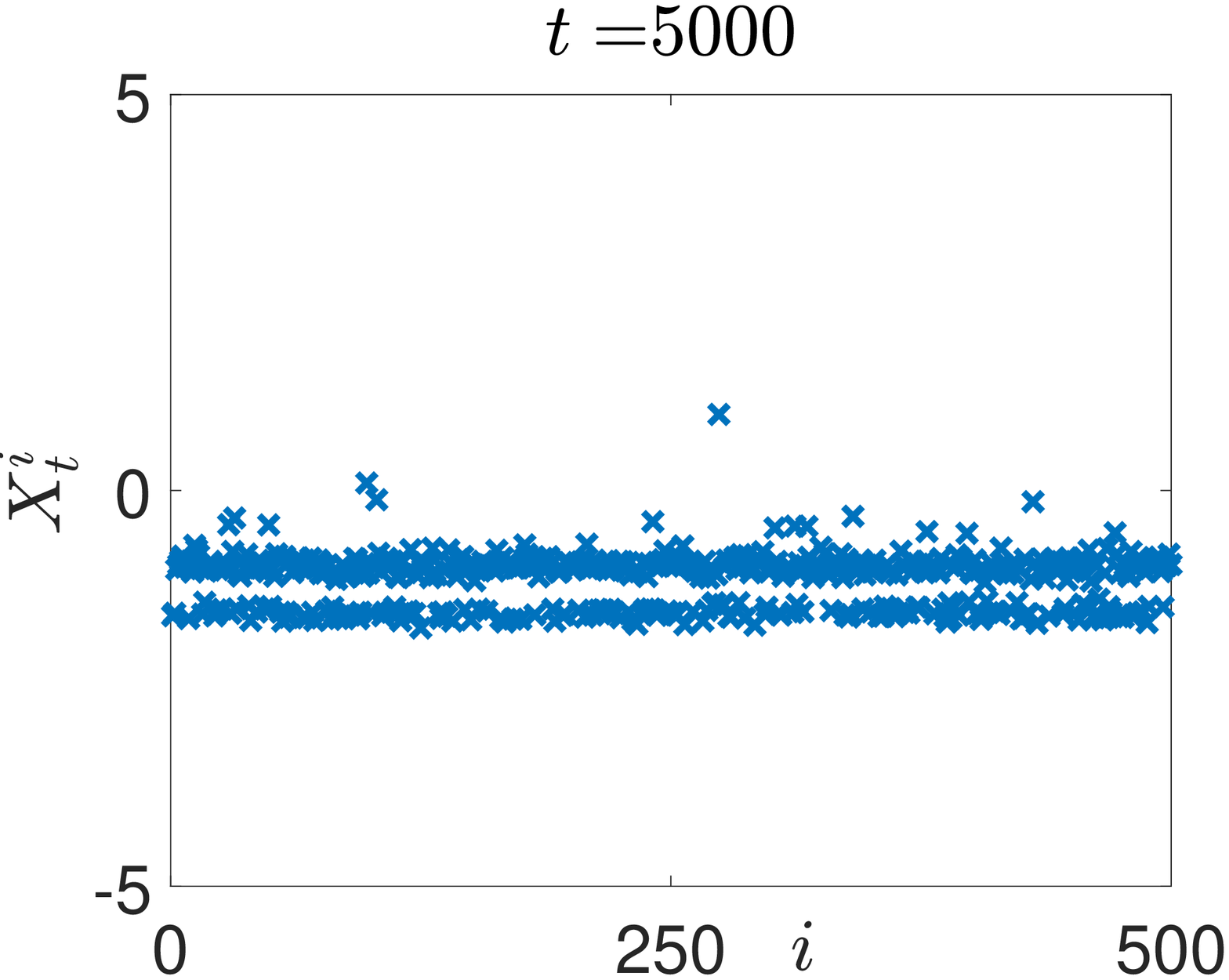}
	\includegraphics[width=0.550\linewidth]{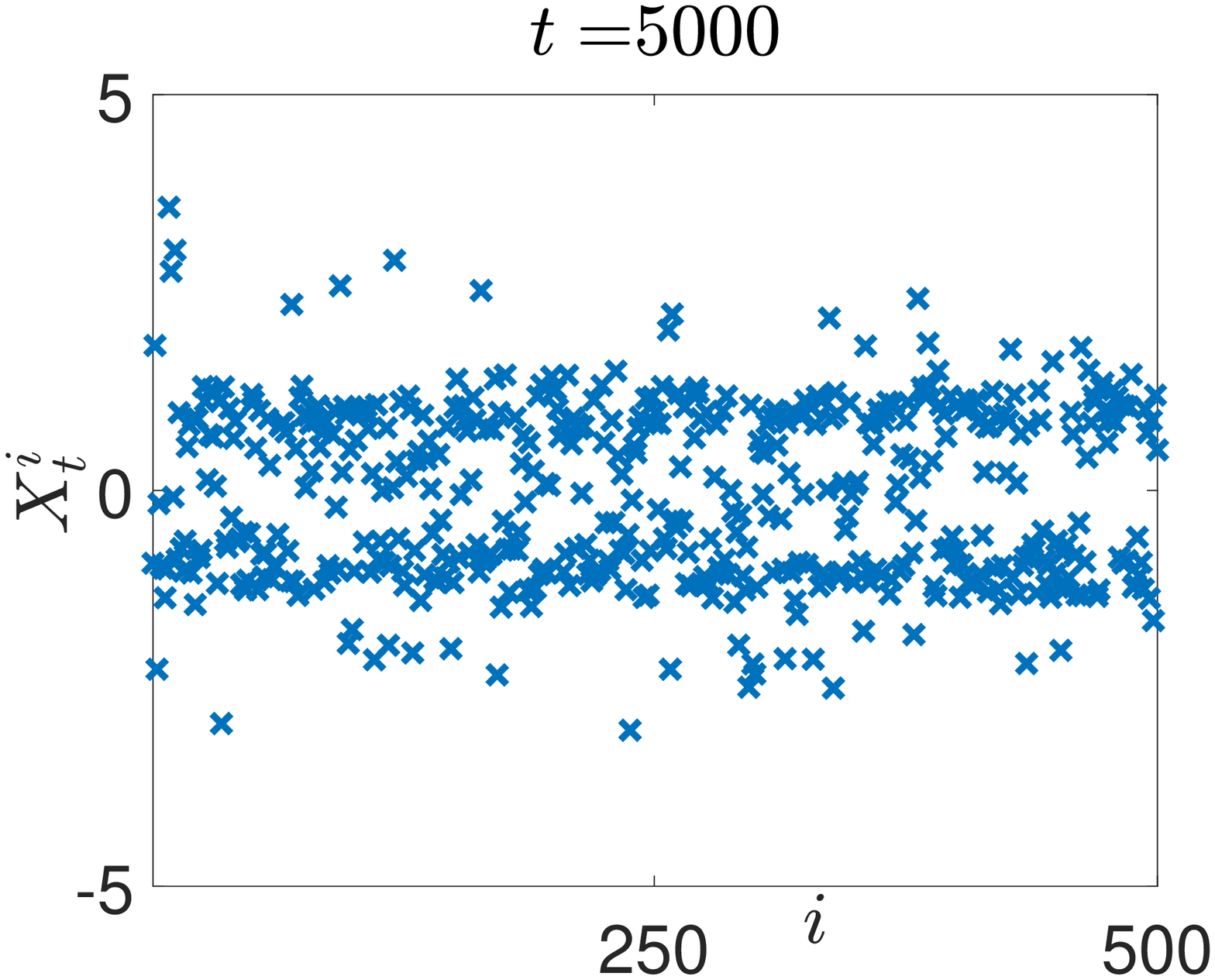}
\caption{Position of $N=500$ particles  for $V^{\eps}(x) = \frac{x^4}{4} - \frac{x^2}{2}\left(1- \delta \cos \left(\frac{x}{\eps} \right)\right)$, with $\theta = 0.5$, $\beta \approx 5.6$, $\delta = 1$. Left: Eqn.~\eqref{eq:system_of_sdes_in_1D} with $\epsilon = 0.1$. Right: homogenized SDEs~\eqref{eq:system_of_homogenized_sdes}. }
	 \label{fig:bistable_mult_position}
 \end{figure}
Here we can observe the noncommutativity of the limits: the particles evolve towards different steady states, which shows the effect of the fluctuations on the critical temperature $\beta_C$ at which phase transitions occur. 
This will be confirmed below when we present the mean value of the solution. 

We present in Figure~\ref{fig:bistable_mult_histogram} snapshots of the histogram of the $N=500$ particles at $t=0, \, t=100$ and $t=5,000$. 
Again, we observe that the particles converge to different equilibria, the homogenized system converging to a mean zero distribution with peaks at $1$ and $-1$, while for positive values of the parameter $\epsilon$ the system converges to a distribution with $\bar{X}_t=-1$. 
Similarly to the previous case, we superpose the corresponding invariant measure, rescaled for comparison, for this parameter regime on the $t=5,000$  snapshot, 
and again we observe that the empirical density of the system of interacting diffusions converges to the steady state solution computed by solving the stationary McKean-Vlasov equation, which is also obtained by time-evolution of the Fokker-Planck equation (see Figures~\ref{fig:convex_add_FP} and~\ref{fig:bistable_mult_FP} in Section~\ref{sec:evolution}).
\begin{figure}[h!]
\hskip-0.75cm
	\includegraphics[width=0.550\linewidth]{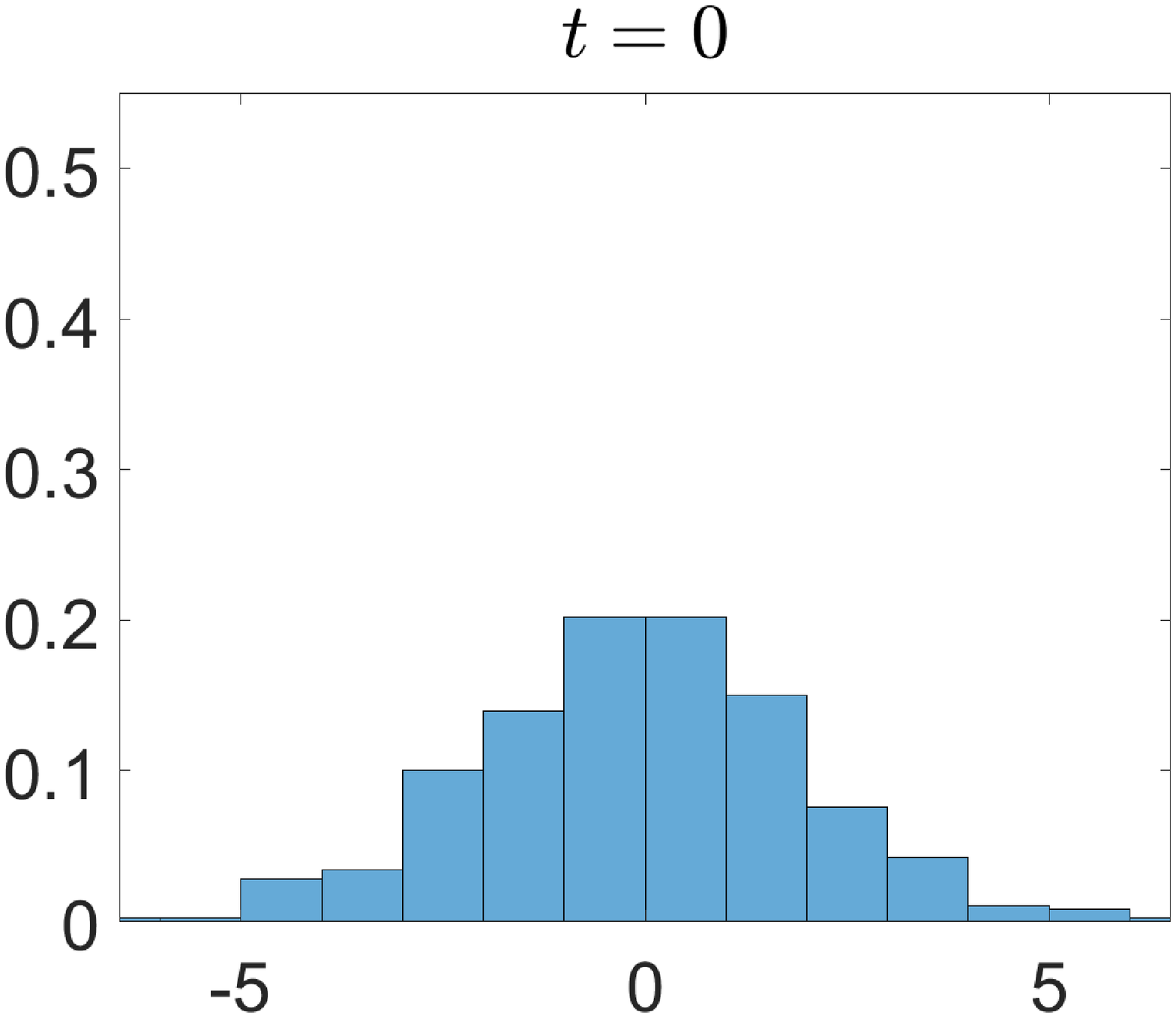}
	\includegraphics[width=0.550\linewidth]{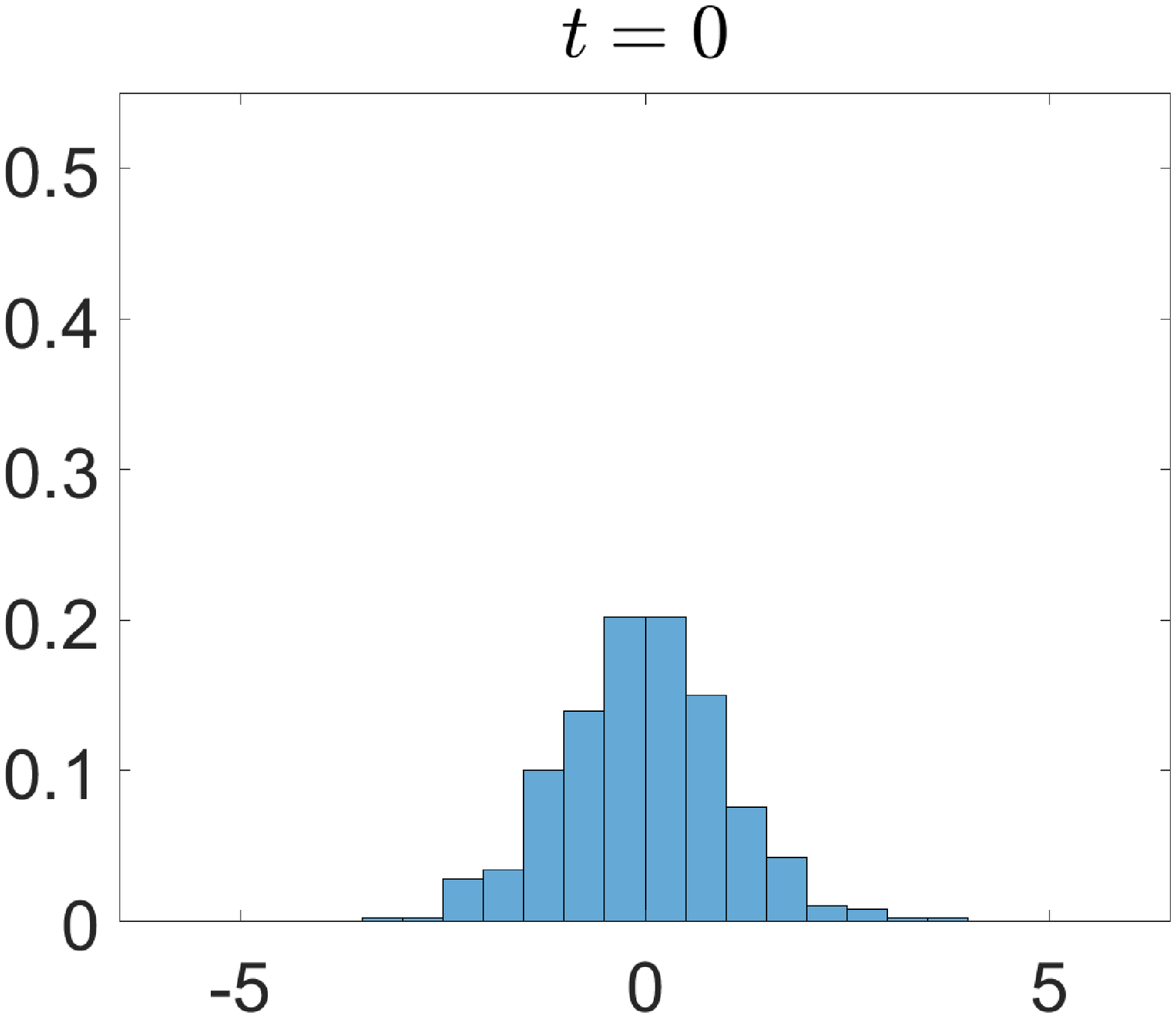}

\hskip-0.75cm
	\includegraphics[width=0.550\linewidth]{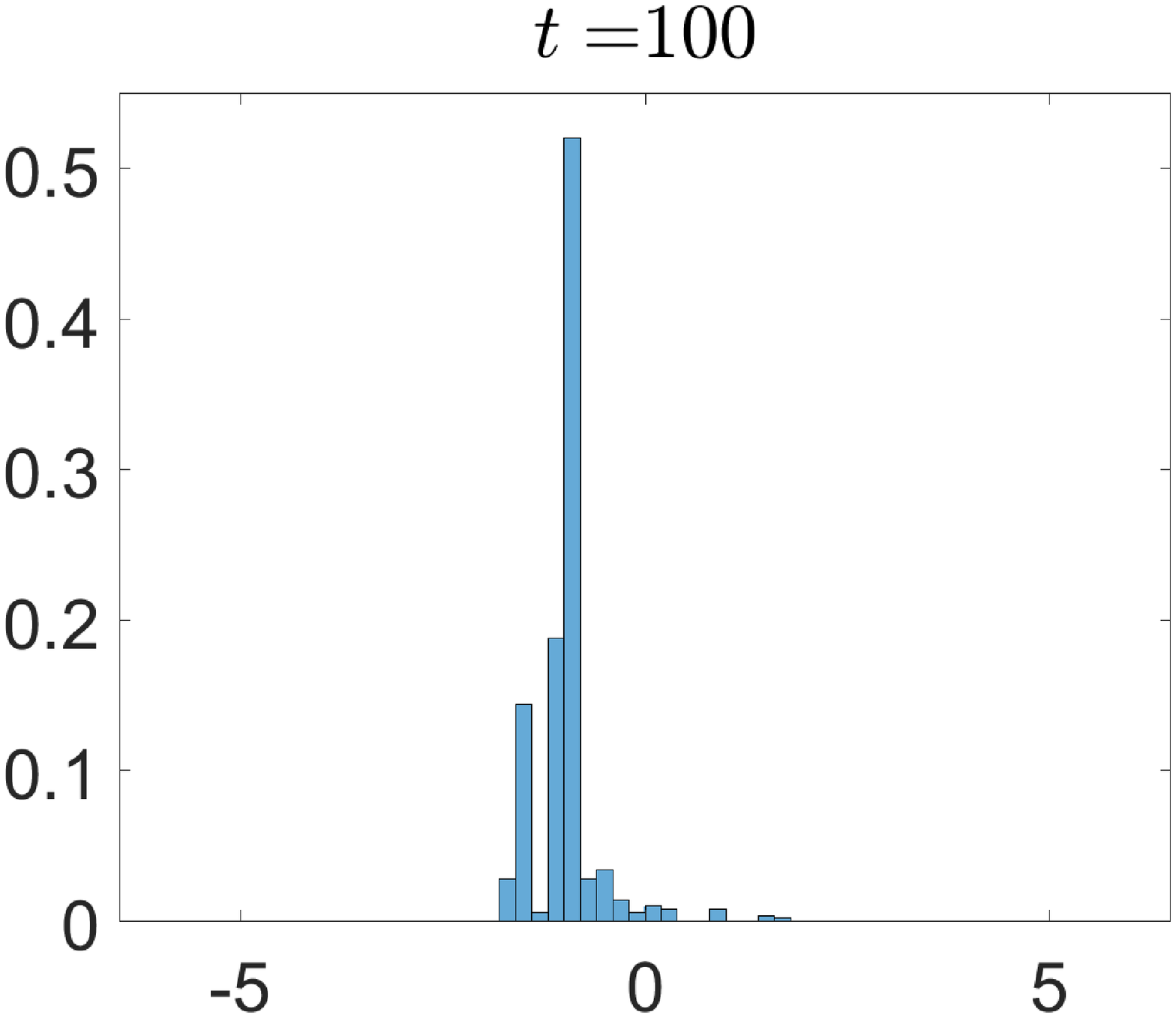}
	\includegraphics[width=0.550\linewidth]{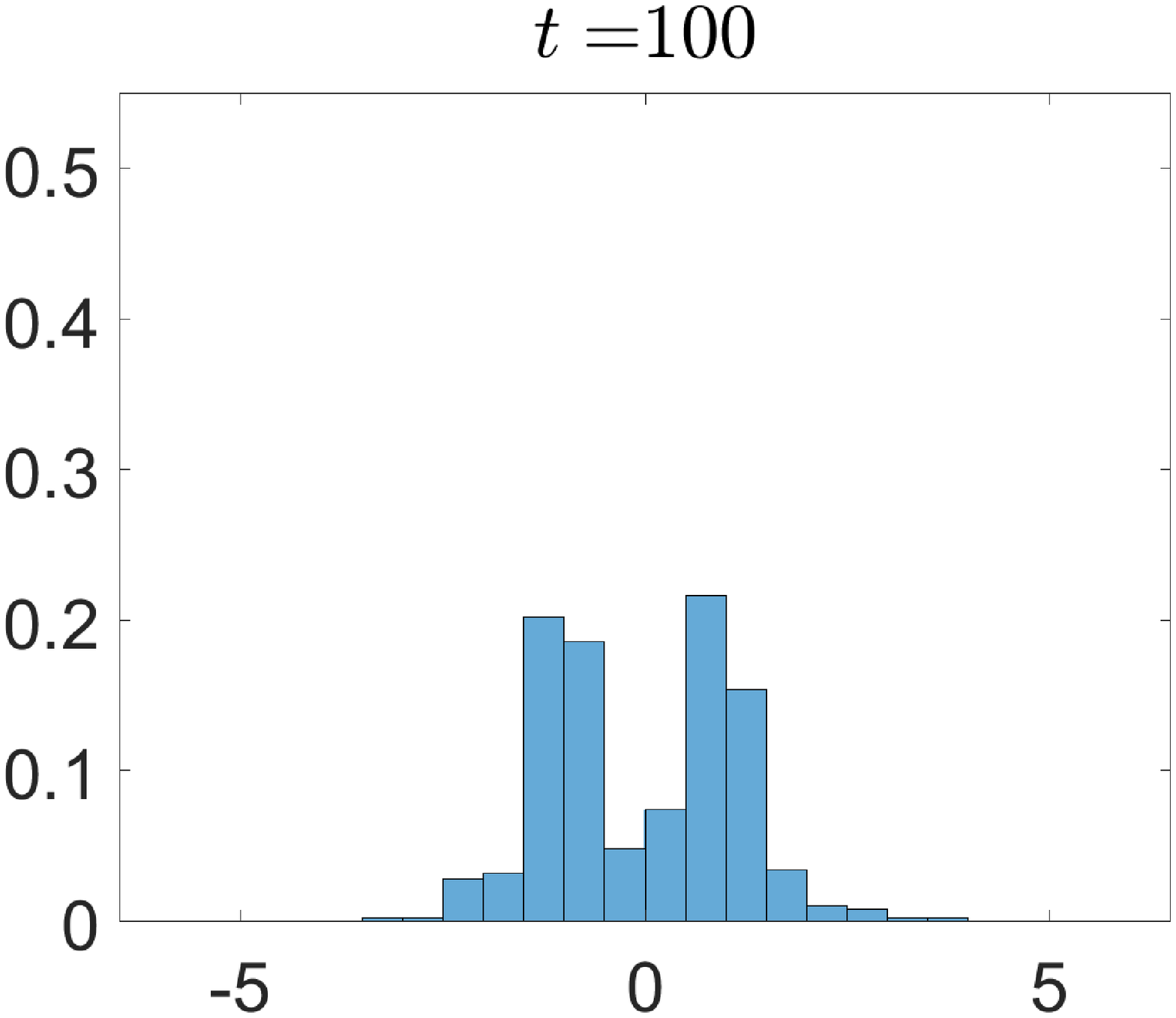}

\hskip-0.75cm
	\includegraphics[width=0.550\linewidth]{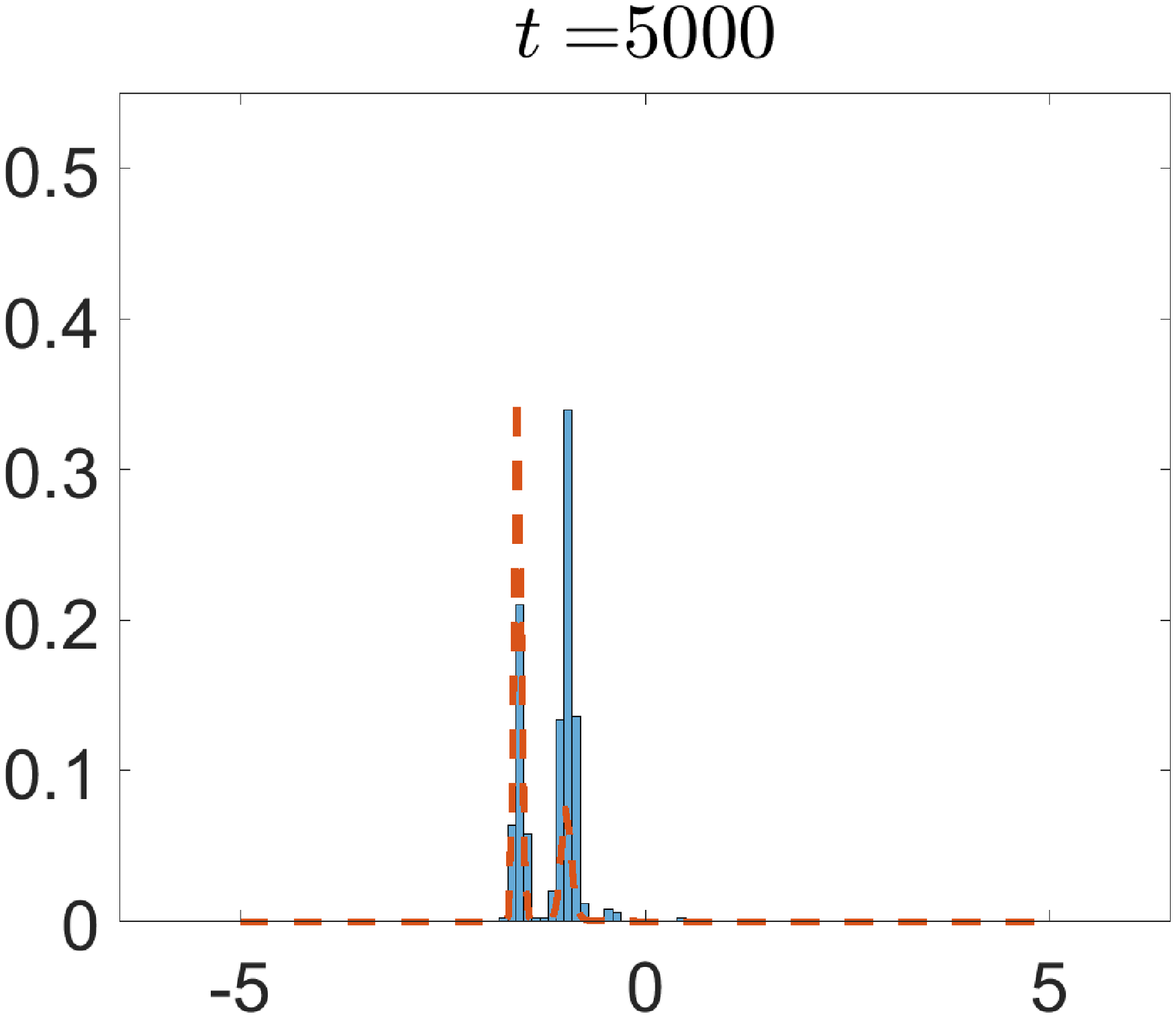}
	\includegraphics[width=0.550\linewidth]{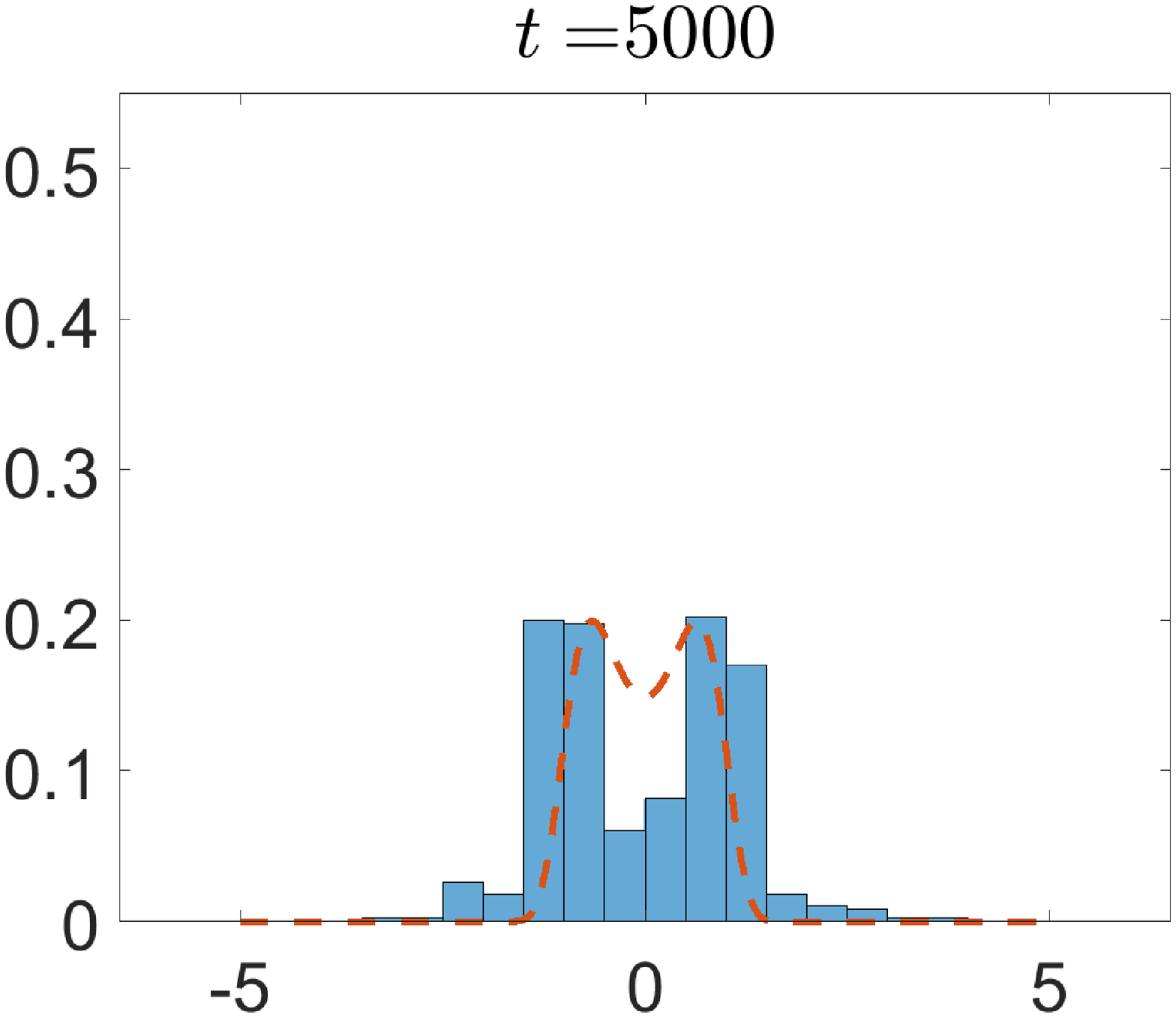}
	\caption{Histogram of $N=500$ particles  for $V^{\eps}(x) = \frac{x^4}{4} - \frac{x^2}{2}\left(1- \delta \cos \left(\frac{x}{\eps} \right)\right)$, with $\theta = 0.5$, $\beta \approx 5.6$, $\delta = 1$. Left: Eqn.~\eqref{eq:system_of_sdes_in_1D} with $\epsilon = 0.1$. Right: homogenized SDEs~\eqref{eq:system_of_homogenized_sdes}. }
	 \label{fig:bistable_mult_histogram}
 \end{figure}

Finally, we plot in Figure~\ref{fig:bistable_mult_mean} the average $\bar{X}_t^N$ of the $N=500$ particles for the case of a bistable large-scale potential with nonseparable fluctuations. We observe here that the critical temperature for the homogenized dynamics is different than that for the full dynamics. In particular,  the phase transition occurs for $\beta\approx 10.4 > 5.6$ for the homogenized problem, while for finite values of $\epsilon$ there already exist several branches at this value of $\beta$.

\begin{figure}[h!]
\hskip-0.75cm
	\includegraphics[width=0.550\linewidth]{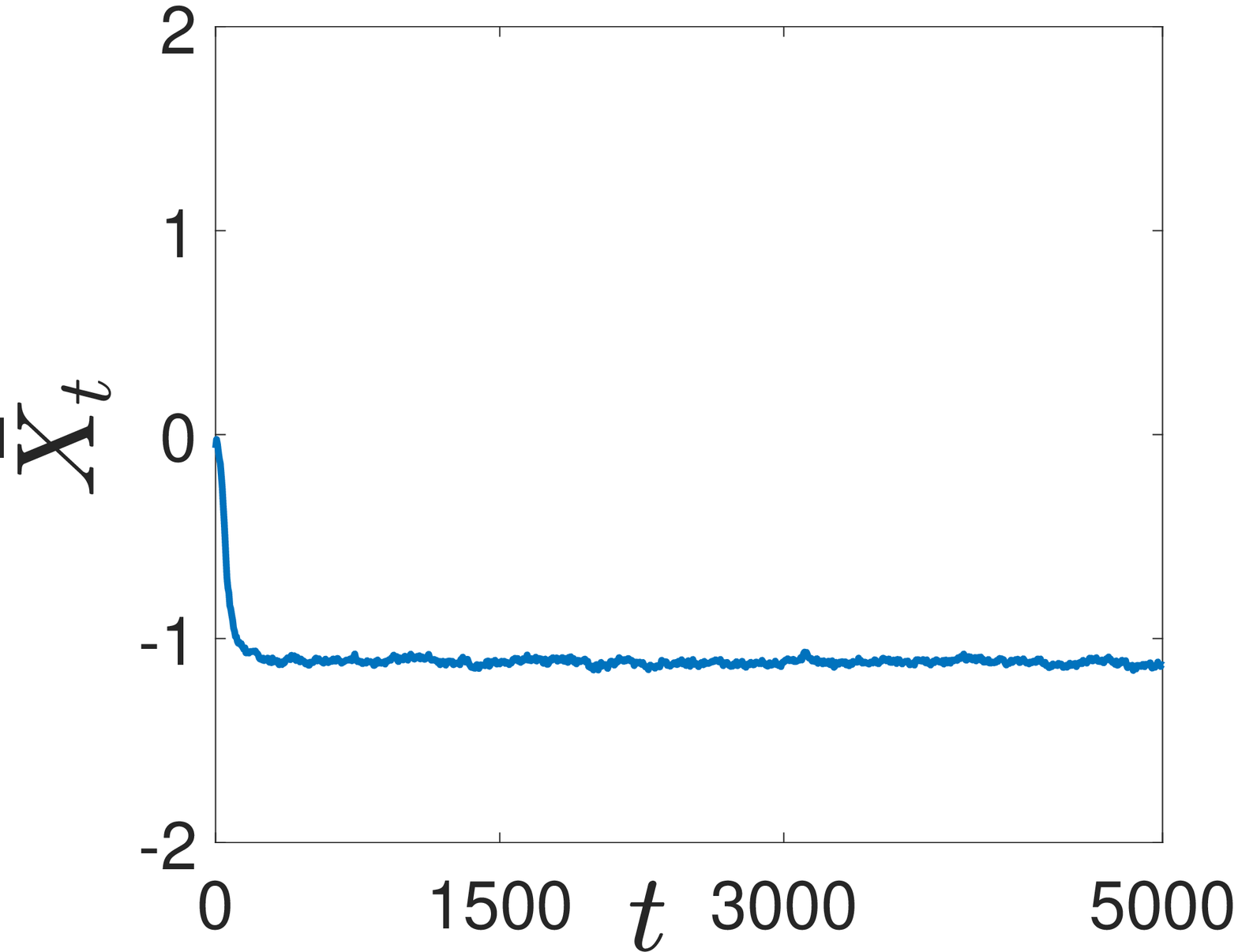}
	\includegraphics[width=0.550\linewidth]{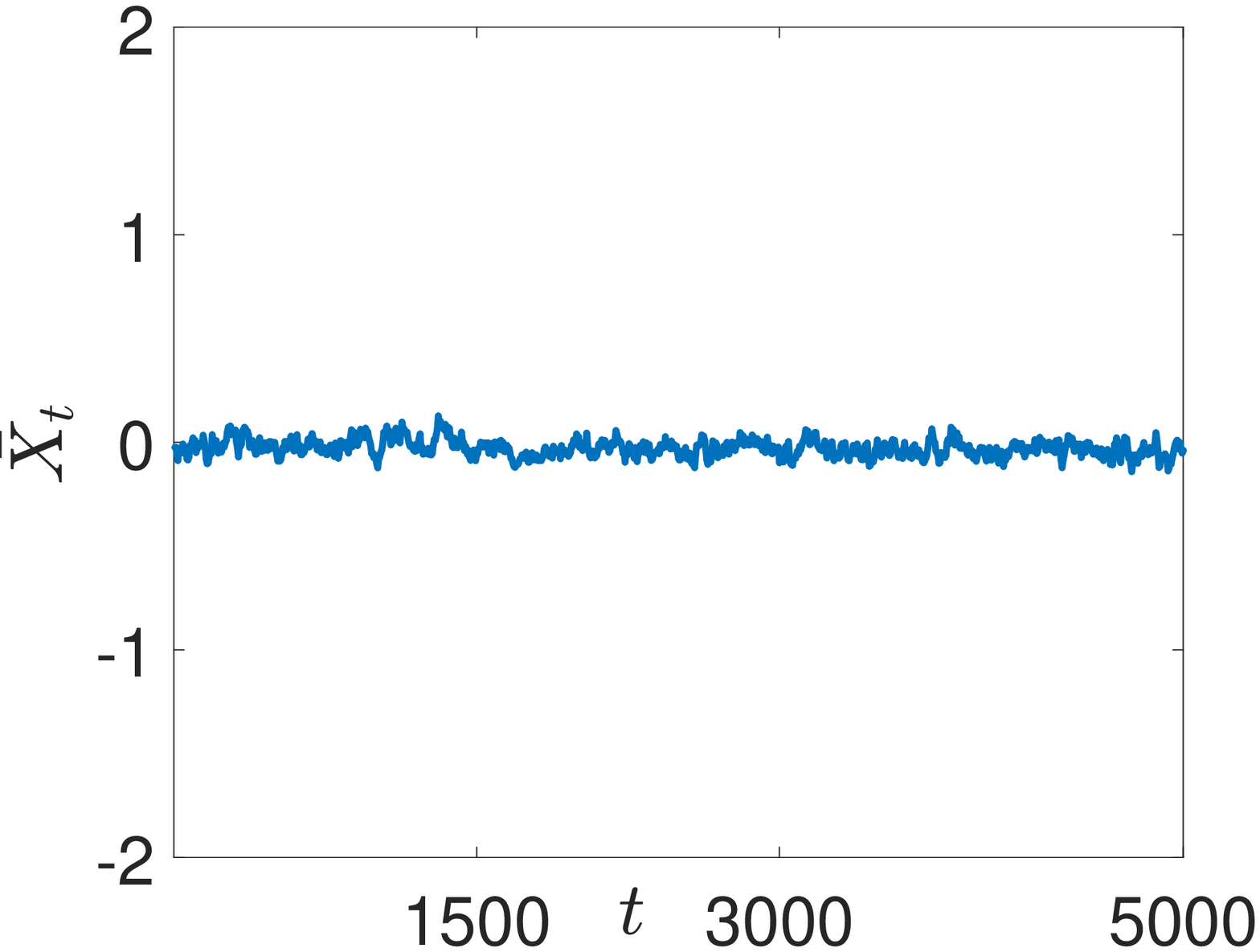}
	\caption{Time evolution of the average $\bar{X}_t^N = \frac{1}{N}\sum_{i=1}^N X_t^i$ of $N=500$ particles  for $V^{\eps}(x) = \frac{x^4}{4} - \frac{x^2}{2}\left(1- \delta \cos \left(\frac{x}{\eps} \right)\right)$, with $\theta = 0.5$, $\beta \approx 5.6$, $\delta = 1$. Left: Eqn.~\eqref{eq:system_of_sdes_in_1D} with $\epsilon = 0.1$. Right: homogenized SDEs~\eqref{eq:system_of_homogenized_sdes}.  }
	 \label{fig:bistable_mult_mean}
 \end{figure}

\subsection{Time dependent McKean-Vlasov evolution}
\label{sec:evolution}
We performed time dependent simulations of the evolution of the nonlinear McKean-Vlasov equation both for the full and for the homogenized dynamics. We present below the results corresponding to the cases presented for the Monte-Carlo simulations.

We recall that, for the case when we take $N\rightarrow\infty$ first while keeping $\epsilon >0$ fixed, the McKean-Vlasov-Fokker-Planck equation that we need to solve is
\begin{equation}
\label{eq:FP_for_multiscale_s4}
\frac{\partial p}{\partial t} = \frac{\partial}{\partial x}\left(\beta^{-1}\frac{\partial p}{\partial x} + \partial_x V^\epsilon(x) p + \theta \left(x - \int x p(x,t) \, dx \right) p\right),
\end{equation}
whereas for the case when we first homogenize the dynamics and then pass to the mean field limit the McKean-Vlasov equation becomes
\begin{equation}
\label{eq:FP_for_homogenized_s4}
\frac{\partial p}{\partial t} =\frac{\partial}{\partial x}\left[ \beta^{-1}\frac{\partial\left(\M(x) p\right)}{\partial x} + \M(x)\left( V'_0(x) +\psi'(x) + \theta \left(x - \int x p(x,t) \, dx \right) \right)p  + \beta^{-1}\frac{\partial \M(x)}{\partial x}p\right],
\end{equation}
with $\psi(x)$ and $\M(x)$ given by~\eqref{eq:small_psi} and~\eqref{eq:Diffusion_coeff}, respectively.

To solve the the McKean-Vlasov evolution PDE we used the positivity preserving, entropy decreasing finite volume scheme from~\cite{Carrillo2015}. We point out that this scheme solves the equations using no-flux boundary conditions. We use these boundary conditions and a sufficiently large domain. We used the same initial conditions for the time-dependent Fokker-Planck simulations as the ones used for the Monte-Carlo simulations, i.e., the initial condition was the PDF for a normal distribution with mean zero and variance 4. However, for the bistable large-scale potential with nonseparable fluctuations in the finite but positive $\epsilon$ case -- see left panel on Figure~\ref{fig:bistable_mult_FP} --  we needed to use a different initial condition: here we used a normal distribution with mean $-0.1$ and variance $4$. This is likely because the value of $\beta$ we chose here was close to the bifurcation point and the mean-zero solution was still being picked up on the time evolution.

 We present below the results for the case of a convex large-scale potential $V_0^c$ with separable fluctuations -- the same case presented in Figures~\ref{fig:convex_add_position}-~\ref{fig:convex_add_mean}. The parameters used were $\theta = 2$, $\beta = 8$, $\delta = 1$, $\epsilon = 0.1$. 
\begin{figure}[h!]
\hskip-0.75cm
	\includegraphics[width=0.550\linewidth]{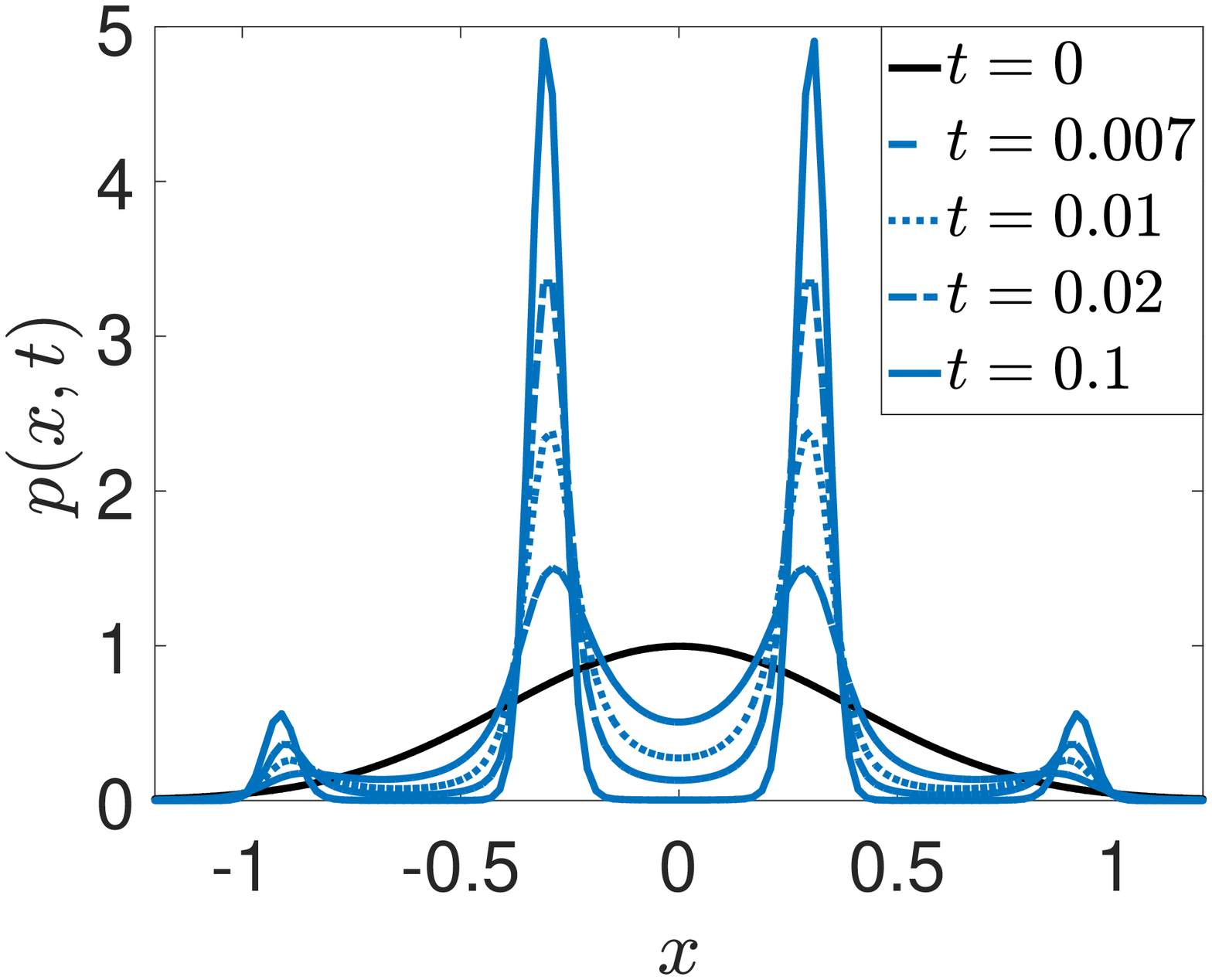}
	\includegraphics[width=0.550\linewidth]{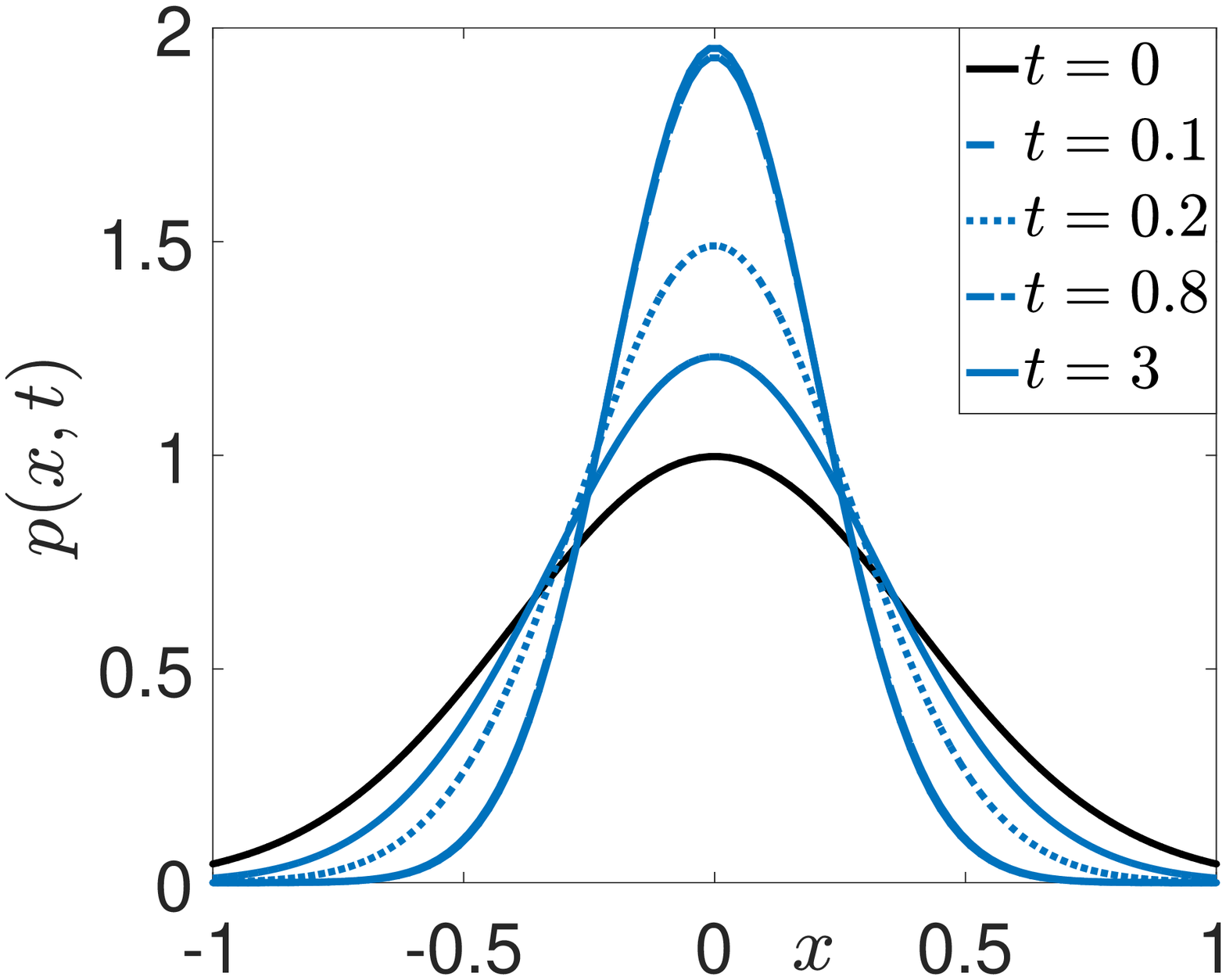}
	\caption{Time evolution of the McKean-Vlasov equation for $V^\eps(x) = \frac{x^2}{2} + \delta\cos\left(\frac{x}{\eps}\right)$with $\theta = 2$, $\beta =8$, $\delta = 1$. Left: \eqref{eq:FP_for_multiscale_s4} with $\epsilon = 0.1$. Right: homogenized equation~\eqref{eq:FP_for_homogenized_s4}. }
	 \label{fig:convex_add_FP}
 \end{figure}

As expected, the results obtained by solving the time dependent McKean-Vlasov equation are in agreement with the results obtained from the Monte Carlo simulations and from solving the stationary McKean-Vlasov equation--i.e. the selfconsitency equation. We note that, similarly to what we observed in the solution of the system of interacting particles, the solution to the McKean-Vlasov equation converges to its steady state faster for the full dynamics than for the homogenized equation. This observation can be quantified by comparing the convergence rates in the weighted $L^2$ or relative entropy exponential estimates, in particular by comparing the constants in the Poincar\'{e} and logarithmic Sobolev inequalities for the full and for the homogenized dynamics. A preliminary study of this--for the Fokker-Planck operator of the finite dimensional dynamics--was presented in~\cite{DuncanPavliotis2016}. 

Finally, we present numerical results for the case of a bistable large-scale potential $V_0^b$ with nonseparable fluctuations -- the same case presented in Figures~\ref{fig:bistable_mult_position}-\ref{fig:bistable_mult_mean}. The parameters used were $\theta = 0.5$, $\beta \approx 5.6$, $\delta = 1$, $\epsilon = 0.1$.

\begin{figure}[h!]
\hskip-0.75cm
	\includegraphics[width=0.550\linewidth]{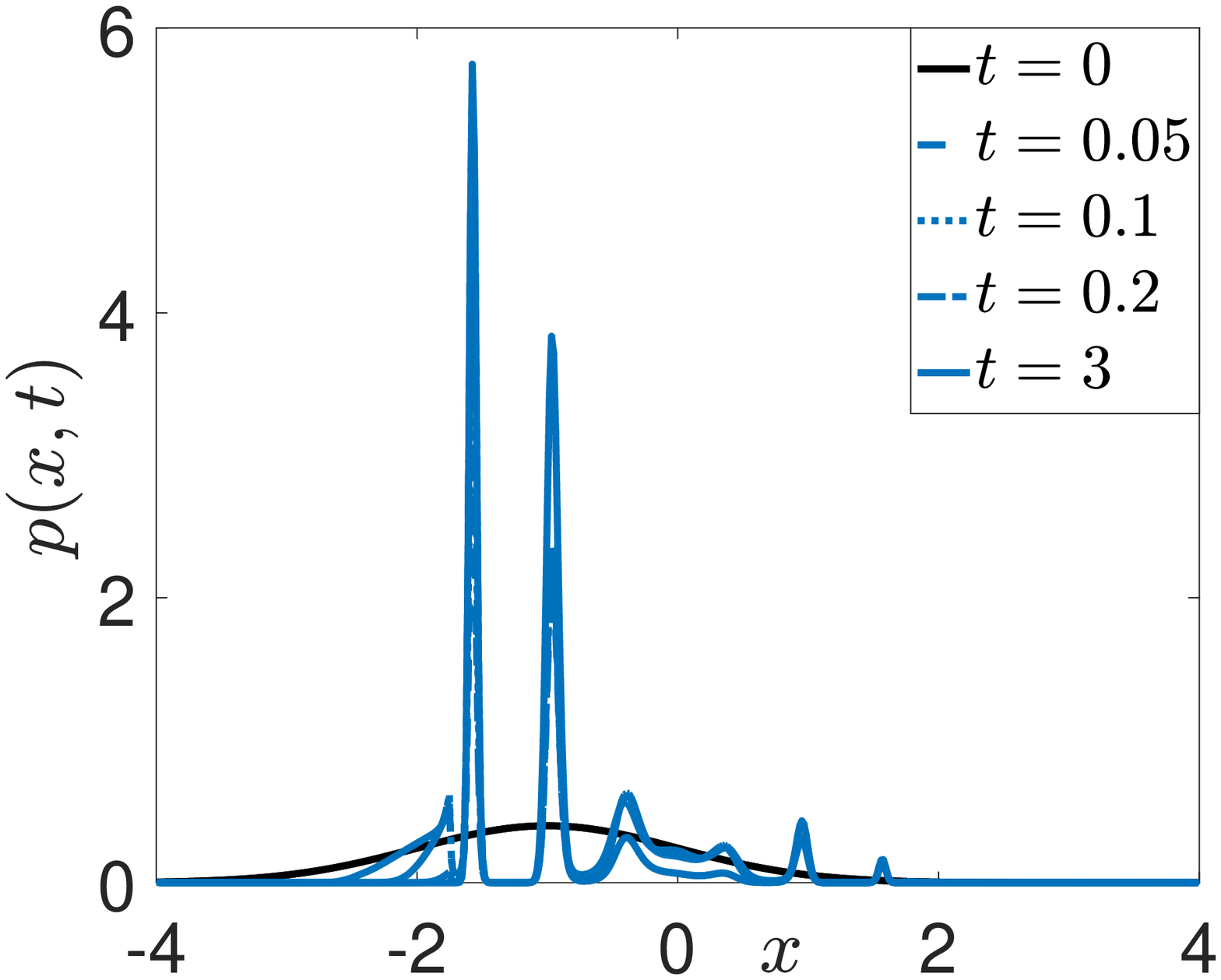}
	\includegraphics[width=0.550\linewidth]{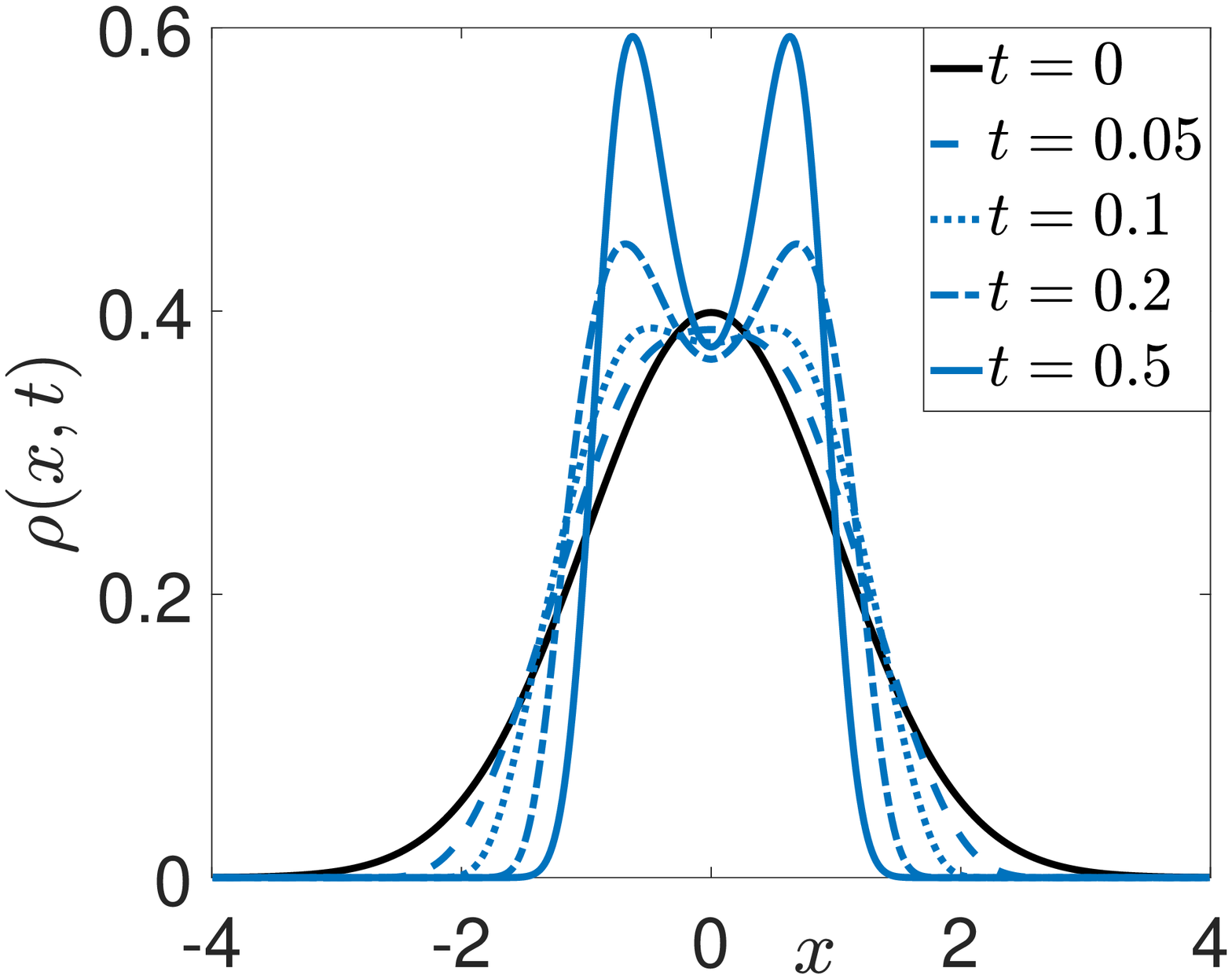}
	\caption{Time evolution of the McKean-Vlasov equation for $V^\eps(x) = \frac{x^4}{4} -\frac{x^2}{2}\left(1- \delta\chi_{[-5,5]}(x)\cos\left(\frac{x}{\eps}\right)\right)$ with $\theta = 0.5$, $\beta \approx 5.6$, $\delta = 1$. Left: \eqref{eq:FP_for_multiscale_s4} with $\epsilon = 0.1$. Right: homogenized equation~\eqref{eq:FP_for_homogenized_s4}. }
	 \label{fig:bistable_mult_FP}
 \end{figure}

As expected, the solutions converge to those computed by solving the stationary McKean-Vlasov equation and are qualitatively similar to those obtained from the particle system simulations, see Figure~\ref{fig:bistable_mult_histogram}. In this case, the solution of the time dependent McKean-Vlasov PDE converges to a steady state slower for the full dynamics, in comparison to the homogenized dynamics. We believe that this is related to the phenomenon of critical slowing down~\cite{shiino1987} when the dynamics is close to a bifurcation, since the inverse temperature $\beta^{-1}$ that we use for the simulations is close to the critical temperature $\beta_C^{-1}$ for the full dynamics.

\section{Conclusions and Further Work}
\label{sec:conclusions}

The combined mean field and homogenization limit for a system of interacting diffusions in a two-scale confining potential was studied in this paper. In particular, the homogenized McKean-Vlasov equation was obtained and studied and the bifurcation diagram for the stationary states was considered.  It was shown, by means of analysis and extensive numerical simulations, that the homogenization and mean field limits, at the level of the bifurcation diagram (i.e. when combined with the long time limit) do not commute for nonseparable two-scale potentials. Furthermore, it was shown that the bifurcation diagrams can be completely different for small but finite $\epsilon$ and for the homogenized McKean-Vlasov equation. 

It should be emphasized, as is clearly explained in~\cite{ChayesPanferov2010}, see in particular the remarks at the end of Sec. 2 of this paper, that the connection between  bifurcations and phase transitions for the McKean-Vlasov dynamics is not entirely straightforward. In particular, in order for a bifurcation point to correspond to a genuine phase transition, it is not sufficient to have the emergence of a new branch of solutions, but these emergent solutions should have a lower free energy. More precisely, it was shown in~\cite{ChayesPanferov2010} for the McKean-Vlasov dynamics on the torus and with a finite-range interaction potential, that the loss of linear stability of the uniform state -- which corresponds to the mean-zero Gibbs state in our setting -- does not imply a second order phase transition. Furthermore, the critical temperature (or, equivalently, critical interaction strength) at which first order phase transitions occur, is lower than the temperature at which the pitchfork bifurcation happens. For the problem that we studied, supercritical pitchfork bifurcations occur which correspond to second order (continuous) phase transitions. On the other hand, when only saddle node bifurcations are present, e.g. in Figure~\ref{fig:conv_mult_bif}, then the mean-zero solution is still the global minimizer of the free energy, see Figure~\ref{fig:conv_mult_FEf}. In particular, no first order phase transitions seem to appear in the McKean-Vlasov model that we studied in this work.

There are many open questions that are not addressed in this work. First, the rigorous multiscale analysis for the McKean-Vlasov equation in locally periodic potentials needs to be carried out. Perhaps more importantly, the rigorous construction of the bifurcation diagram in the presence of infinitely many local minima in the confining potential, thus extending the results presented in e.g.~\cite{Dawson1983, Tamura1984, Tugaut2014} appears to be completely open.   Furthermore, the study of the stability of stationary  solutions to the McKean-Vlasov equation in the presence of a multiscale structure, as well as the analysis of the problem of convergence to equilibrium in this setting is an intriguing question. Finally, the extension of the work presented in this paper to higher dimensions presents additional challenges. We mention, for example, that the corresponding nonlinear diffusion process does not have to be reversible~\cite{LelievreNierPavliotis2013, DuncanLelievrePavliotis2016}.  We believe that the results reported in this work open up a new exciting avenue of research in the study of mean field limits for interacting diffusions in the presence of many local minima, with potentially interesting applications to the study of McKean-Vlasov based mathematical models in the social sciences.

\section{Acknowledgements}
The authors are grateful to S. Kalliadasis, J.A. Carrillo, A.O. Parry and Ch. Kuehn  for useful discussions. They are particularly grateful to J.A. Carrillo for making available the code used for the solution of the McKean-Vlasov-Fokker-Planck dynamics presented in Section~\ref{sec:evolution} and to P. Yatsyshin for help with the arclength continuation methodology. SG is currently supported by the EPSRC under grant No. EP/K034154/1. Part of the work was done while she was supported by the EPSRC under grant No. EP/L020564/1. GP is partially supported by the EPSRC under grants No. EP/P031587/1, EP/L024926/1, EP/L020564/1 and EP/L025159/1. 



\def\cprime{$'$} \def\cprime{$'$} \def\cprime{$'$} \def\cprime{$'$}
  \def\cprime{$'$} \def\cprime{$'$} \def\cprime{$'$} \def\cprime{$'$}
  \def\cprime{$'$} \def\Rom#1{\uppercase\expandafter{\romannumeral
  #1}}\def\u#1{{\accent"15 #1}}\def\Rom#1{\uppercase\expandafter{\romannumeral
  #1}}\def\u#1{{\accent"15 #1}}\def\cprime{$'$} \def\cprime{$'$}
  \def\cprime{$'$} \def\cprime{$'$} \def\cprime{$'$} \def\cprime{$'$}
  \def\cprime{$'$} \def\polhk#1{\setbox0=\hbox{#1}{\ooalign{\hidewidth
  \lower1.5ex\hbox{`}\hidewidth\crcr\unhbox0}}} \def\cprime{$'$}
  \def\cprime{$'$} \def\cprime{$'$}

\end{document}